\RequirePackage{fix-cm}
\documentclass[smallextended]{svjour3}      
\smartqed  

\usepackage{amsmath}
\usepackage{graphicx}
\usepackage{subfigure}
\usepackage{authblk}
\usepackage{amssymb}
\usepackage{algorithm}
\usepackage{algcompatible}

\begin{document}

\title{A subcell-refined  entropy-residual-driven limiting strategy for high-order discontinuous Galerkin methods}

\titlerunning{A subcell entropy limiting strategy for DG} 
\author{Geng Liang         \and
        Rui Wang \and
        Junjie Wang\and
        Feng Wang\and
        Xinlong Feng \and
        Hui Xu
}


\institute{Geng Liang, Rui Wang, Junjie Wang, Feng Wang\at School of Aeronautics and Astronautics, Shanghai Jiao Tong University, Shanghai, 200240, China.  \email{lianggeng@sjtu.edu.cn, wangryan@sjtu.edu.cn, kaga\_mi@sjtu.edu.cn, feng.wang@sjtu.edu.cn}         
\and
Xinlong Feng \at    College of Mathematics and System Sciences, Xinjiang University, Urumqi 830046, P.R. China.   \email{fxlmath@xju.edu.cn}
\and
Hui Xu (Corresponding author)\at School of Aeronautics and Astronautics, Shanghai Jiao Tong University, Shanghai, 200240, China. \email{dr.hxu@sjtu.edu.cn}
}

\date{Received: date / Accepted: date}

\maketitle


\abstract{Fine-grained, subcell-level dissipation control is essential for achieving robust high-order discontinuous Galerkin (DG) simulations of nonlinear hyperbolic systems in under-resolved regimes while preserving accuracy. This paper proposes a subcell-refined entropy-residual-driven limiting strategy for DG on Legendre-Gauss-Lobatto nodes. The limiter introduces only nearest-neighbor pairwise dissipation within each element, with closed-form coefficients that supply the minimal dissipation required to restore the element entropy inequality. The strategy is a diagonal, locally stable approximation of classical entropy-stable methods, and a generalized subcell framework reveals split-form DG and residual-distribution-based entropy correction schemes as particular choices of the limiting coefficients. For the Euler equations, a physically consistent jump operator separately models thermal and shear entropy production while preserving velocity and pressure equilibrium; a subcell refinement of the Zhang-Shu positivity limiter ensures pointwise positivity. Extensive numerical tests confirm that the scheme maintains optimal high-order accuracy, strictly enforces entropy dissipation, and significantly reduces the difficulty of a posteriori positivity-preserving procedures.}

\keywords{Entropy stability \and Subcell limiting \and Discontinuous Galerkin \and Positivity preservation}

\subclass{35L65 \and 65M12 \and 65M60 \and 76M10}

\maketitle

\section{Introduction}

High-order discontinuous Galerkin (DG) methods offer high accuracy, geometric flexibility, and scalability for complex physical simulations, such as compressible turbulence \cite{DG2013,DG2017} and aerothermal loads \cite{DG2017-2}, by combining unstructured mesh adaptability with the near-exponential convergence of tensor-product spectral bases. Despite these advantages, such high-order methods face a critical robustness challenge when applied to practical nonlinear problems involving discontinuities, steep gradients, or under-resolved features, where Gibbs oscillations \cite{DG1997,DG2012} or aliasing errors commonly arise \cite{DG1996,DG2013}. Without adequate stabilization, these issues can lead to instability or convergence to physically inconsistent solutions \cite{DG2006,DG2014-2}. Thus, designing stabilization techniques that enhance nonlinear robustness without compromising inherent accuracy remains a central issue \cite{DG2008}.

In DG methods, entropy stability theory serves as an important means to enhance robustness. Standard DG schemes, which do not enforce a discrete entropy inequality, may converge to entropy-violating weak solutions \cite{ES1994,ES2001,ES2003}. Split-form DG, a widely used entropy-stable DG method, exploits the summation-by-parts (SBP) property of the derivative operators and employs two-point entropy-conservative numerical fluxes to reconstruct the volume terms, thereby ensuring semi-discrete entropy stability within each spectral element \cite{ES2013,ES2016}. Such schemes have demonstrated notable robustness in under-resolved scenarios where aliasing errors are especially severe~\cite{ES2018,ES2021}. A compelling alternative to split-form constructions achieves entropy stability through residual-distribution (RD) methods \cite{RD2018}. Rather than embedding entropy stability into the discrete operators, RD approaches monitor the violation of the element entropy inequality and apply purely entropy-residual-driven dissipation precisely where non-entropic behavior occurs. Existing dissipation forms include entropy-variable-based damping \cite{RD2018,RD2024} and entropy-variable-based artificial viscosity \cite{RD2025}. RD-based entropy-residual-driven schemes decouple the entropy stabilization process into a separate semi-discrete dissipation step, offering excellent extensibility, as demonstrated by their successful application to finite difference discretizations \cite{ES2025}. Consequently, RD-based entropy correction methods provide an alternative and increasingly useful route to entropy stabilization, because they decouple entropy correction from the construction of the baseline discrete derivative operator.

In practical computations, such as gas dynamics simulations, even entropy-stable schemes may produce unphysical negative density or pressure near strong discontinuities, leading to catastrophic failure; positivity preservation is therefore essential. At the element level, Zhang-Shu type limiters enforce positivity by scaling the polynomial toward a positive cell average, but they often degrade accuracy in smooth regions and may fail to suppress spurious oscillations \cite{ES2010}. To address these shortcomings, subcell finite-volume reconstructions were introduced: a compatible low-order discretization is constructed on an embedded subgrid, and the high-order DG solution is blended with the low-order result. Early subcell positivity-preserving limiters use a fixed blending coefficient within each element, applying a uniform level of low-order dissipation \cite{Subcell2014-2,Subcell2016}. More recent monolithic convex limiting (MCL) allows the blending coefficient to vary inside the element based on element entropy and positivity violations, confining dissipation to the smallest necessary set of subcells \cite{Subcell2024,MCL2024}. By handling the complex nonlinear constraints, MCL simultaneously guarantees conservation, entropy stability, and positivity preservation in high-order methods. The key advantage of subcell limiting is that it preserves high-order accuracy in smooth regions while, near shocks and discontinuities, it locally enforces positivity, suppresses oscillations, and maintains sharp profiles by activating dissipation only in troubled subcells. Moreover, in under-resolved simulations, the subcell resolution naturally captures element-scale features that would otherwise be lost, enhancing the predictive capability for complex flows. These developments indicate that subcell-refined limiting has become an important direction for improving the robustness of high-order schemes, particularly in under-resolved flows with shocks or sharp gradients.

In this paper, we propose a subcell-refined entropy-residual-driven limiting strategy, which builds upon the DG scheme with Legendre-Gauss-Lobatto (LGL) nodes. To the authors’ knowledge, the present work is among the first DG entropy-residual limiting strategies that distribute entropy correction through nearest-neighbor subcell pairwise operations within each element, rather than through dense all-node coupling or finite-volume blending. Unlike MCL, which relies on a finite-volume blending approach, our method is formulated within the RD-based entropy-stable framework and achieves entropy stability by introducing only local two-point dissipation into the standard DG scheme. 

In addition, we present a generalized subcell limiting framework that can reformulate arbitrary entropy-stable numerical schemes, including both split-form DG and the original RD-based entropy correction method. From this  perspective, the proposed entropy-residual-driven limiting strategy can be interpreted as a diagonalized approximation of classical entropy-stable methods. We further extend this limiting strategy to the Euler equations by constructing physically compatible jump operators. These operators are designed to separately characterize the distinct entropy generation mechanisms of thermal diffusion and viscous shear, while preserving velocity-pressure equilibrium. Finally, for the Euler equations, we employ a subcell-refined variant of the Zhang-Shu type limiter that operates solely on extremum elimination, ensuring strict pointwise positivity of density and pressure.

The remainder of the paper is organized as follows. We begin in Section~2 by outlining the entropy stability framework for nonlinear conservation laws and the DG discretization for one-dimensional scalar problems. Section~3 develops the subcell-refined entropy-residual-driven limiting strategy for the scalar case and establishes its mathematical properties, including a closed-form expression for the dissipation coefficients. In that section, we also reformulate classical entropy-stable schemes through the generalized subcell limiting perspective, and conduct a comparative analysis at the mathematical formulation level, contrasting them with the proposed entropy-stable strategy. The extension to the Euler equations is carried out in Section~4, where a physically consistent jump operator is introduced and an auxiliary positivity-preserving limiter, a subcell refinement of the classical Zhang-Shu limiter, is incorporated. Section~5 validates the method through a series of numerical experiments, and Section~6 concludes the paper.

\section{Preliminaries}
This section introduces the notation and the discrete entropy condition used throughout the paper. We first recall the entropy stability framework for scalar conservation laws. We then present the nodal DG formulation on LGL points and derive the element entropy residual used to activate the subcell limiting strategy.

\subsection{Entropy stability for nonlinear conservation laws}
Consider a scalar hyperbolic conservation law  
\begin{equation}
    \frac{\partial u}{\partial t} + \frac{\partial f(u)}{\partial x} = 0, \quad x \in \mathbb{R},\;  t > 0,
    \label{conservationlaw1d}
\end{equation}
where \(u=u(x,t)\) is the conserved variable and \(f:\mathbb{R}\to\mathbb{R}\) is a nonlinear flux. For non-smooth solutions, conservation alone does not determine a unique physically admissible weak solution. An additional admissibility criterion is therefore imposed by introducing an entropy pair \((\eta,\psi)\).  

Let \(\eta:\mathbb{R}\to\mathbb{R}\) be a strictly convex entropy function, with \(\eta''(u)>0\), and let \(\psi:\mathbb{R}\to\mathbb{R}\) be the associated entropy flux satisfying
\begin{equation}
    \eta'(u) f'(u) = \psi'(u).
\end{equation}
For smooth solutions of \eqref{conservationlaw1d}, multiplying by the entropy variable \(v(u) = \eta'(u)\) and applying the chain rule yields the entropy conservation law \(\partial \eta(u)/\partial t + \partial \psi(u)/\partial x = 0\). Across discontinuities, the physically admissible solution must satisfy the entropy inequality
\begin{equation}
    \frac{\partial \eta(u)}{\partial t} 
    + \frac{\partial \psi(u)}{\partial x} \leq 0,
    \label{entropycondition}
\end{equation}
which expresses the non-increase of entropy across shocks \cite{ES1971,CL1992}. A weak solution satisfying \eqref{entropycondition}, together with the prescribed initial and boundary data, is called a weak entropy solution and is the admissible solution singled out by the entropy condition \cite{CL2017}.

\subsection{Classical DG method on LGL nodes}
We now consider \eqref{conservationlaw1d} on a bounded interval \(\Omega=[a,b]\), with initial condition \(u(x,0)=u_0(x)\) and appropriate boundary conditions. The domain is partitioned into \(K\) non-overlapping elements,
\begin{equation}
    \Omega = \bigcup_{k=1}^K \Omega^k, 
    \quad 
    \Omega^k = [x_l^k, x_r^k].
\end{equation}
Each physical element is mapped to the reference interval \([-1,1]\) by
\begin{equation}
    x(\xi) = \frac{1-\xi}{2}x_l^k 
           + \frac{1+\xi}{2}x_r^k, 
    \quad 
    J^k = \frac{\Delta x^k}{2},
\end{equation}
where \(J^k\) is the Jacobian of the affine mapping.

On each element, the numerical solution is approximated by a polynomial of degree \(N\),
\begin{equation}
    u^k(\xi,t) = \sum_{i=0}^N u_i^k(t)\ell_i(\xi),
    \quad
    \ell_i(\xi) =
    \prod_{\substack{j=0 \\ j \neq i}}^N
    \frac{\xi-\xi_j}{\xi_i-\xi_j},
\end{equation}
where \(\{\xi_i\}_{i=0}^N\) are the LGL nodes and \(\ell_i\) are the associated Lagrange basis functions. Multiplying \eqref{conservationlaw1d} by a test function \(\phi(\xi)\), integrating over the reference element, and integrating the flux term by parts gives the weak form \cite{DG2008}
\begin{equation}
    \int_{-1}^1 \left( J^k \frac{\partial u^k}{\partial t} \phi - f^k \frac{\partial \phi}{\partial \xi} \right) d\xi + \left[ f^* \phi \right]_{-1}^1 = 0,
\end{equation}
where \(f^*\) is the numerical flux at element interfaces.

Taking \(\phi=\ell_i(\xi)\) and evaluating the integrals with LGL quadrature yields, for each \(i=0,\dots,N\),
\begin{equation}
    J^k\sum_{j=0}^N \frac{\partial u_i^k}{\partial t} \underbrace{\int_{-1}^1 \ell_i \ell_j \, d\xi}_{M_{ji}} 
    - \sum_{j=0}^N f(u_j^k) \underbrace{\int_{-1}^1 \ell_j \frac{\partial \ell_i}{\partial \xi} \, d\xi}_{Q_{ji}} 
    + \left[ f^* \ell_i \right]_{-1}^1 = 0.
\end{equation}
Because the nodal basis satisfies \(\ell_i(\xi_j)=\delta_{ij}\), the LGL mass matrix is diagonal,
\begin{equation}
    M_{ij}=w_i\delta_{ij},
\end{equation}
where \(w_i\) are the LGL quadrature weights. The semi-discrete DG formulation can therefore be written as \cite{CL2017}
\begin{equation}
    J^k \mathbf{M} \frac{d \mathbf{u}^k}{dt} = \mathbf{Q}^T \mathbf{f}^k - \mathbf{E}^T \mathbf{B} \mathbf{f}^*,
    \label{DG}
\end{equation}
with
\begin{equation}
    \begin{split}
    &\mathbf{u}^k = [u_0^k, u_1^k, \dots, u_N^k]^T, \quad \mathbf{f}^k = [f(u_0^k), f(u_1^k), \dots, f(u_N^k)]^T,\\
        &\mathbf{f}^* = [f^*_L,f^*_R]^T= [f^*(u_N^{k-1},u_0^k), f^*(u_N^k, u_0^{k+1})]^T,\\
        &(\mathbf{M}^{-1}\mathbf{Q})_{ij} = \left. \frac{\partial \ell_j}{\partial \xi} \right|_{\xi_i}, \quad
        \mathbf{B} = \begin{bmatrix}-1 & 0 \\ 0 & 1\end{bmatrix}, \quad
        \mathbf{E} = \begin{bmatrix}1 & 0 & \cdots & 0 \\ 0 & \cdots & 0 & 1\end{bmatrix}.
    \end{split}
\end{equation}
The LGL differentiation and quadrature operators satisfy the summation-by-parts property \cite{SBP2014}
\begin{equation}
    \mathbf{Q} + \mathbf{Q}^T = \mathbf{E}^T \mathbf{B} \mathbf{E} = \operatorname{diag}(-1,0,\dots,0,1).
\end{equation}
Using this identity, \eqref{DG} can be equivalently written in strong form as
\begin{equation}
    J^k \mathbf{M}\frac{d\mathbf{u}^k}{dt}=-\mathbf{Q}\mathbf{f}^k-\mathbf{E}^T\mathbf{B}\left(\mathbf{f}^*-\mathbf{E}\mathbf{f}^k\right).
    \label{clDG}
\end{equation}

\subsection{Element entropy residual for classical DG}
We now derive the element entropy residual associated with \eqref{clDG}. Let
\begin{equation}
    \mathbf{v}^k =[v(u_0^k),v(u_1^k),\ldots,v(u_N^k)]^T,\quad
    \boldsymbol{\eta}^k =[\eta(u_0^k),\eta(u_1^k),\ldots,\eta(u_N^k)]^T.
\end{equation}
Multiplying \eqref{clDG} from the left by \((\mathbf{v}^k)^T\) gives
\begin{equation}
    J^k \mathbf{1}^T \mathbf{M} \frac{d \boldsymbol{\eta}^k}{dt} = J^k (\mathbf{v}^k)^T \mathbf{M} \frac{d \mathbf{u}^k}{dt} = v^k_N f^k_N - v^k_0 f^k_0 - v^k_N f^*_R + v^k_0 f^*_L - (\mathbf{v}^k)^T \mathbf{Q} \mathbf{f}^k.
    \label{entropyres}
\end{equation}

On the other hand, integrating the continuous entropy inequality \eqref{entropycondition} over the same element yields the corresponding semi-discrete entropy requirement
\begin{equation}
    J^k \mathbf{1}^T \mathbf{M} \frac{d \boldsymbol{\eta}^k}{dt} \le v^k_N f^k_N - v^k_0 f^k_0 - v^k_N f^*_R + v^k_0 f^*_L - \psi^k_N + \psi^k_0.
    \label{elemententropystable}
\end{equation}
To satisfy the entropy stability condition, these interface fluxes are decomposed as
\(f^* = f_c^* - f_d^*\), where the entropy-conservative part obeys Tadmor's shuffle condition \cite{ES1987} and the consistency condition
\begin{equation}
    [\![v]\!]f_c^* = [\![vf-\psi]\!],\quad f_c^*(u,u)=f(u),
\end{equation}
and the dissipative part is chosen such that
\begin{equation}
    [\![v]\!]f_d^* \geq 0 .
\end{equation}
This decomposition, together with the shuffle condition, controls entropy production at element interfaces.

Comparing \eqref{entropyres} with \eqref{elemententropystable} yields the element entropy residual
\begin{equation}
    \epsilon^k = \psi^k_N - \psi^k_0 - (\mathbf{v}^k)^T \mathbf{Q} \mathbf{f}^k.
    \label{resentropy}
\end{equation}
The volume term \((\mathbf{v}^k)^T \mathbf{Q} \mathbf{f}^k\) does not necessarily meet the entropy condition. If \(\epsilon^k \le 0\), the volume contribution is entropy dissipative. If \(\epsilon^k > 0\), the volume discretization produces a positive entropy residual and requires additional dissipation.

\section{A fundamental subcell entropy-residual-driven limiting strategy}

In entropy-residual-driven stabilization viewed from a subcell perspective analogous to residual-distribution framework, the core question is: how should dissipation be distributed among the \(N+1\) solution points of a degree-\(N\) element to simultaneously satisfy conservation and element entropy stability? This section answers this question by proposing a fundamental subcell entropy-residual-driven limiting strategy and examining its key mathematical properties. To further clarify the relationship between the proposed strategy and classical approaches, we adopt a generalized subcell limiting perspective to reformulate established entropy-stable schemes. For notational simplicity, the element index superscript \({}^k\) is omitted throughout the remainder of this section.

\subsection{Construction of the subcell limiting strategy}
Let \(\mathbf{r} = \mathbf{M}^{-1}(\mathbf{Q}^T \mathbf{f} - \mathbf{E}^T \mathbf{B} \mathbf{f}^*)\) denote the DG right-hand side of \eqref{DG}. For any pair of adjacent solution points \(i\) and \(i+1\) (\(i=0,\dots,N-1\)), define the pairwise operation
\begin{equation}
\hat{r}_i = r_i + \frac{\alpha_{i+1/2}}{w_i} (u_{i+1} - u_i), \quad
\hat{r}_{i+1} = r_{i+1} - \frac{\alpha_{i+1/2}}{w_{i+1}} (u_{i+1} - u_i),
\label{limiter}
\end{equation}
where \(\alpha_{i+1/2}\) is a positive smoothing coefficient. This pairwise operation modifies only the \(i\)-th and \((i+1)\)-th components of the residual vector \(\mathbf{r}\), leaving all other components unchanged.

A single such operation possesses two essential properties. First, it preserves conservation, as the element-wise weighted sum remains unaltered:
\begin{equation}
    \sum_{i=0}^N w_i \hat{r}_i = \sum_{i=0}^N w_i r_i.
\end{equation}
Second, the operation is entropy-dissipative for the scalar problem because the convexity of the entropy (\(\eta''(u)=\partial v/\partial u \ge 0\)) guarantees \((v_{i+1}-v_i)(u_{i+1}-u_i)\ge 0\), which directly yields
\begin{equation}
    \sum_{i=0}^N w_i v_i \hat{r}_i - \sum_{i=0}^N w_i v_i r_i
= \alpha_{i+1/2} (v_i - v_{i+1})(u_{i+1} - u_i)
\le 0.
\end{equation}

Applying \eqref{limiter} to all neighboring pairs gives
\begin{equation}
     \hat{r}_i = r_i - \frac{\alpha_{i-1/2}}{w_i}(u_i - u_{i-1}) + \frac{\alpha_{i+1/2}}{w_i}(u_{i+1} - u_i), \qquad i = 0,\dots,N,
    \label{limitingstrategy}
\end{equation}
with the boundary convention
\begin{equation}
    \alpha_{-1/2}=\alpha_{N+1/2}=0 .
\end{equation}
Consequently, \(u_{-1}\) and \(u_{N+1}\) are never used.

The remaining task is to determine the coefficients \(\alpha_{i+1/2}\). Define the local entropy-production factor
\begin{equation}
    D_{i+1/2}=(u_{i+1}-u_i)(v_{i+1}-v_i)\ge 0,
    \qquad i=0,\ldots,N-1 .
    \label{D_def}
\end{equation}
The entropy residual after applying \eqref{limitingstrategy} must satisfy
\begin{equation}
    \hat\epsilon=\epsilon -\sum_{i=0}^{N-1}\alpha_{i+1/2}D_{i+1/2}\le 0.
    \label{corrected_entropy_residual}
\end{equation}
If \(\epsilon\le 0\), no correction is needed and we set \(\alpha_{i+1/2}=0\). If \(\epsilon>0\), the coefficients are chosen by the constrained minimization problem
\begin{equation}
    \min \sum_{i=0}^{N-1}\alpha_{i+1/2}^{2},\qquad\text{subject to the constraint}\qquad\sum_{i=0}^{N-1}\alpha_{i+1/2}D_{i+1/2}=\epsilon .
    \label{alpha_optimization}
\end{equation}
Other optimization strategies for entropy-residual-based corrections include greedy algorithms on spectral elements~\cite{Subcell2024}, minimization of the \(L^2\)-norm of the corrective terms~\cite{RD2018}, and, for finite difference discretizations, minimization of a pre-symmetrized diffusion coefficient~\cite{ES2025}. Here, the nearest-neighbor structure yields a closed-form solution.

\begin{definition}[Subcell limiting strategy]
Using the Lagrange multiplier method, the resulting subcell limiting strategy is
\begin{equation}
    \alpha_{i+1/2} = \frac{\epsilon D_{i+1/2}}{\sum_{j=0}^{N-1} D_{j+1/2}^2}\geq 0, \quad \text{if } \epsilon > 0,
    \label{entropylimiting}
\end{equation}
where \(D_{j+1/2}\) quantifies the specific entropy production rate between adjacent nodes. Substituting \eqref{entropylimiting} into the entropy residual formulation confirms that the corrected residual vanishes identically:
\begin{equation}
    \hat{\epsilon} = \epsilon - \sum_{i=0}^{N-1} \alpha_{i+1/2} D_{i+1/2} = 0.
\end{equation}
\end{definition}
These dissipation coefficients are the minimal amount of dissipation required to restore the element entropy condition; they are proportional to the local entropy production estimated from the jump between adjacent nodes.

\begin{remark}
    In practice, the limiter is triggered only when \(\epsilon > 10^{-14}\). This prevents division by zero in the definition of \(\alpha_{i+1/2}\) when all \(D_{j+1/2}\) vanish (i.e., when the solution is essentially constant in a single element) and avoids spurious activation caused by floating-point round-off errors.
\end{remark}

The fully discrete entropy behavior when the limiter is triggered (\(\epsilon>0\)) can be understood by examining an explicit Euler step
\begin{equation}
    \mathbf{u}^{+} = \mathbf{u} + \frac{\Delta t}{J} \mathbf{r}, \qquad
    \hat{\mathbf{u}}^{+} = \mathbf{u} + \frac{\Delta t}{J} \hat{\mathbf{r}},
\end{equation}
where \(\hat{\mathbf{r}}\) is the right-hand side modified by the subcell limiting strategy \eqref{entropylimiting}. To relate the limited and unlimited entropy evolutions, we expand \(\eta\) around each \(\hat{u}_i^+\) using Taylor’s formula and obtain
\begin{equation}
\sum_{i=0}^N w_i\bigl[\eta(\hat{u}_i^+) - \eta(u_i^+)\bigr]
= \frac{\Delta t}{J}\sum_{i=0}^N w_iv(\hat{u}_i^+)(\hat{r}_i - r_i)
  - \frac{\Delta t^2}{2J^2}\sum_{i=0}^N w_i\eta''(\xi_{i})(\hat{r}_i - r_i)^2,
\end{equation}
where \(\xi_{i}\) lies between \(\hat{u}_i^+\) and \(u_i^+\). Due to convexity, the second term on the right-hand side is negative, while the first term can be expanded to second order as
\begin{equation}
    \sum_{i=0}^N w_iv(\hat{u}_i^+)(\hat{r}_i - r_i)=\sum_{i=0}^N w_i\left(v_i + \eta''(\zeta_i)\frac{\Delta t}{J}\hat{r}_i\right)(\hat{r}_i - r_i)
\end{equation}
where \(\zeta_i\) lies between \(u_i\) and \(\hat{u}_i^+\) by the mean value theorem. After invoking the semi-discrete entropy production identity \(\sum_i w_i v_i (\hat{r}_i - r_i) = -\epsilon\) and the explicit form of the pairwise correction, one arrives at
\begin{equation}
\begin{split}
&\sum_{i=0}^N w_iv(\hat{u}_i^+)(\hat{r}_i - r_i)\\&=\epsilon\left(-1+\frac{\Delta t}{J}\frac{\sum_{i=0}^{N-1} (\hat{r}_i\eta''(\zeta_i)-\hat{r}_{i+1}\eta''(\zeta_{i+1}))(u_{i+1} - u_{i})D_{i+1/2}}{\sum_{i=0}^{N-1}(v_{i+1} - v_{i})(u_{i+1} - u_{i}) D_{i+1/2}}\right)
\end{split}
\end{equation}
Requiring the bracket to remain non-positive therefore amounts to the pointwise condition 
\begin{equation}
   \frac{\Delta t}{J}\, \max_i \left| \frac{\hat{r}_i\eta''(\zeta_i) - \hat{r}_{i+1}\eta''(\zeta_{i+1})}{v_{i+1}-v_i} \right| \le 1,
\end{equation}
which is a discrete, element-wise stiffness estimate for the limited right-hand side.  For the quadratic entropy \(\eta(u)=u^2/2\), we have \(v=u\) and \(\eta''=1\), and the condition reduces to the classical Lipschitz estimate \(\max_i | \hat{r}_i - \hat{r}_{i+1}|/|u_{i+1}-u_i|\). For a general strictly convex entropy the curvature \(\eta''\) varies only mildly across the element, so this Lipschitz-like condition is easily satisfied under the usual CFL numbers used in high-order DG. Given that the correction term of the present limiting strategy is driven by the entropy residual \(\epsilon\) and local jumps, the local Lipschitz-type condition that guarantees a reduction of the total entropy at the next time step under an explicit fully discrete update is compatible with the standard CFL restriction in practical computations. This analysis provides a theoretical basis for the robustness observed in practice: the local smoothing inherent in the limiting strategy significantly suppresses oscillations and substantially alleviates the difficulty of a posteriori positivity-preserving procedures when solving complex PDEs.

\subsection{Generality of the subcell limiting framework}

The subcell limiting strategy developed in this work is built upon a fundamental pairwise operation between adjacent solution points. This construction can be generalized to a fully connected form that reveals the flexibility of the underlying approach. Consider a generalized pairwise operation
\begin{equation}
    \hat{r}_i=r_i+\frac{\alpha_{ij}}{w_i}(u_j-u_i),\quad
    \hat{r}_j=r_j+\frac{\alpha_{ji}}{w_j}(u_i-u_j),\quad i\neq j
\end{equation}
with the symmetry condition \(\alpha_{ij} = \alpha_{ji}\).  Applying the generalized pairwise operation to all pairs of solution points gives
\begin{equation}
    \hat{r}_i=r_i+\sum_{j\ne i}\frac{\alpha_{ij}}{w_i}(u_j-u_i).
    \label{general_op}
\end{equation}
This fully connected operation inherits the essential properties of the nearest-neighbor pairwise operation: it exactly preserves the element-wise sum and introduces a directed entropy dissipation.  The generalized operation \eqref{general_op} provides a unified template for constructing entropy-stable discretizations.

The only requirement that the coefficients \(\alpha_{ij}\) must satisfy for entropy stability is that the corrected entropy residual becomes non-positive:
\begin{equation}
    \hat{\epsilon} = \epsilon - \frac{1}{2}\sum_{i\ne j} \alpha_{ij} (v_j - v_i)(u_j - u_i)\leq 0,
\end{equation}
where \(\epsilon\) is the original entropy residual \eqref{resentropy}. This condition does not impose pointwise sign restrictions on individual \(\alpha_{ij}\): a particular coefficient may be positive, zero, or even negative. Symmetry \(\alpha_{ij} = \alpha_{ji}\) is the only structural constraint. Any symmetric construction satisfying \(\hat{\epsilon} \le 0\) yields a semi-discrete entropy-stable numerical scheme, and different schemes correspond to different choices of \(\{\alpha_{ij}\}\). 

One example that illustrates the flexibility of the generalized framework is the well-known split-form entropy-conserving DG scheme~\cite{ES2016}. In this method, the volume term \(\mathbf{M}^{-1}\mathbf{Q}\mathbf{f}\) in the DG formulation~\eqref{clDG} is reconstructed as
\begin{equation}
    \frac{1}{w_i}\sum_{j=0}^N 2Q_{ij} f_c^*(u_i,u_j),
    \label{splitform}
\end{equation}
where \(f_c^*\) is a two-point entropy-conserving numerical flux. We now show that this reconstruction can be expressed as a particular instance of the generalized pairwise operation~\eqref{general_op}. Choose the coefficients
\begin{equation}
    \alpha_{ij} = Q_{ij}\frac{(v_i+v_j)(f_i-f_j) - 2(\psi_i-\psi_j)}{(u_i-u_j)(v_i-v_j)}.
    \label{alpha_split}
\end{equation}
When \(u_i\to u_j\), both the numerator \((v_i+v_j)(f_i-f_j)-2(\psi_i-\psi_j)\) and the denominator \((u_i-u_j)(v_i-v_j)\) scale as \(\mathcal{O}((u_i-u_j)^2)\); consequently, the coefficient \(\alpha_{ij}\) stays bounded, and the associated correction \(\alpha_{ij}(u_j-u_i)\) vanishes as \(\mathcal{O}(u_i-u_j)\), so no singularity arises and the limiter naturally deactivates in nearly constant regions. The symmetry of \(\alpha_{ij}\) follows directly from the summation-by-parts property of the stiffness matrix, \(Q_{ij} + Q_{ji} = 0\) for \(i \neq j\), which yields the identical expression for \(\alpha_{ji}\). Substituting this choice into the generalized operation~\eqref{general_op} and examining the volume term after limiting gives
\begin{equation}
\begin{split}
    &\frac{1}{w_i}\Bigg(\sum_{j=0}^N Q_{ij}f_j + \sum_{j=0}^N \alpha_{ij}(u_j-u_i)\Bigg)
    \\&= \frac{1}{w_i}\sum_{j=0}^N Q_{ij}\Bigg(f_i+f_j+(v_i+v_j)\frac{f_i-f_j}{v_i-v_j} - 2\frac{\psi_i-\psi_j}{v_i-v_j}\Bigg) \\
    &= \frac{1}{w_i}\sum_{j=0}^N 2Q_{ij}\frac{v_if_i-\psi_i-(v_jf_j-\psi_j)}{v_i-v_j} 
    = \frac{1}{w_i}\sum_{j=0}^N 2Q_{ij}f_c^*(u_i,u_j),
\end{split}
\end{equation}
precisely recovering the split-form DG reconstruction~\eqref{splitform}.

As another example, consider the entropy-residual-driven case: when the entropy residual satisfies \(\epsilon \le 0\), set \(\alpha_{ij}=0\); when \(\epsilon > 0\), choose 
\begin{equation}
    \alpha_{ij}=\frac{w_iw_j\epsilon(v_i-v_j)}{\sigma(u_i-u_j)},\quad \sigma=\frac{1}{2}\sum_{i=0}^N\sum_{j=0}^Nw_iw_j(v_i-v_j)^2.
    \label{alpha_RD}
\end{equation}
Convexity of the entropy ensures that the coefficient remains bounded as \(u_i\to u_j\), so no singularity occurs. These coefficients are entropy-stable, since the associated entropy production is
\begin{equation}
    \frac{1}{2}\sum_{i\ne j} \alpha_{ij} (v_j - v_i)(u_j - u_i)=\frac{\epsilon}{2\sigma}\sum_{i=0}^N\sum_{j=0}^Nw_iw_j(v_i-v_j)^2=\epsilon.
\end{equation}
Substituting this choice into the generalized operation~\eqref{general_op} yields
\begin{equation}
    \hat{r}_i=r_i+\sum_{j\ne i}\frac{\epsilon w_j}{\sigma}(v_j-v_i)
    =r_i+\frac{\epsilon }{\sigma}\left(\sum_{j=0}^Nw_jv_j-v_i\sum_{j=0}^Nw_j\right).
\end{equation}
Introducing the weighted mean \(\bar{v} = \frac{1}{2}\sum_i w_i v_i\) and considering that \(\sum_i w_i = 2\), we obtain 
\begin{equation}
\begin{split}
    &\sigma=2\sum_{i=0}^Nw_iv_i^2-4\bar{v}^2=2\left(\sum_{i=0}^Nw_iv_i^2-2\bar{v}\sum_{i=0}^Nw_iv_i+2\bar{v}^2\right)=2\sum_{i=0}^Nw_i(v_i-\bar{v})^2,\\
    &\hat{r}_i =r_i-\frac{2\epsilon }{\sigma}(v_i-\bar{v})=r_i-\frac{\epsilon(v_i-\bar{v})}{\sum_{j=0}^Nw_j(v_j-\bar{v})^2},
\end{split}
\end{equation}
hence, with the choice of coefficients in \eqref{alpha_RD}, the generalized operation~\eqref{general_op} reduces to the original RD-based entropy correction scheme \cite{RD2018}.

\begin{remark}[Local stability]
    Within the generalized limiting framework, every existing entropy-stable numerical scheme can be interpreted as a specific choice of the coefficients \(\alpha_{ij}\). Local stability is equivalent to the condition that all coefficients are non-negative, \(\alpha_{ij} \ge 0\). The original RD-based entropy correction method applies a correction regardless of the sign of the entropy residual. However, an entropy-residual-driven RD-based entropy correction that adds dissipation only when the entropy residual is positive can still guarantee local stability, since in \eqref{alpha_RD} 
    \begin{equation}
        \frac{w_iw_j(v_i-v_j)}{\sigma(u_i-u_j)}\ge 0.
    \end{equation}
    The subcell limiting strategy proposed in this work follows the same principle. In contrast, even if split-form DG is activated only for positive entropy residuals, it is not locally stable. The reason is that in \eqref{alpha_split}, although the denominator \((u_i-u_j)(v_i-v_j)\) is positive, the stiffness matrix entries \(Q_{ij}\) and the numerator \((v_i+v_j)(f_i-f_j)-2(\psi_i-\psi_j)\) are both indefinite; consequently, the coefficients \(\alpha_{ij}\) are not guaranteed to be non-negative. This observation is consistent with known results~\cite{ES2022}.
\end{remark}

Split-form DG and existing RD-based entropy correction methods both employ a dense coupling: the dissipation coefficient \(\alpha_{ij}\) is generally nonzero for every pair \((i,j)\) within the element. For preserving integral entropy properties over the whole element, this dense structure is natural and sufficient. However, the present limiting strategy aims at fine-grained, subcell-level dissipation control based on element entropy residuals. Under this requirement, dense coupling becomes a major obstacle, because enforcing the entropy stability condition \(\hat{\epsilon} \le 0\) while determining \(O(N^2)\) coefficients results in a globally coupled problem that demands full optimization or linear system solves. The proposed strategy avoids this difficulty by restricting the coupling to nearest neighbors only. This diagonalized structure simplifies the enforcement of the entropy stability constraint and improves the local solution quality within each spectral element.

When \(N=1\), the element contains only two solution points, and the limiting operation involves a single coefficient \(\alpha_{1/2}\) (or \(\alpha_{01}\)). Since only two constraints, conservation and \(\hat{\epsilon} \le 0\), must be satisfied, any entropy-stable scheme that actively enforces them necessarily produces the same correction once the limiter is triggered. A direct computation confirms this equivalence:
\begin{equation}
\begin{split}
    \alpha_{1/2}^{\text{subcell limiting}}
    &= \frac{\epsilon}{D_{1/2}}=\frac{\epsilon(v_1-v_0)}{(v_1-v_0)^2(u_1-u_0)}= \alpha_{01}^{\text{RD-based}}\\
    & = \frac{\psi_1-\psi_0-\mathbf{v}^{\mathsf T}\mathbf{Q}\mathbf{f}}{(u_{1}-u_{0})(v_1-v_0)} \\
    &= Q_{01}\frac{(v_1+v_0)(f_0-f_1)-2(\psi_0-\psi_1)}{(u_{1}-u_{0})(v_1-v_0)}
     = \alpha_{01}^{\text{split form}}.
\end{split}
\end{equation}
This equivalence is conditional: it holds only when the limiter is active. When \(\epsilon \le 0\), the present strategy leaves the scheme unchanged, whereas split-form DG still reconstructs the volume term via two-point fluxes; no limiting is required in either case. Together with the consistency established earlier, this confirms that the proposed limiting strategy is a locally stable, diagonal approximation to existing entropy-stable schemes. Unlike these schemes, which rely on dense coupling and require operations such as two-point flux evaluations between all pairs of solution points, the diagonalized strategy substantially reduces computational complexity and remains entirely inactive when \(\epsilon \le 0\), incurring no additional cost.

\section{Extension to the Euler equations}
The Euler equations for an ideal gas in one dimension take the following form
\begin{equation}
\frac{\partial}{\partial t}\begin{pmatrix}
    \rho\\ \rho u\\ E
\end{pmatrix}+\frac{\partial}{\partial x}\begin{pmatrix}
    \rho u\\ \rho u^2+p\\ \rho uH
\end{pmatrix}=0, 
\end{equation}
where the total energy is \(E=p/(\gamma-1)+\rho u^2/2\), the total enthalpy is \(H=(E+p)/\rho\), and \(\gamma\) is the ratio of specific heats.
Define the the mathematical entropy as 
\begin{equation}
    \eta=-\frac{\rho s}{\gamma-1},\quad\text{and the corresponding entropy flux as }\psi=-\frac{\rho u s}{\gamma-1}
\end{equation} 
where \(s=\ln p-\gamma \ln \rho\) is the physical entropy. The entropy variables then follow as
\begin{equation}
\boldsymbol{v}=\frac{\partial\eta}{\partial \boldsymbol{u}}=\begin{pmatrix}
    \frac{\gamma-s}{\gamma-1}-\frac{\rho u^2}{2p},&\frac{\rho u}{p},&-\frac{\rho}{p}
\end{pmatrix}^T.
\end{equation}

When extending the limiting strategy to the Euler equations, the entropy function remains a scalar, so entropy stability still imposes only a single scalar constraint. However, the system now comprises multiple conserved quantities, and consequently the distribution of the entropy dissipation among the individual equations becomes a nontrivial issue. To obtain physically consistent local dissipation that different from a direct use of jumps in conserved or entropy variables, we introduce a generalized jump operator \(\boldsymbol{\delta}_{i+1/2}\) to represent the operation between adjacent nodes \(i\) and \(i+1\) (\(i=0,\dots,N-1\)). The fundamental pairwise subcell operation then takes the form
\begin{equation}
\hat{\boldsymbol{r}}_i=\boldsymbol{r}_i+\frac{\alpha_{i+1/2}}{w_i}\boldsymbol{\delta}_{i+1/2},\quad \hat{\boldsymbol{r}}_{i+1}=\boldsymbol{r}_{i+1}-\frac{\alpha_{i+1/2}}{w_{i+1}}\boldsymbol{\delta}_{i+1/2}
\end{equation}
where \(\boldsymbol{\delta}_{i+1/2}=[\delta^{\rho},\delta^{\rho u},\delta^{E}]^T\) satisfies the condition 
\begin{equation}
    (\boldsymbol{v}_{i+1}-\boldsymbol{v}_{i})^T\boldsymbol{\delta}_{i+1/2}\geq 0.
    \label{entropydisp}
\end{equation}

For conciseness, we introduce the following notation:
\begin{itemize}
    \item Difference between adjacent nodal variables \([\![a]\!]=a_{i+1}-a_{i}\);
    \item Arithmetic average of adjacent nodal variables \(\bar{a}=(a_{i+1}+a_{i})/2\) ;
    \item Logarithmic mean of adjacent nodal variables \({a}^{\ln}=[\![a]\!]/[\![\ln a]\!]=(a_{i+1}-a_{i})/(\ln a_{i+1}-\ln a_i)\). When \(a_i\) and \(a_{i+1}\) are very close, the logarithmic mean approaches the arithmetic mean: \({a}^{\ln}\to\bar{a}\).
\end{itemize}

\subsection{Construction of a physically consistent subcell limiting strategy}
The entropy condition \eqref{entropydisp} imposes only a scalar constraint on the jump operator, which is easily satisfied by common choices such as \(\boldsymbol{\delta}_{i+1/2}=[\![\boldsymbol{v}]\!]\) or \(\boldsymbol{\delta}_{i+1/2}=[\![\boldsymbol{u}]\!]\). For the Euler equations, however, we demand more than mere entropy stability. First, the local entropy generation should be physically consistent: in a real fluid, entropy is produced exclusively by thermal diffusion, driven by gradients of \(p/\rho\), and by viscous shear, driven by gradients of \(u\). Second, when velocity and pressure are constant, the Euler system reduces to a linear advection of density, and the numerical scheme must preserve this equilibrium state without introducing spurious fluctuations, a property referred to as velocity and pressure equilibrium preservation (VEP/PEP).

We therefore construct a jump operator that satisfies physical consistency and strictly maintains VEP and PEP, in addition to the entropy condition. Expanding the reference entropy production \([\![\boldsymbol{v}]\!]^{\mathsf T}\boldsymbol{\delta}_{i+1/2}\) with the identity \([\![ab]\!]=\bar{b}[\![a]\!]+\bar{a}[\![b]\!]\) gives
\begin{equation}
\begin{split}
    [\![\boldsymbol{v}]\!]^{\mathsf T}\boldsymbol{\delta}_{i+1/2}
    &= \Bigl(\frac{\gamma[\![\ln\rho]\!]-[\![\ln p]\!]}{\gamma-1}
       -\frac{1}{2}\Bigl[\!\!\Bigl[\frac{\rho}{p}u^{2}\Bigr]\!\!\Bigr]\Bigr) \delta^{\rho}
       +[\![\frac{\rho}{p}u]\!]\delta^{\rho u}
       -[\![\frac{\rho}{p}]\!]\delta^{E} \\[4pt]
    &=[\![\ln\rho]\!]\delta^{\rho}
      +[\![u]\!]\overline{\rho/p}\,\bigl(\delta^{\rho u}-\bar{u}\delta^{\rho}\bigr)
      \\&\quad+\Bigl[\!\!\Bigl[\frac{\rho}{p}\Bigr]\!\!\Bigr]
       \Bigl(\Bigl(\frac{1/(\gamma-1)}{(\rho/p)^{\ln}}-\frac{\overline{u^{2}}}{2}\Bigr)\delta^{\rho}
       +\bar{u}\delta^{\rho u}-\delta^{E}\Bigr).
\end{split}
\label{dispa}
\end{equation}
The term \([\![\ln\rho]\!]\delta^{\rho}\) would produce spurious entropy from density jumps alone, violating physical consistency. Eliminating it while retaining the shear and thermal contributions leads to the following choice of components.

\begin{definition}[physically consistent local dissipation]
In the Euler equations, the following modified local entropy-dissipative difference is introduced
\begin{equation}
\begin{split}
   & \boldsymbol{\delta}_{i+1/2}^{\rho}=[\![\rho]\!],\quad
    \boldsymbol{\delta}_{i+1/2}^{\rho u}=[\![\rho u]\!],\\&\boldsymbol{\delta}_{i+1/2}^{E}=\frac{\overline{\rho/p}}{(\gamma-1)(\rho/p)^{\ln}}[\![p]\!]+\frac{u_{i}u_{i+1}}{2}[\![\rho]\!]+\bar{\rho}\bar{u}[\![u]\!].
\end{split}
    \label{Eulerdu}
\end{equation}
\end{definition}

\begin{theorem}
The reference entropy generation of \eqref{Eulerdu} is non-negative and, apart from a density-jump component, decomposes naturally into thermal and shear contributions that are physically consistent.
\end{theorem}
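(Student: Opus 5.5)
The plan is to substitute the components \eqref{Eulerdu} directly into the expanded form \eqref{dispa} of the reference entropy production $[\![\boldsymbol{v}]\!]^{\mathsf T}\boldsymbol{\delta}_{i+1/2}$. That identity already splits it into three groups: a density term $[\![\ln\rho]\!]\delta^{\rho}$, a shear group multiplied by $[\![u]\!]$, and a thermal group multiplied by $[\![\rho/p]\!]$. I would then simplify each group separately and show it is a manifestly non-negative quadratic form in a single jump. Throughout I use only the discrete product rule $[\![ab]\!]=\bar a[\![b]\!]+\bar b[\![a]\!]$ and the definition of the logarithmic mean.

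\emph{Density and shear groups.} With $\delta^{\rho}=[\![\rho]\!]$, the density term is $[\![\ln\rho]\!][\![\rho]\!]\ge 0$, because $\ln$ is monotone. For the shear group, the product rule gives $\delta^{\rho u}-\bar u\delta^{\rho}=[\![\rho u]\!]-\bar u[\![\rho]\!]=\bar\rho[\![u]\!]$. This group therefore becomes $\bar\rho\,\overline{\rho/p}\,[\![u]\!]^2\ge0$ for positive density and pressure. It depends only on velocity jumps, which is the shear (viscous) mechanism.

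\emph{Thermal group.} This is the step I expect to require the most care, since the choice of $\delta^E$ must cancel all the kinetic-energy terms exactly.

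1. Expand $\bar u\,\delta^{\rho u}=\bar\rho\bar u[\![u]\!]+\bar u^{2}[\![\rho]\!]$.
2. Use the elementary identity $\bar u^2-\overline{u^2}/2=u_iu_{i+1}/2$.
3. The coefficient of $[\![\rho]\!]$ from the kinetic part is then exactly $u_iu_{i+1}/2$, and the $[\![u]\!]$ part is $\bar\rho\bar u[\![u]\!]$. Both are cancelled by the last two terms of $\delta^{E}$ in \eqref{Eulerdu}.
4. What remains is
\[
\frac{[\![\rho]\!]-\overline{\rho/p}\,[\![p]\!]}{(\gamma-1)(\rho/p)^{\ln}} .
\]
5. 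Write $\beta=\rho/p$, so $\rho=\beta p$, and apply the product rule: $[\![\rho]\!]=\bar\beta[\![p]\!]+\bar p[\![\beta]\!]$. The numerator collapses to $\bar p[\![\beta]\!]$.
6. The full thermal group is therefore $\bar p\,[\![\beta]\!]^2/\bigl((\gamma-1)\beta^{\ln}\bigr)\ge0$, since $\gamma>1$ and $\bar p,\beta^{\ln}>0$.

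This contribution is driven solely by jumps of $\rho/p$, equivalently of the temperature $p/\rho$. That is the thermal-diffusion mechanism.

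Collecting the three pieces gives
\[
[\![\boldsymbol v]\!]^{\mathsf T}\boldsymbol\delta_{i+1/2}=[\![\ln\rho]\!][\![\rho]\!]+\bar\rho\,\overline{\rho/p}\,[\![u]\!]^2+\frac{\bar p}{(\gamma-1)(\rho/p)^{\ln}}\Bigl[\!\!\Bigl[\frac{\rho}{p}\Bigr]\!\!\Bigr]^2\ge0 .
\]
This is exactly the claimed structure: a density-jump component, a shear component in $[\![u]\!]$, and a thermal component in $[\![p/\rho]\!]$, each non-negative. Hence the condition \eqref{entropydisp} holds, and the scalar limiter of Section~3 applies with $D_{i+1/2}$ replaced by this quantity.

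The only delicate point besides the algebra is the degenerate case $\rho_i/p_i=\rho_{i+1}/p_{i+1}$. There I would note that $(\rho/p)^{\ln}\to\overline{\rho/p}$, so all expressions stay well defined, and the thermal term simply vanishes.
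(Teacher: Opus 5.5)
Your proposal is correct and follows the paper's proof essentially step for step. You substitute \eqref{Eulerdu} into \eqref{dispa} and reduce the shear combination to $\bar\rho[\![u]\!]$. You then cancel the kinetic terms in the thermal bracket via $\bar u^2-\overline{u^2}/2=u_iu_{i+1}/2$ and the product rule $[\![\rho]\!]=\overline{\rho/p}\,[\![p]\!]+\bar p\,[\![\rho/p]\!]$; the paper leaves this last step implicit. Your thermal term $\bar p\,[\![\rho/p]\!]^2/\bigl((\gamma-1)(\rho/p)^{\ln}\bigr)$ equals the paper's $\frac{\bar p}{\gamma-1}[\![\rho/p]\!][\![\ln(\rho/p)]\!]$ by the definition of the logarithmic mean, and your explicit remarks on the positivity of $\rho,p$ and the degenerate case $[\![\rho/p]\!]=0$ are small welcome additions.
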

\begin{proof}
    Substituting \eqref{Eulerdu} into the expansion \eqref{dispa} yields
    \begin{equation}
        \delta^{\rho u}-\bar{u}\delta^{\rho}=[\![\rho u]\!]- \bar{u}[\![\rho]\!]=\bar{\rho}[\![u]\!],
    \end{equation}
    and, using \(u_{i}u_{i+1}/2=\bar{u}^2-\overline{u^2}/2\),
    \begin{equation}
        \left(\frac{1/(\gamma-1)}{(\rho/p)^{\ln}}-\frac{\overline{u^{2}}}{2}\right)\delta^{\rho}
        +\bar{u}\delta^{\rho u}-\delta^{E}
        =\frac{\bar{p}[\![\rho/p]\!]}{(\gamma-1)(\rho/p)^{\ln}}.
    \end{equation}
    The final reference entropy generation is therefore
    \begin{equation}
         [\![\boldsymbol{v}]\!]^{T}\boldsymbol{\delta}_{i+1/2}
         =[\![\rho]\!][\![\ln\rho]\!]
          +\frac{\bar{p}}{\gamma-1}\Bigl[\!\!\Bigl[\frac{\rho}{p}\Bigr]\!\!\Bigr]
           \Bigl[\!\!\Bigl[\ln\frac{\rho}{p}\Bigr]\!\!\Bigr]
          +\overline{\rho/p}\,\bar{\rho}\,[\![u]\!]^{2}
          \ge 0.
    \end{equation}
    The three terms are individually non-negative and correspond to distinct physical mechanisms: the first arises from density jumps, the second involves \([\![p/\rho]\!]\) and represents thermal diffusion, and the third, proportional to \([\![u]\!]^{2}\), represents viscous shear. Thus, apart from the density-jump component, the dissipation is physically consistent with thermal and shear effects.
\end{proof}

Building upon this physically consistent dissipation, we now examine the behavior of the limiter in classical equilibrium states of the Euler equations. When velocity \(u\) and pressure \(p\) are constant, the Euler system reduces to a linear wave equation for density \(\rho\). In such a state, a numerical scheme should preserve the uniformity of \(u\) and \(p\) without exciting spurious fluctuations.

\begin{theorem}[Velocity and pressure equilibrium preservation]
    Consider the equilibrium state with constant \(u\) and \(p\).  The requirements that the scheme maintains
    \(\partial u/\partial t = 0\) and \(\partial p/\partial t = 0\) are, respectively,
    \begin{align}
        r^{\rho u} - u\,r^{\rho} &= 0, \label{vep} \\
        r^{E} - u\,r^{\rho u} + \frac{u^{2}}{2}r^{\rho} &= 0, \label{pep}
    \end{align}
    referred to as velocity equilibrium preservation (VEP) and pressure equilibrium preservation (PEP). A high-order DG scheme employing the local dissipation \eqref{Eulerdu} preserves VEP and PEP whenever the interface numerical flux is also chosen to satisfy these conditions.
\end{theorem}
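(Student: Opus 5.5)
First derive the two conditions. Write $\rho u$ and $E$ in terms of primitive variables: $\partial_t(\rho u)=u\,\partial_t\rho+\rho\,\partial_t u$, so $\partial_t u=0$ is equivalent to $r^{\rho u}-u r^{\rho}=0$. Similarly,
\[
\partial_t E=\frac{\partial_t p}{\gamma-1}+\frac{u^2}{2}\partial_t\rho+\rho u\,\partial_t u .
\]
Substituting $\partial_t u=0$ and $u r^\rho=r^{\rho u}$ shows that $\partial_t p=0$ is equivalent to $r^E-\frac{u^2}{2}r^\rho=0$, i.e. $r^E-u r^{\rho u}+\frac{u^2}{2}r^\rho=0$. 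These identities hold nodewise for the semi-discrete right-hand side. Hence it suffices to show that every contribution to $\hat{\boldsymbol r}_i$ satisfies \eqref{vep}--\eqref{pep} separately, because both conditions are linear in $\boldsymbol r$ at fixed nodal $u$.

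Next treat the contributions one at a time, assuming the nodal data satisfy $u_i\equiv u$ and $p_i\equiv p$.

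\emph{Volume term.} The volume term $-\mathbf M^{-1}\mathbf Q\mathbf f$ has components $\rho_j u$, $\rho_j u^2+p$ and $u(\frac{p}{\gamma-1}+\frac12\rho_j u^2+p)$. Since $\mathbf Q\mathbf 1=0$, the constant parts drop out, and the remaining components are proportional to $D\rho$ with factors $u$, $u^2$ and $u^3/2$. These factors satisfy both linear conditions.

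\emph{Interface terms.} The surface terms $\mathbf f^*-\mathbf E\mathbf f$ satisfy the conditions by the hypothesis on the numerical flux, combined with the same computation applied to the nodal fluxes $\mathbf E\mathbf f$.

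\emph{Limiter term.} The limiter adds $\pm\alpha_{i+1/2}\boldsymbol\delta_{i+1/2}/w$ to the residual, so it suffices to verify the conditions for $\boldsymbol\delta_{i+1/2}$ itself. With $[\![u]\!]=0$ and $[\![p]\!]=0$, \eqref{Eulerdu} gives
\[
\delta^{\rho}=[\![\rho]\!],\qquad \delta^{\rho u}=u[\![\rho]\!],\qquad \delta^{E}=\frac{u^2}{2}[\![\rho]\!],
\]
where the last follows because the $[\![p]\!]$ and $\bar\rho\bar u[\![u]\!]$ terms vanish and $u_iu_{i+1}/2=u^2/2$. Then
\[
\delta^{\rho u}-u\delta^\rho=0,\qquad \delta^E-u\delta^{\rho u}+\frac{u^2}{2}\delta^\rho=0 .
\]
The scalar coefficient $\alpha_{i+1/2}$ is irrelevant to this check, since the conditions are homogeneous.

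The main obstacle is the logarithmic-mean factor in $\delta^E$. We need it not to become singular when $p$ is constant while $\rho/p$ varies, so that the $[\![p]\!]$ term vanishes cleanly. Since $\rho>0$ and $p>0$ make the factor finite and positive, the term is zero, and the case $[\![\rho]\!]=0$ is handled by the limit $(\rho/p)^{\ln}\to\overline{\rho/p}$. Conclude by summing the contributions.
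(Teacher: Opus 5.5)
Your proposal is correct and follows the paper's proof essentially step for step. You reduce the jump operator at constant \(u,p\) to \([\![\rho]\!]\,(1,u,u^2/2)^T\), note that the volume term scales as \((u,u^2,u^3/2)\) times \(\mathbf{M}^{-1}\mathbf{Q}\boldsymbol{\rho}\), and defer the surface terms to the flux hypothesis; your derivation of the two conditions, the linearity argument, \(\mathbf{Q}\mathbf{1}=\mathbf{0}\), and the finiteness of the logarithmic-mean factor add useful detail.

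One small correction: the nodal fluxes \(\mathbf{E}\mathbf{f}\) do not satisfy the conditions on their own, since the VEP and PEP combinations leave \(p\) and \(up/(\gamma-1)\) respectively. The surface step therefore rests on reading the flux hypothesis as saying that at equilibrium \(f^*=(F,\;uF+p,\;\tfrac{u^2}{2}F+\tfrac{\gamma}{\gamma-1}up)^T\). Then \(f^*-f(\boldsymbol{u}_0)=(F-\rho_0u)(1,u,u^2/2)^T\), so the pressure parts cancel in the difference and the contribution satisfies both conditions.
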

\begin{proof}
    When \(u\) and \(p\) are constant, the jump operator \eqref{Eulerdu} reduces to
    \begin{equation}
        \delta^{\rho}=[\![\rho]\!],\quad
        \delta^{\rho u}=u[\![\rho]\!],\quad
        \delta^{E}=\frac{u^{2}}{2}[\![\rho]\!].
    \end{equation}
    The classical DG right-hand side for a pure density wave inherently satisfies
    \begin{equation}
        r^{\rho}_i \propto u\sum_{j=0}^{N} \frac{Q_{ij}}{w_i}\rho_j,\quad
        r^{\rho u}_i \propto u^{2}\sum_{j=0}^{N} \frac{Q_{ij}}{w_i}\rho_j,\quad
        r^{E}_i \propto \frac{1}{2}u^{3}\sum_{j=0}^{N} \frac{Q_{ij}}{w_i}\rho_j,
    \end{equation}
    which fulfills \eqref{vep} and \eqref{pep}.  Because the local dissipation inherits the same scaling, VEP and PEP are preserved provided the interface numerical flux \(f^*\) also satisfies them.
\end{proof}

It is worth noting that local stability is a stronger requirement than PEP. From the perspective of the generalized limiting framework, the local instability of split-form DG does not fundamentally originate from the use of logarithmic averaging. The logarithmic function itself is monotone and can serve as a valid means of introducing dissipation. The essential issue is that the split-form reconstruction inherently introduces anti-dissipative entropy changes between certain node pairs, i.e., local entropy production of the wrong sign. This is why approaches that merely redistribute dissipation among the equations to achieve PEP cannot resolve the local instability problem in a fundamental way. In the present limiting strategy, the jump operator \eqref{Eulerdu} is constructed to guarantee \(\alpha_{i+1/2} \ge 0\) by design, so that every active pairwise operation is strictly dissipative in entropy. Consequently, the mechanism that triggers local instability in split-form schemes is eliminated, and the proposed jump operator provides local stability for the Euler equations.

With the jump operator \eqref{Eulerdu} fully defined and its properties established, the entropy-limiting strategy for the Euler system is completed by the same closed-form solution used for scalar conservation laws. The entropy condition remains a single scalar constraint, so the smoothing coefficient retains the structure of \eqref{entropylimiting} with the scalar entropy production rate replaced by its vector counterpart:
\begin{equation}
    \alpha_{i+1/2}=
    \frac{\epsilon\, D_{i+1/2}}{\sum_{j=0}^{N-1} D_{j+1/2}^{2}},
    \qquad
    D_{j+1/2}= [\![\boldsymbol{v}]\!]^{T} \boldsymbol{\delta}_{j+1/2},
    \qquad
    \text{if } \epsilon > 0.
\end{equation}
The complete limiting strategy then reads
\begin{equation}
\hat{\boldsymbol{r}}_i=\boldsymbol{r}_i-\frac{\alpha_{i-1/2}}{w_{i}}\boldsymbol{\delta}_{i-1/2}+\frac{\alpha_{i+1/2}}{w_i}\boldsymbol{\delta}_{i+1/2}
\end{equation}
with \(\alpha_{-1/2}=\alpha_{N+1/2}=0\). Thus, the limiting strategy for the Euler equations is mathematically isomorphic to the scalar version developed in Section~3. Every property established there, including conservation, entropy stability, minimal dissipation, and the absence of additional CFL restrictions, carries over directly to the Euler system.

\subsection{Subcell refinement of the Zhang-Shu type positivity-preserving limiter}
In the Euler equations, density \(\rho\) and pressure \(p\) must remain positive, yet numerical oscillations may induce negative values. We present a robust method that guarantees pointwise positivity of \(\rho\) and \(p\). The positivity-preserving strategy adopted here is structurally analogous to the entropy limiter procedure developed in the previous sections: it is based on a local smoother operating between nodal points, and it is applied as an a posteriori correction after each time step. Since the entropy limiting already operates at the subcell level, the positivity-preserving step is naturally designed with the same subcell refinement. The guiding principle is to apply only the minimal correction required to enforce positivity, so as to maximally preserve the inherent properties of the underlying numerical scheme.

The present strategy is a subcell-refined version of the classical Zhang-Shu type positivity limiter. The fundamental result of Zhang and Shu ensures that, under a suitable CFL condition, the cell-averaged density and pressure remain positive when the entropy limiting strategy preserves conservation \cite{ES2010}. Taking this well-established guarantee as a foundation, we focus here on the subcell construction that extends positivity from the cell average to every solution point through a finite sequence of pairwise operations.

\begin{definition}[Pairwise operation]
The positivity-preserving process is formulated as a subcell pairwise operation
\begin{equation}
\hat{\boldsymbol{u}}^{+}_i=\boldsymbol{u}^{+}_i+\frac{\widetilde{\alpha}_{ij}}{w_i}(\boldsymbol{u}^{+}_{j}-\boldsymbol{u}^{+}_{i}),\quad \hat{\boldsymbol{u}}^{+}_{j}=\boldsymbol{u}^{+}_{j}+\frac{\widetilde{\alpha}_{ij}}{w_j}(\boldsymbol{u}^{+}_{i}-\boldsymbol{u}^{+}_{j}), \quad \widetilde{\alpha}_{ij}>0,\;i\ne j
\label{post}
\end{equation}
which shares the same structure as the entropy-dissipative pairwise operation employed in the entropy limiting strategy.
\end{definition}

This pairwise operation inherits the entropy dissipative structure of the subcell limiting framework. Its effect on the discrete entropy is characterized by the following property.

\begin{theorem}
    Since the numerical entropy is convex, a pairwise operation \eqref{post} is entropy dissipative for 
    \begin{equation}
        \widetilde{\alpha}_{ij} \in \left(0,\frac{w_iw_j}{w_i+w_j}\right].
    \end{equation}
\end{theorem}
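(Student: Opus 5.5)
The pairwise operation \eqref{post} replaces the two nodal states by convex combinations of themselves and preserves the weighted sum. Entropy dissipation then follows from Jensen's inequality applied to the convex numerical entropy $\eta$. Write $\lambda_i=\widetilde{\alpha}_{ij}/w_i$ and $\lambda_j=\widetilde{\alpha}_{ij}/w_j$. Then
\begin{equation*}
\hat{\boldsymbol{u}}^{+}_i=(1-\lambda_i)\boldsymbol{u}^{+}_i+\lambda_i\boldsymbol{u}^{+}_j,\qquad
\hat{\boldsymbol{u}}^{+}_j=\lambda_j\boldsymbol{u}^{+}_i+(1-\lambda_j)\boldsymbol{u}^{+}_j .
\end{equation*}
Every other node is unchanged. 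Conservation, $w_i\hat{\boldsymbol{u}}^{+}_i+w_j\hat{\boldsymbol{u}}^{+}_j=w_i\boldsymbol{u}^{+}_i+w_j\boldsymbol{u}^{+}_j$, is immediate, exactly as for \eqref{limiter}. The claim to prove is therefore
\begin{equation*}
w_i\eta(\hat{\boldsymbol{u}}^{+}_i)+w_j\eta(\hat{\boldsymbol{u}}^{+}_j)\le w_i\eta(\boldsymbol{u}^{+}_i)+w_j\eta(\boldsymbol{u}^{+}_j).
\end{equation*}

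The first step checks that both updates are genuine convex combinations, which requires $\lambda_i,\lambda_j\in[0,1]$, i.e. $\widetilde{\alpha}_{ij}\le\min(w_i,w_j)$. The stated bound gives this, since $w_iw_j/(w_i+w_j)\le\min(w_i,w_j)$. Convexity of $\eta$ then gives
\begin{equation*}
\eta(\hat{\boldsymbol{u}}^{+}_i)\le(1-\lambda_i)\eta(\boldsymbol{u}^{+}_i)+\lambda_i\eta(\boldsymbol{u}^{+}_j),\qquad
\eta(\hat{\boldsymbol{u}}^{+}_j)\le\lambda_j\eta(\boldsymbol{u}^{+}_i)+(1-\lambda_j)\eta(\boldsymbol{u}^{+}_j).
\end{equation*}
Multiply these by $w_i$ and $w_j$ and add. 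Because $w_i\lambda_i=w_j\lambda_j=\widetilde{\alpha}_{ij}$, the cross terms recombine:
\begin{equation*}
w_i\eta(\hat{\boldsymbol{u}}^{+}_i)+w_j\eta(\hat{\boldsymbol{u}}^{+}_j)\le(w_i-\widetilde{\alpha}_{ij}+\widetilde{\alpha}_{ij})\eta(\boldsymbol{u}^{+}_i)+(w_j-\widetilde{\alpha}_{ij}+\widetilde{\alpha}_{ij})\eta(\boldsymbol{u}^{+}_j),
\end{equation*}
and the right-hand side is exactly the original pair entropy. This proves dissipation.

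The bound actually stated, $w_iw_j/(w_i+w_j)$, is sharper than what Jensen needs, and explaining it is the one nontrivial point. I would show that it is the value at which both nodes collapse onto the weighted average $\bar{\boldsymbol{u}}=(w_i\boldsymbol{u}^{+}_i+w_j\boldsymbol{u}^{+}_j)/(w_i+w_j)$. Indeed $\hat{\boldsymbol{u}}^{+}_i=\bar{\boldsymbol{u}}$ holds iff $\lambda_i=w_j/(w_i+w_j)$, i.e. $\widetilde{\alpha}_{ij}=w_iw_j/(w_i+w_j)$, and the same value makes $\hat{\boldsymbol{u}}^{+}_j=\bar{\boldsymbol{u}}$.

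I would also parametrize the new states as $\hat{\boldsymbol{u}}^{+}_i=\bar{\boldsymbol{u}}+(1-\theta)(\boldsymbol{u}^{+}_i-\bar{\boldsymbol{u}})$ and $\hat{\boldsymbol{u}}^{+}_j=\bar{\boldsymbol{u}}+(1-\theta)(\boldsymbol{u}^{+}_j-\bar{\boldsymbol{u}})$, with $\theta=\widetilde{\alpha}_{ij}(w_i+w_j)/(w_iw_j)\in(0,1]$ on the stated interval. This form shows the pair entropy is monotone non-increasing as $\widetilde{\alpha}_{ij}$ runs over the interval. The function $\theta\mapsto w_i\eta(\hat{\boldsymbol{u}}^{+}_i)+w_j\eta(\hat{\boldsymbol{u}}^{+}_j)$ is convex and has its minimum at $\theta=1$, where both states equal the average. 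Its derivative at $\theta=1$ vanishes because $w_i(\boldsymbol{u}^{+}_i-\bar{\boldsymbol{u}})+w_j(\boldsymbol{u}^{+}_j-\bar{\boldsymbol{u}})=0$. Beyond this value the nodes cross over, the ordering of the original jump is reversed, and monotonicity is lost.

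The main subtlety I expect is matching this parametrization cleanly to the endpoint. The Jensen argument alone does not single out $w_iw_j/(w_i+w_j)$, so this last step is what justifies the interval in the theorem as the natural non-overshooting range rather than merely a sufficient condition.
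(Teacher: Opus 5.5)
Your argument is correct, and its main step takes a different route from the paper. The paper works directly with $\mathcal{F}(\widetilde{\alpha}_{ij})=w_i\eta(\hat{\boldsymbol{u}}^{+}_i)+w_j\eta(\hat{\boldsymbol{u}}^{+}_j)$. It shows $\mathcal{F}''\ge 0$ from the Hessian of $\eta$, notes that $\mathcal{F}'$ vanishes at $w_iw_j/(w_i+w_j)$ because the two states coincide there, and concludes that $\mathcal{F}$ decreases monotonically on the whole interval. Your $\theta$-parametrization is exactly this argument after the rescaling $\theta=\widetilde{\alpha}_{ij}(w_i+w_j)/(w_iw_j)$, so that part coincides with the paper.

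Your primary argument instead uses only the two-point convexity inequality on the convex combinations, together with $w_i\lambda_i=w_j\lambda_j=\widetilde{\alpha}_{ij}$. This needs no differentiability of $\eta$ and no derivative bookkeeping. It also gives $\mathcal{F}(\widetilde{\alpha}_{ij})\le\mathcal{F}(0)$ on the larger range $(0,\min(w_i,w_j)]$. It rests on the same convex-combination structure the paper later uses for the concavity of pressure in the greedy algorithm. What it does not give is monotonicity in $\widetilde{\alpha}_{ij}$, which is how the paper states its conclusion and is the only property that is sharp at $w_iw_j/(w_i+w_j)$.

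So your closing sentence needs a small correction. For the plain inequality $\mathcal{F}(\widetilde{\alpha}_{ij})\le\mathcal{F}(0)$, the stated interval is only sufficient, as your own Jensen step shows. The endpoint is sharp for non-overshoot, i.e. monotone decrease.

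Even that monotonicity can be obtained without derivatives. Since $w_i\hat{\boldsymbol{u}}^{+}_i+w_j\hat{\boldsymbol{u}}^{+}_j$ is fixed, Jensen shows that the convex function of $\theta$ attains its minimum at $\theta=1$. A convex function on $[0,1]$ minimized at the right endpoint is non-increasing there.
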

\begin{proof}
    Consider the weighted sum of entropy functions \(w_i\eta(\boldsymbol{u}^{+}_i)+w_j\eta(\boldsymbol{u}^{+}_j)\) between any two points, and define a generalized function with correction term \(\boldsymbol{\delta}^{+}=\boldsymbol{u}^{+}_{j}-\boldsymbol{u}^{+}_{i}\)
\begin{equation}
\mathcal{F}(\widetilde{\alpha}_{ij})=w_i\eta(\hat{\boldsymbol{u}}^{+}_i)+w_j\eta (\hat{\boldsymbol{u}}^{+}_j)=w_i\eta(\boldsymbol{u}^{+}_i+\frac{\widetilde{\alpha}_{ij}\boldsymbol{\delta}^{+}}{w_i})+w_j\eta(\boldsymbol{u}^{+}_j-\frac{\widetilde{\alpha}_{ij} \boldsymbol{\delta}^{+}}{w_j}),
\end{equation}
where \(\widetilde{\alpha}_{ij}>0\) denotes the smoothing coefficient. The derivative of the generalized function with respect to \(\widetilde{\alpha}_{ij}\) is
\begin{equation} 
\mathcal{F}'(\widetilde{\alpha}_{ij})=\boldsymbol{v}^T(\hat{\boldsymbol{u}}^{+}_i(\widetilde{\alpha}_{ij}))\boldsymbol{\delta}^{+}-\boldsymbol{v}^T(\hat{ \boldsymbol{u}}^{+}_j(\widetilde{\alpha}_{ij})) \boldsymbol{\delta}^{+}
\end{equation}
Its second derivative is
\begin{equation}
\mathcal{F}''(\widetilde{\alpha}_{ij})=(\boldsymbol{\delta}^{+})^T(\frac{1}{w_i}\boldsymbol{\eta}''(\hat{\boldsymbol{u}}_i)+\frac{1}{w_j}\boldsymbol{\eta}''(\hat{\boldsymbol{u}}_j))\boldsymbol{\delta}^{+}\geq 0
\end{equation}
by convexity of the entropy function. Evaluating the first derivative at the upper bound gives
\begin{equation}
    \mathcal{F}'(\frac{w_iw_j}{w_i+w_j})=\boldsymbol{v}^T(\boldsymbol{u}^{+}_i+\frac{w_j}{w_i+w_j}(\boldsymbol{u}^{+}_{j}-\boldsymbol{u}^{+}_{i})- \boldsymbol{u}^{+}_j+\frac{w_i}{w_i+w_j}(\boldsymbol{u}^{+}_{j}-\boldsymbol{u}^{+}_{i})) \boldsymbol{\delta}^{+}=0
\end{equation}
Since \(\mathcal{F}''(\widetilde{\alpha}_{ij})\ge 0\), the function \(\mathcal{F}'(\widetilde{\alpha}_{ij})\) is a non-decreasing function, and therefore \(\mathcal{F}'(\widetilde{\alpha}_{ij}) < 0\) for \(\widetilde{\alpha}_{ij} \in (0,w_iw_j/(w_i+w_j))\). Hence \(\mathcal{F}(\widetilde{\alpha}_{ij})\) is monotonically decreasing on \((0,w_iw_j/(w_i+w_j)]\), so the operation dissipates entropy throughout the admissible interval.
\end{proof}

Having established the entropy-dissipative properties of the pairwise operation, we now assemble these operations into a practical strategy that achieves pointwise positivity of density and pressure at all solution points. To avoid singularities in the equation of state, we require the density and pressure at every solution point to remain strictly above prescribed positive lower bounds, denoted \(\rho_{\lim}^{+}\) and \(p_{\lim}^{+}\), respectively. The strategy proceeds in two stages, both employing the same greedy algorithm described below.
\begin{itemize}
    \item First, the density limiter is applied to guarantee \(\hat{\rho}^{+}_i \geq \rho_{\lim}^{+}\) at every node. The classical Zhang-Shu type limiter ensures that the cell-averaged density remains positive under a suitable CFL condition, and the greedy algorithm propagates this positivity from the cell average to all solution points through a finite sequence of pairwise exchanges.
    \item  Once pointwise positivity of density is secured, the same greedy algorithm is applied to enforce \(\hat{p}^{+}_i>p_{\lim}^{+}\) at every node. It should be noted that the cell-averaged pressure computed from the cell-averaged conserved variables is guaranteed positive by the Zhang-Shu type limiter, but this does not imply positivity of the weighted arithmetic mean of the pointwise pressures at the solution nodes. In practice, the latter is almost always positive once the density is non-negative; however, in the rare situation where the mean of the pointwise pressures becomes negative, the classical Zhang-Shu type limiter can be invoked to correct it to a positive value. 
\end{itemize}
Starting from a state with positive cell-averaged density and with the mean of the pointwise pressures rendered positive, the greedy algorithm then extends positivity to every solution point, taking advantage of the subcell structure and the concavity of the pressure function established below.

To ensure robustness in practical computations, the greedy algorithm proceeds as follows. Specifically, for a node value \(a_i\) within a spectral element that must remain positive, the following procedure is applied:
\begin{algorithm}[H]
\caption{Greedy Limiting}
\label{greek}
\begin{algorithmic}
\STATE After iteration \(\boldsymbol{u}\leftarrow\boldsymbol{u}^{+}\)
\WHILE{$a_i < a_{\lim}$}
    \STATE Identify $a_j = \max(a)$
    \STATE Compute $\widetilde{\alpha} := \min\big(w_j (a_j - \bar{a}), w_i (a_{\lim} - a_i)\big) / (a_j - a_i)$
    \STATE Update $\hat{\boldsymbol{u}}_i := \boldsymbol{u}_i + \widetilde{\alpha} (\boldsymbol{u}_j - \boldsymbol{u}_i)/w_i$, \quad
          $\hat{\boldsymbol{u}}_j := \boldsymbol{u}_j - \widetilde{\alpha} (\boldsymbol{u}_j - \boldsymbol{u}_i)/w_j$
\ENDWHILE
\end{algorithmic}
\end{algorithm}

\begin{theorem}
The greedy algorithm, applied separately to density or pressure, achieves pointwise positivity in a finite number of steps as long as the corresponding weighted arithmetic mean over the solution nodes is positive.
\end{theorem}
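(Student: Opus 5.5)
Our plan treats density and pressure in parallel, using a functional \(a(\boldsymbol{u})\) that is linear (density) or concave on the admissible set (pressure), together with two invariants of the pairwise update: the weighted sum \(\sum_k w_k \boldsymbol{u}_k\), which is conserved exactly by \eqref{post}, and the weighted mean \(\bar a=\frac12\sum_k w_k a_k>a_{\lim}\). We set \(a_{\lim}\) strictly below \(\bar a\); this is possible because the mean is assumed positive. For density, \(a\) is linear in \(\boldsymbol{u}\), so \(\bar a\) is invariant. For pressure, \(p(\boldsymbol{u})\) is concave on \(\{\rho>0\}\), and density positivity has already been secured. Writing the update as convex combinations \(\hat{\boldsymbol{u}}_i=(1-\theta_i)\boldsymbol{u}_i+\theta_i\boldsymbol{u}_j\) with \(\theta_i=\widetilde\alpha/w_i\le1\), concavity gives \(w_i\hat p_i+w_j\hat p_j\ge w_ip_i+w_jp_j\). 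Hence the pointwise pressure mean never decreases. This requires \(\widetilde\alpha\le w_iw_j/(w_i+w_j)\) or a similar bound so that both coefficients lie in \([0,1]\), and we will check that the min in Algorithm~\ref{greek} enforces this.

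First step: show one update restores node \(i\) or exhausts the donor's surplus without violating the donor. We take the donor \(j\) as the maximizer, so \(a_j\ge\bar a>a_{\lim}>a_i\) and \(a_j-a_i>0\). In the linear case, \(\hat a_i=a_i+\widetilde\alpha(a_j-a_i)/w_i\) and \(\hat a_j=a_j-\widetilde\alpha(a_j-a_i)/w_j\). The choice \(\widetilde\alpha(a_j-a_i)=\min(w_j(a_j-\bar a),w_i(a_{\lim}-a_i))\) therefore gives either \(\hat a_i=a_{\lim}\) or \(\hat a_j=\bar a\ge a_{\lim}\). In both cases no node is pushed below \(a_{\lim}\) that was above it. In the concave (pressure) case the same formulas hold with \(\ge\), because \(p\) along the segment lies above the chord. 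The receiving node therefore reaches at least the linear prediction, and we will verify the donor's value by the same chord argument on its side.

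Second step: finiteness. The plan is a counting argument on the set \(S=\{k: a_k\ge \bar a\}\) of potential donors and the deficient set \(I=\{k:a_k<a_{\lim}\}\). Each step either removes an index from \(I\), or moves some donor down to exactly \(\bar a\), after which it cannot donate again with positive surplus. Once all nodes above \(\bar a\) have been brought to \(\bar a\) while some \(a_i<a_{\lim}<\bar a\) persists, the weighted mean would lie strictly below \(\bar a\). This contradicts conservation in the linear case and monotonicity in the concave case. So the number of steps is at most \(|I|+N+1\).

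The main obstacle is the pressure case. There the mean is not conserved but only nondecreasing, and the donor's updated pressure is not exactly \(\bar a\), so an equality-based counting does not apply verbatim. We would handle this by working with the linearized quantity \(\tilde a\) along the segment, which gives exact counts, and then using concavity to show actual pressures dominate it. If needed, we would relax "exactly \(\bar a\)" to "\(\ge\) the current mean" and argue that each step either fixes node \(i\) or strictly decreases the number of nodes above the current mean. That count is a finite monotone quantity, which yields termination.
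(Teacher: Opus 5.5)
\textbf{There is a genuine gap: finiteness of the pressure stage.} The parts you do carry out are correct.

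\emph{Density.} This argument is complete and more explicit than the paper's. Conservation fixes \(\bar\rho\). Each step puts either the receiver exactly at \(\rho_{\lim}\) or the donor exactly at \(\bar\rho\). Receivers never exceed \(\rho_{\lim}<\bar\rho\). A node sitting at \(\bar\rho\) is the maximizer only if all nodes equal \(\bar\rho\). Together these give your bound \(|I|+N+1\).

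\emph{Pressure estimates.} Your chord bounds \(\hat p_i\ge(1-\theta_i)p_i+\theta_ip_j\) and \(\hat p_j\ge p_j-\theta_j(p_j-p_i)\ge\bar p\) are exactly the paper's concavity estimate. The deferred bound \(\widetilde\alpha\le w_iw_j/(w_i+w_j)\) does follow from the min, because \(a_j-a_i\ge(a_j-\bar a)+(a_{\lim}-a_i)\).

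\emph{Why neither fallback works.} Write \(e_i,e_j\ge0\) for the chord gains. In a donor-limited step the donor lands at \(\hat p_j=\bar p+e_j\), while the mean rises only by \(\tfrac12(w_ie_i+w_je_j)\). So the donor stays strictly above the updated mean whenever \((2-w_j)e_j>w_ie_i\), and \(e_j>0\) unless the two nodal velocities coincide. It can then remain the maximizer and be selected again. The number of nodes above the current mean is nonincreasing but need not strictly decrease, so no counting bound follows. The linearized shadow \(\tilde a\) does not help either. The maximizer, the deficient node and \(\widetilde\alpha\) are all computed from the actual pressures, so exact counts for a fictitious linear process do not bound the iterations the algorithm actually performs.

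\emph{The missing idea is a uniform quantitative increment.} While some \(p_i<p_{\lim}\), the weighted deviations from the mean sum to zero and \(\sum_kw_k=2\). Hence
\[
\max_k p_k-\bar p\;\ge\;\tfrac12\sum_{k:\,p_k<\bar p}w_k(\bar p-p_k)\;>\;\tfrac12w_{\min}(\bar p-p_{\lim})\;\ge\;c:=\tfrac12w_{\min}(\bar p_0-p_{\lim})>0,
\]
using that the mean never decreases from its initial value \(\bar p_0\). The same bound holds if the algorithm keeps \(\bar p=\bar p_0\) fixed.

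Every step is then one of two kinds. In the first, the receiver is lifted to at least \(p_{\lim}\). This happens at most \(|I_0|\) times, since no new deficient node is ever created: the donor stays \(\ge\bar p>p_{\lim}\). In the second, the step is donor-limited and \(p_i\) rises by at least \(\theta_i(p_j-p_i)=(w_j/w_i)(p_j-\bar p)>(w_j/w_i)c\). This lowers the nonincreasing weighted deficit \(\Phi=\sum_kw_k(p_{\lim}-p_k)_+\) by more than \(w_{\min}c\). So there are at most \(|I_0|+\Phi_0/(w_{\min}c)\) steps.

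The paper's own pressure argument ("strictly increases the minimum pressure" with no new problematic nodes) makes the same leap. This uniform surplus bound is what actually justifies finite termination there as well.
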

\begin{proof}
   Since density is a conserved variable in the pairwise exchange, each corrected density satisfies \(\hat{\rho}_j^{+}\geq\bar{\rho}^{+}\). This property prevents the creation of new problematic nodes, ensuring that pointwise positivity is attained within a finite number of limiting steps. For pressure, a similar condition holds due to the concavity of the pressure  function for the ideal gas:
\begin{equation}
     p(\hat{\boldsymbol{u}}^{+}_j)\geq (1-\frac{\widetilde{\alpha}_{ij}}{w_j})p (\boldsymbol{u}^{+}_j)+\frac{\widetilde{\alpha}_{ij}}{w_j}p (\boldsymbol{u}^{+}_i)=p^{+}_{j}-\frac{\widetilde{\alpha}_{ij}}{w_j}(p^{+}_{j}-p^{+}_i)\geq \bar{p}^{+}
\end{equation}
Each limiting step strictly increases the minimum pressure while preventing the emergence of new problematic nodes, guaranteeing that positivity of pressure is also achieved in finitely many iterations.
\end{proof}

\begin{remark}    
The threshold values \(\rho_{\lim}^{+}\) and \(p_{\lim}^{+}\) used in the greedy algorithm are chosen to adapt to the local flow state. Because pressure oscillations are typically less severe than density oscillations in numerical simulations, a uniform threshold based solely on the cell average would be overly restrictive for pressure. We therefore define
\begin{equation}
    \rho_{\lim}^{+} = \beta\bar{\rho}^{+},\qquad
p_{\lim}^{+} = \beta \, \bar{p}^{+} \frac{\bar{\rho}^{+}}{\rho_{\max}^{+}},\label{prlimit}
\end{equation}
where \(\beta \in (0,1)\) is a fixed parameter controlling the strictness of the limiter. The factor \(\bar{\rho}^{+}/\rho_{\max}^{+}\) in the pressure threshold accounts for the range of density within the element: in regions of strong density variation the threshold is relaxed proportionally, which mitigates scale effects and avoids excessive stiffness in the limiting procedure.
\end{remark}

\section{Numerical Experiments}
In this section we validate the proposed subcell limiting strategy on a set of benchmark problems that require strict enforcement of entropy constraints and positivity preservation. Time integration uses the third-order strong-stability-preserving Runge-Kutta scheme (SSPRK33)~\cite{SSP2001}, a standard choice for high-order spatial discretizations~\cite{ES2016,FR2007}. The limiter is configured with a density threshold \(\rho_{\lim}^{+} = \beta \bar{\rho}^{+}\) and the pressure threshold given by~\eqref{prlimit}. 

\subsection{One-Dimensional Burgers' Equation}
\begin{table}[h]
\centering
\caption{\(L^2\)-errors and convergence orders at \(t=0.15\) for the inviscid Burgers equation with smooth initial condition. Results are shown for classical DG and the subcell entropy-residual-driven limited scheme with polynomial degree \(N\) and \(K\) elements. Both formulations achieve convergence rates close to \(N+1/2\) in the pre-shock smooth region.}
\label{tab:convergence}
\begin{tabular}{c|c|cc|cc}
\hline
\hline
&&\multicolumn{2}{c|}{classical DG}&\multicolumn{2}{|c}{subcell limiting}\\
\hline
$N$ & $K$ & $L^2$ error & Order  & $L^2$ error & Order\\
\hline
1 & 10 & 1.1520\(\times 10^{-1}\) & --& 1.0562\(\times 10^{-1}\) & -- \\
  & 20 & 4.9569\(\times 10^{-2}\) & 1.22 & 4.5381\(\times 10^{-2}\) & 1.22 \\
  & 30 & 2.9212\(\times 10^{-2}\) & 1.30 & 2.6700\(\times 10^{-2}\) & 1.31 \\
  & 40 & 1.9899\(\times 10^{-2}\) & 1.33 & 1.8191\(\times 10^{-2}\) & 1.33 \\
  & 50 & 1.4702\(\times 10^{-2}\) & 1.36 & 1.3450\(\times 10^{-2}\) & 1.35 \\
\hline
2 & 10 & 2.0400\(\times 10^{-2}\) & --& 2.0261\(\times 10^{-2}\) & -- \\
  & 20 & 3.5639\(\times 10^{-3}\) & 2.52 & 3.5644\(\times 10^{-3}\) & 2.51 \\
  & 30 & 1.2810\(\times 10^{-3}\) & 2.52 & 1.2811\(\times 10^{-3}\) & 2.52 \\
  & 40 & 6.1408\(\times 10^{-4}\) & 2.56 & 6.1411\(\times 10^{-4}\) & 2.56 \\
  & 50 & 3.4434\(\times 10^{-4}\) & 2.59 & 3.4435\(\times 10^{-4}\) & 2.59 \\
\hline
3 & 10 & 1.3745\(\times 10^{-3}\) & --& 1.3728\(\times 10^{-3}\) & -- \\
  & 20 & 3.2108\(\times 10^{-4}\) & 2.10 & 3.2107\(\times 10^{-4}\) & 2.10 \\
  & 30 & 9.6319\(\times 10^{-5}\) & 2.97 & 9.6319\(\times 10^{-5}\) & 2.97 \\
  & 40 & 3.8373\(\times 10^{-5}\) & 3.20 & 3.8373\(\times 10^{-5}\) & 3.20 \\
  & 50 & 1.8562\(\times 10^{-5}\) & 3.25 & 1.8562\(\times 10^{-5}\) & 3.25 \\
\hline
4 & 10 & 5.9143\(\times 10^{-4}\) & --& 5.9144\(\times 10^{-4}\) & -- \\
  & 20 & 3.6056\(\times 10^{-5}\) & 4.04 & 3.6056\(\times 10^{-5}\) & 4.04 \\
  & 30 & 5.4493\(\times 10^{-6}\) & 4.66 & 5.4493\(\times 10^{-6}\) & 4.66 \\
  & 40 & 1.5087\(\times 10^{-6}\) & 4.46 & 1.5087\(\times 10^{-6}\) & 4.46 \\
  & 50 & 5.6135\(\times 10^{-7}\) & 4.43 & 5.6135\(\times 10^{-7}\) & 4.43 \\
\hline
\end{tabular}
\end{table}
\begin{figure}[h]
    \centering
   \includegraphics[width=0.45\linewidth]{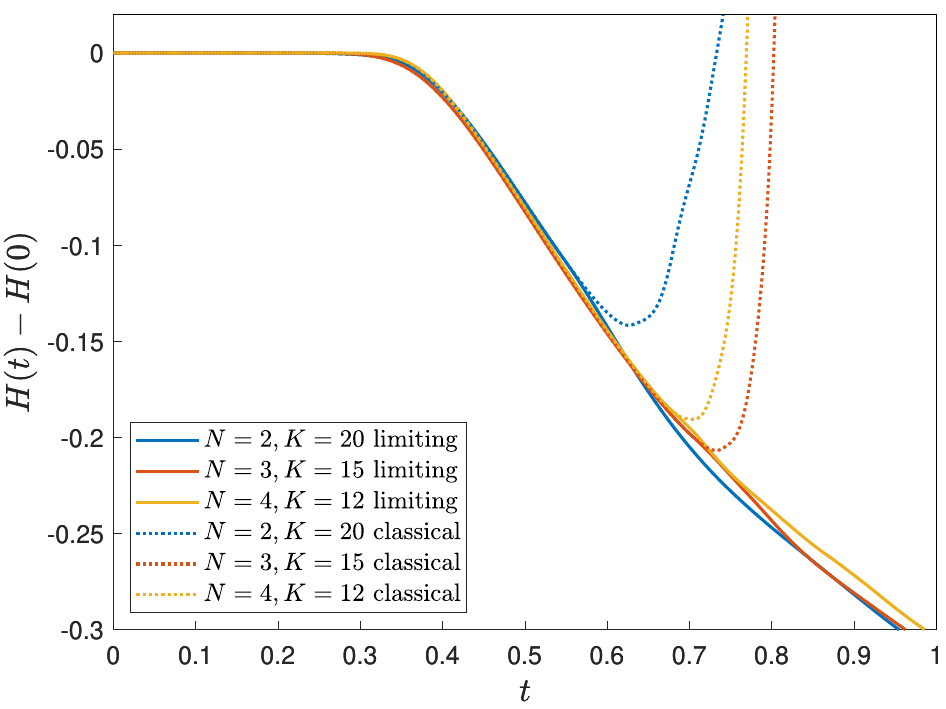}
    \caption{Total entropy evolution over \(t\in[0,1]\) for the Burgers equation using DG with polynomial degree \(N\) and \(K\) elements. The subcell entropy-residual-driven limiting strategy ensures a monotonically decreasing entropy profile.}
    \label{P1}
\end{figure}

This test case verifies the high-order accuracy of the subcell entropy-residual-driven limiting strategy in smooth regions, its strict entropy-dissipative character, and its ability to maintain a physically consistent evolution after shock formation. We solve the inviscid Burgers equation with the smooth initial condition \(u(x,0)=0.01+\sin(\pi x)\) on the domain \([0,2]\). Before the shock forms, the exact solution is given implicitly by \(u=0.01+\sin(\pi(x-ut))\) and remains smooth. All simulations use a fixed time step \(\Delta t=2\times10^{-4}\).

Table~\ref{tab:convergence} reports the \(L^2\) errors and convergence orders for the classical DG scheme and the subcell-limited scheme at various polynomial degrees \(N\) and element counts \(K\). The errors are evaluated against a reference solution obtained by Newton iteration. In the pre-shock smooth region, the limited scheme achieves a convergence order of at least \(N+1/2\). This behavior is fully consistent with classical DG and confirms that the entropy limiter preserves the baseline convergence order. Moreover, as the resolution increases, the limited solution rapidly converges to the classical DG solution in smooth regions.

Figure~\ref{P1} displays the total entropy evolution over the interval \(t\in[0,1]\). Once the entropy residual triggers the limiter, the total entropy of the limited scheme decays monotonically, indicating that the limiting strategy actively enforces entropy dissipation. The classical DG method, without limiting, either violates the entropy inequality or eventually breaks down near the shock. In contrast, the limited scheme maintains a physically admissible evolution throughout the simulation, demonstrating that the subcell entropy-residual-driven limiting avoids non-physical oscillations and preserves physical consistency after shock formation.

\subsection{One-Dimensional Euler Equations}
This subsection compares four schemes: classical DG, split-form DG, the RD-based entropy correction method (denoted as “residual distribution” in the legends), and the proposed subcell entropy-residual-driven limiting strategy (denoted as “subcell limiting” in the legends). Following the discussion in Section~3 on local stability, activating the entropy correction only when the entropy residual \(\epsilon > 0\) is essential for guaranteeing local dissipation. Therefore, for a fair comparison, we modify both split-form DG and RD-based entropy correction to be entropy-residual-driven: they apply their respective entropy-stable reconstructions only in elements where \(\epsilon > 0\), and revert to the classical DG discretization otherwise. These modified versions can both be interpreted as particular instances of the generalized subcell limiting framework introduced in Section~3. For all one-dimensional simulations, the time step is determined by the CFL condition
\begin{equation}
    \Delta t = \frac{\mathrm{CFL} \cdot \Delta x}{\lambda_{\max} (2N+1)},
\end{equation}
with \(\mathrm{CFL} = 0.5\) unless stated otherwise.

\subsubsection{Entropy Wave}

The sine entropy wave is a prototypical smooth problem for evaluating entropy dissipation strategies: its exact solution advects a density perturbation at constant velocity and pressure, so any spurious variations in velocity or pressure directly expose numerical inconsistencies. Because the local dissipation introduced by the subcell limiting strategy has exactly the same form as the dissipative part of a numerical flux, we can compare different dissipation models by only modifying the interface flux dissipation while keeping the classical DG volume discretization unchanged. We apply this idea to the entropy wave with amplitude \(A\) on the domain \(x \in [-1,1]\) with periodic boundary conditions. The initial profile is
\begin{equation}
    \rho_0 = 1 + A\sin(\pi x), \quad u_0 = 1, \quad p_0 = 1.
\end{equation}

\begin{figure}[h]
    \centering
    \subfigure[Velocity \(u\)]{\includegraphics[width=0.45\linewidth]{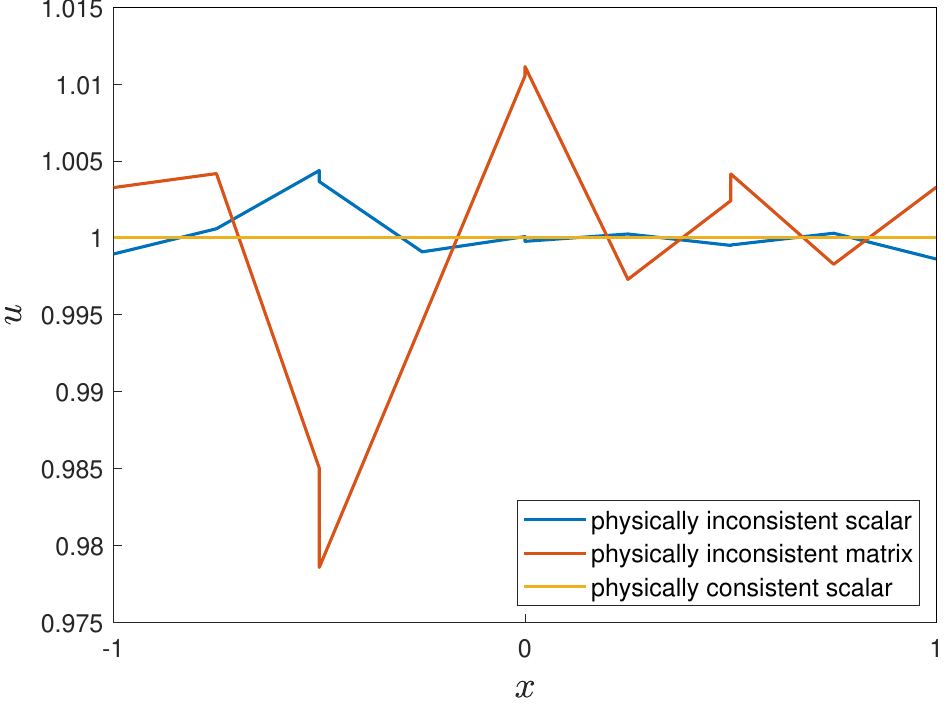}}
    \subfigure[Pressure \(p\)]{\includegraphics[width=0.45\linewidth]{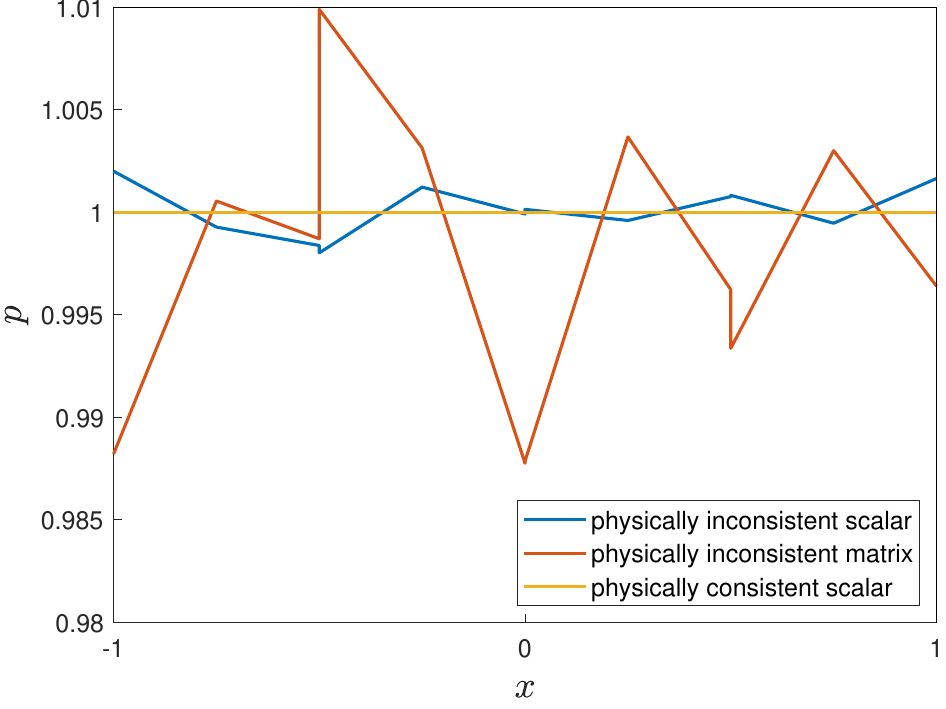}}
    \caption{Velocity and pressure profiles of the \(A=0.75\) sine entropy wave at \(t=2\) using classical DG (\(N=2\), \(K=4\)) with different interface dissipation terms. Only the physically consistent dissipation preserves uniform velocity and pressure.}
    \label{P2}
\end{figure}

Figure~\ref{P2} displays the velocity and pressure fields at \(t=2\) obtained with the classical DG scheme (\(N=2\), \(K=4\)) using different dissipation terms in the interface flux. When a physically inconsistent dissipation is employed, whether scalar or matrix type and whether it violates the pressure equilibrium principle (PEP) or the velocity equilibrium principle (VEP), spurious oscillations appear in both fields. In contrast, the physically consistent scalar dissipation constructed from the jump operator~\eqref{Eulerdu} preserves constant velocity and pressure to machine precision while still supplying the necessary entropy dissipation inside the element. This confirms that physical consistency of the dissipation operator is essential for maintaining pressure and velocity equilibrium.

\begin{table}[h]
\centering
\caption{\(L^2\)-errors and convergence orders for the \(A=0.75\) sine entropy wave using the subcell limited DG scheme with polynomial degree \(N\) and \(K\) elements. The convergence rates attain the optimal \(N+1\) order.}
\label{tab:convergence2}
\begin{tabular}{@{}c@{\hspace{1cm}}c@{}}
\begin{tabular}{c|c|c}
\multicolumn{3}{c}{$N=2$} \\
\hline
$K$ & $L^2$ error & Order \\
\hline
12 & 7.4857\(\times 10^{-3}\) & -- \\
18 & 2.5722\(\times 10^{-3}\) & 2.63 \\
24 & 1.0779\(\times 10^{-3}\) & 3.02 \\
30 & 5.3168\(\times 10^{-4}\) & 3.17 \\
36 & 3.1287\(\times 10^{-4}\) & 2.91 \\
\hline
\end{tabular}
&
\begin{tabular}{c|c|c}
\multicolumn{3}{c}{$N=3$} \\
\hline
$K$ & $L^2$ error & Order \\
\hline
8  & 1.0556\(\times 10^{-3}\) & -- \\
12 & 2.0602\(\times 10^{-4}\) & 4.03 \\
16 & 6.5496\(\times 10^{-5}\) & 3.98 \\
20 & 2.5691\(\times 10^{-5}\) & 4.19 \\
24 & 1.1622\(\times 10^{-5}\) & 4.35 \\
\hline
\end{tabular}
\\[2em]
\begin{tabular}{c|c|c}
\multicolumn{3}{c}{$N=5$} \\
\hline
$K$ & $L^2$ error & Order \\
\hline
4  & 1.7941\(\times 10^{-4}\) & -- \\
6  & 1.2380\(\times 10^{-5}\) & 6.59 \\
8  & 1.6524\(\times 10^{-5}\) & 7.00 \\
10 & 3.0756\(\times 10^{-5}\) & 7.53 \\
12 & 9.6767\(\times 10^{-8}\) & 6.34 \\
\hline
\end{tabular}
&
\begin{tabular}{c|c|c}
\multicolumn{3}{c}{$N=7$} \\
\hline
$K$ & $L^2$ error & Order \\
\hline
2 & 5.3439\(\times 10^{-4}\) & -- \\
3 & 3.4446\(\times 10^{-5}\) & 6.76 \\
4 & 3.1689\(\times 10^{-5}\) & 8.29 \\
5 & 5.2596\(\times 10^{-5}\) & 8.05 \\
6 & 1.0962\(\times 10^{-5}\) & 8.60 \\
\hline
\end{tabular}
\end{tabular}
\end{table}

In addition to verifying physical consistency, we examine the convergence behavior of the subcell limited scheme in this smooth but persistently under-resolved setting. Table~\ref{tab:convergence2} reports the \(L^2\)-density errors and convergence orders for \(A=0.75\) using the full entropy-residual-driven limiting strategy. For this problem the entropy residual remains positive on all meshes tested, meaning the entropy limiter is active in every element throughout the simulation. Despite this, the convergence rates are in full agreement with the \(N+1\) order expected of the baseline DG method, demonstrating that the limiter does not degrade the formal accuracy.

\begin{figure}[h]
    \centering
    \includegraphics[width=0.45\linewidth]{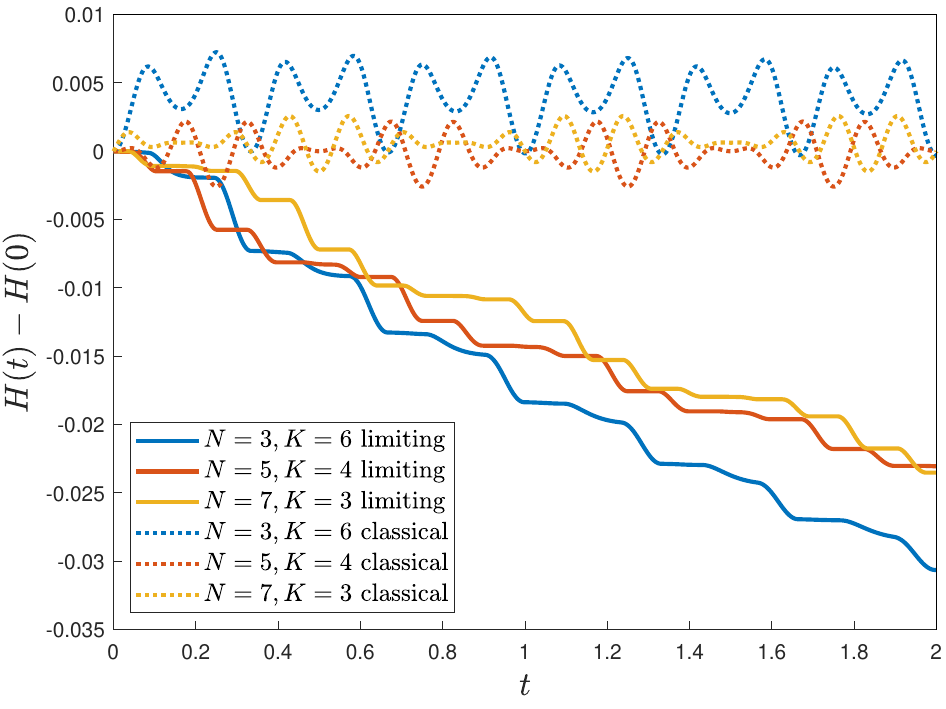}
    \caption{Total entropy evolution over \(t\in[0,2]\) for the \(A=0.99\) sine entropy wave using the subcell entropy-stable and positivity-preserving limiter with polynomial degree \(N\) and \(K\) elements. The subcell limited scheme ensures a monotonically decreasing entropy profile throughout the simulation.}
    \label{P3}
\end{figure}

We further assess the entropy-dissipation property by increasing the amplitude to \(A=0.99\), which causes the entropy limiter to activate more intensely. Figure~\ref{P3} shows the total entropy evolution over \(t\in[0,2]\) for several polynomial degrees and grid resolutions. In every case the entropy decreases monotonically, confirming that the subcell entropy-stable strategy enforces the entropy inequality at each time step, as guaranteed by the theory developed in Sections~3 and 4. Taken together, these three aspects, physical consistency, high-order convergence under limiting, and strict entropy dissipation, validate the proposed strategy for smooth nonlinear flows.

\subsubsection{Modified Sod Shock Tube}

\begin{figure}[h]
    \centering
    \subfigure[\(N=3,\;K=64\)]{\includegraphics[width=0.45\linewidth]{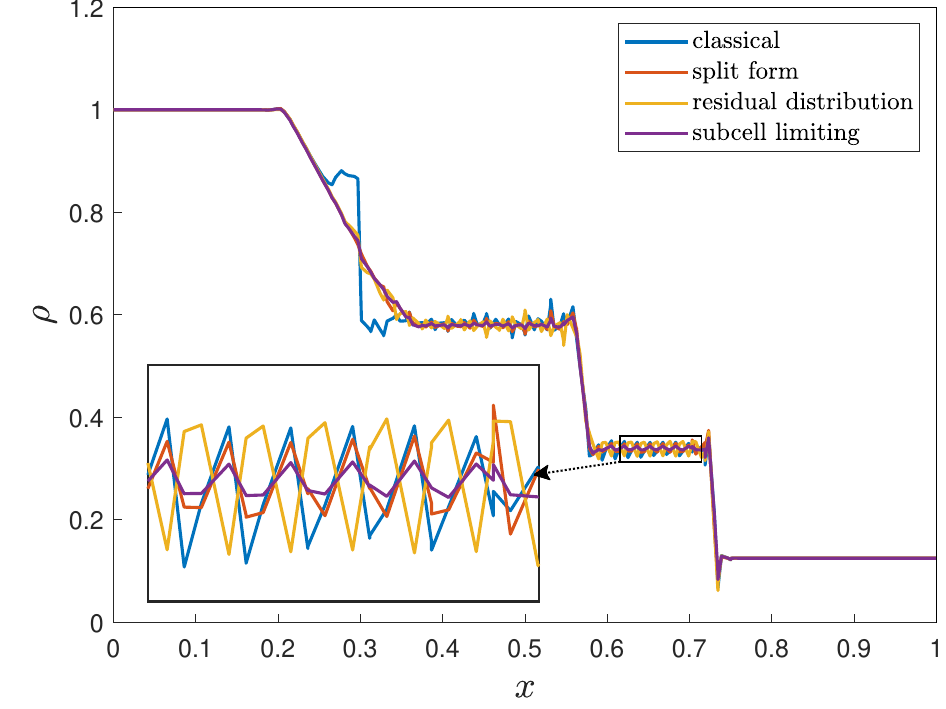}}
    \subfigure[\(N=7,\;K=32\)]{\includegraphics[width=0.45\linewidth]{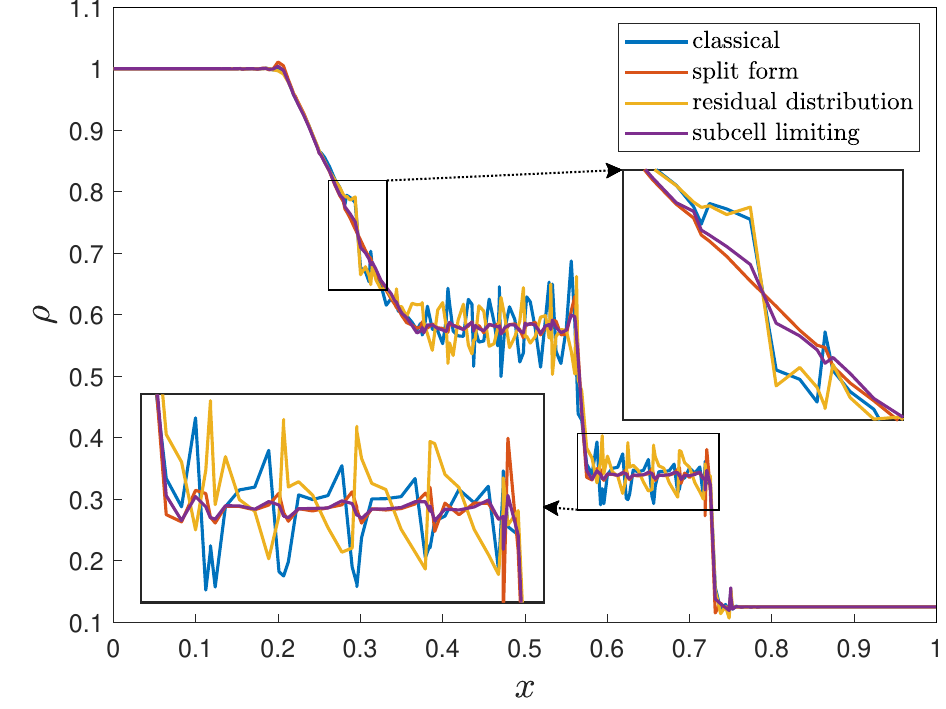}}
    \caption{Density profiles for the modified Sod shock tube at \(t=0.2\) using DG with polynomial order \(N\) and \(K\) elements (\(\beta=0.01\)). The proposed subcell limiting strategy captures the best subcell features among all tested right-hand side evaluations.}
    \label{P5}
\end{figure}

The Sod shock tube is a classical one-dimensional Riemann problem widely used to assess the ability of numerical methods to capture shocks, contact discontinuities, and rarefaction waves~\cite{SOD}. This test evaluates robustness in the presence of strong discontinuities. We employ a modified Sod setup on the domain \([0,1]\) with Dirichlet boundary conditions:
\begin{equation}
    (\rho,u,p) = 
\begin{cases}
(1,\;0.75,\;1),      & x < 0.3,\\[2pt]
(0.125,\;0,\;0.1),   & x \ge 0.3.
\end{cases}
\end{equation}
The right-hand side is evaluated using four schemes: classical DG, entropy-residual-driven split-form DG, entropy-residual-driven RD-based entropy correction, and the proposed subcell entropy-residual-driven limiting strategy. All formulations are safeguarded by the positivity-preserving limiter to prevent simulation failure.

Figure~\ref{P5} displays the density profiles at \(t=0.2\) for two representative configurations. Classical DG converges to an entropy-violating weak solution and produces pronounced post-shock oscillations that contaminate the entire plateau. The entropy-residual-driven RD-based entropy correction scheme performs reasonably at moderate polynomial degrees, but for \(N=7\) its element-interior states closely resemble the entropy-violating solution of classical DG. This behavior arises because the RD-based entropy correction is obtained by solely minimizing the \(L^2\)-norm of the corrective terms without considering the relative distribution of the entropy dissipation inside the element, which can lead to insufficient local dissipation despite global entropy compliance. In contrast, the proposed subcell limiting strategy yields a density profile that is very close to that of the entropy-residual-driven split-form DG, yet with noticeably smaller local oscillations near the shock. Split-form DG is a well-established method known for its ability to suppress aliasing errors, and the subcell limited scheme can be viewed as its diagonal, locally stable approximation. Moreover, the subcell strategy distributes entropy dissipation slightly more favorably at the subcell level, which contributes to the cleaner resolution of the shock, contact discontinuity, and rarefaction wave. This comparison confirms that the entropy-stable subcell limiter effectively suppresses Gibbs oscillations in a genuinely nonlinear discontinuity without degrading shock resolution, and that the positivity-preserving limiter maintains physically admissible density and pressure even in severely under-resolved conditions.

\subsubsection{Shu-Osher Problem}

\begin{figure}[h]
    \centering
    \subfigure[\(N=3,\;K=128\)]{\includegraphics[width=0.45\linewidth]{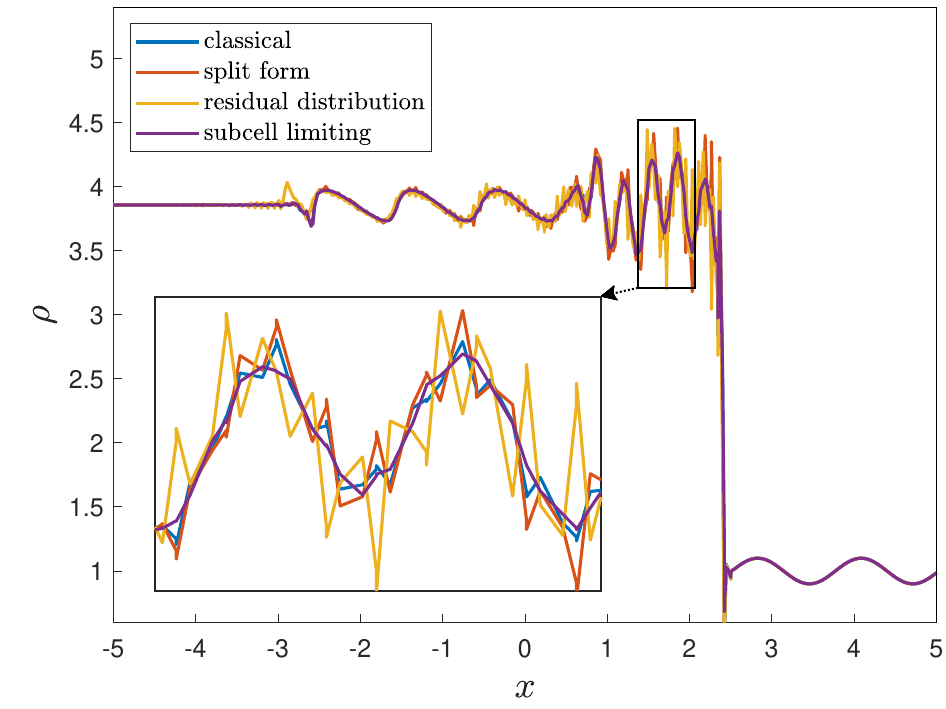}}
    \subfigure[\(N=7,\;K=32\)]{\includegraphics[width=0.45\linewidth]{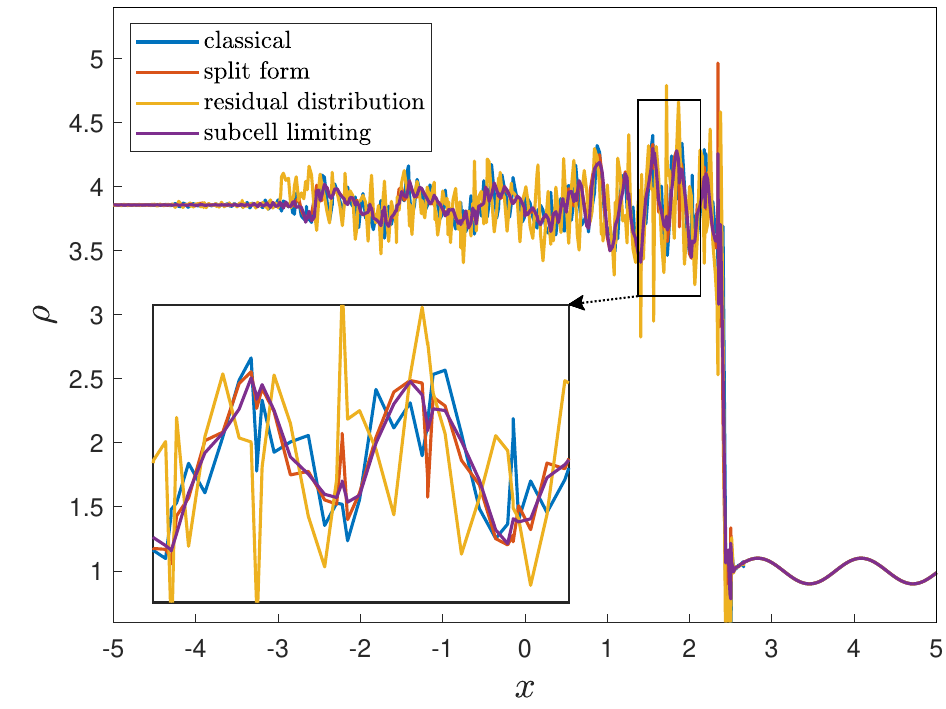}}
    \caption{Density profiles for the Shu-Osher problem at \(t=1.8\) using DG with polynomial order \(N\) and \(K\) elements. The proposed subcell limiting strategy captures the finest subcell features among all tested schemes.}
    \label{P6}
\end{figure}

The Shu-Osher problem describes the interaction of a strong shock with a high-frequency entropy wave, a benchmark that tests the ability of a scheme to capture small-scale post-shock structures without generating spurious oscillations~\cite{SE}. This problem is known to be particularly challenging for entropy-stable formulations, which, in some configurations, can produce results inferior to those of classical DG. We solve the problem on the domain \([-5,5]\) with Dirichlet boundary conditions and the initial condition
\begin{equation}
    (\rho,u,p) = 
\begin{cases}
(3.857143,\;2.629369,\;10.333333), & x < -4,\\[2pt]
(1 + 0.2\sin(5 x),\;0,\;1),        & x \ge -4.
\end{cases}
\end{equation}

Figure~\ref{P6} compares the density profiles at \(t=1.8\) obtained with classical DG, the entropy-residual-driven RD-based entropy correction method, the entropy-residual-driven split-form DG, and the proposed subcell limiting strategy for two representative discretizations (\(\beta=0.01\)). The RD-based entropy correction method exhibits pronounced numerical oscillations that severely degrade the post-shock wave pattern. The entropy-residual-driven split-form DG suppresses most of these oscillations, and the proposed subcell limiter delivers a nearly identical overall profile but with noticeably smaller local oscillations near the shock. This close agreement further corroborates the discussion of the previous subsection: the subcell limited strategy acts as a diagonal, locally stable approximation of split-form DG, and the subcell-level entropy dissipation distribution is slightly more favorable, leading to cleaner local features.

\begin{figure}[h]
    \centering
    \subfigure[\(N=3,\;K=128\)]{\includegraphics[width=0.45\linewidth]{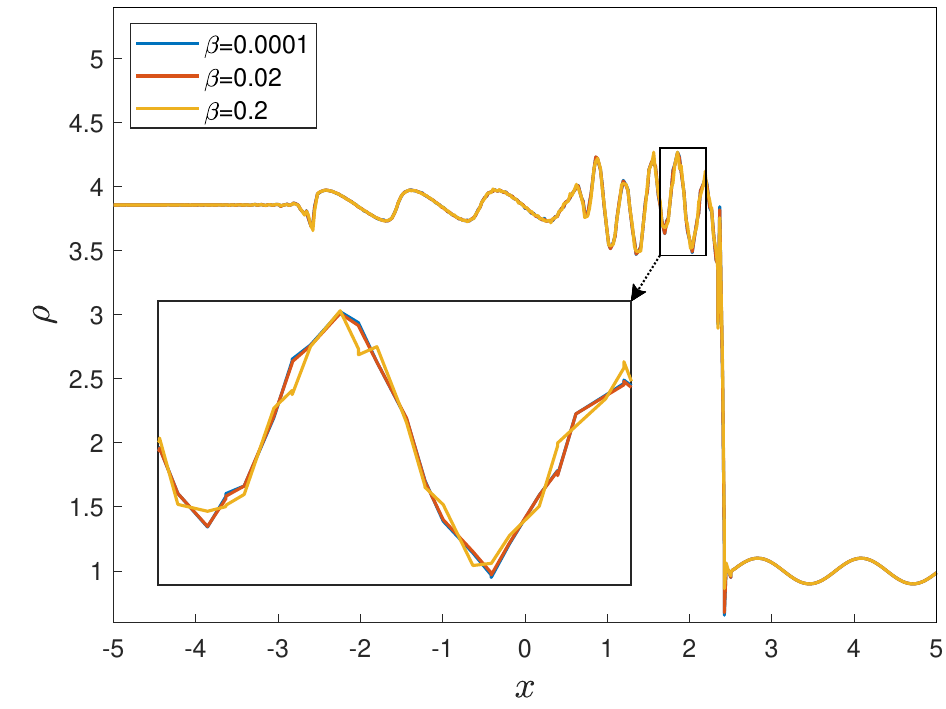}}
    \subfigure[\(N=7,\;K=64\)]{\includegraphics[width=0.45\linewidth]{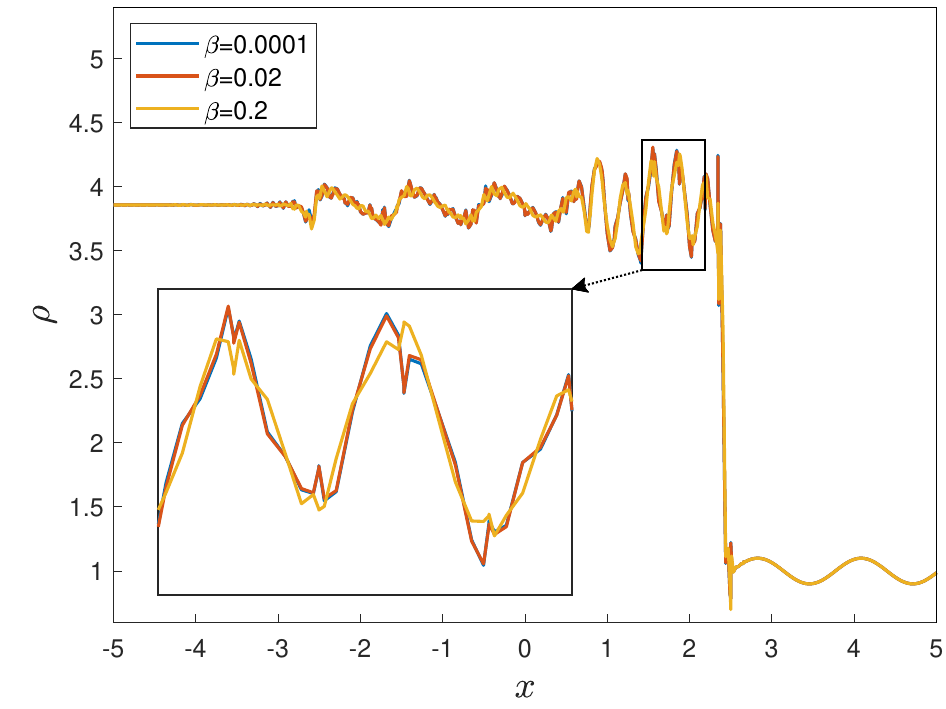}}
    \caption{Density profiles for the Shu-Osher problem at \(t=1.8\) using the subcell limited DG with polynomial order \(N\) and \(K\) elements, evaluated at different positivity-preserving limiter thresholds \(\beta\). The numerical solution is insensitive to the choice of \(\beta\).}
    \label{P7}
\end{figure}

Figure~\ref{P7} examines the sensitivity of the positivity-preserving subcell limiter to the threshold parameter \(\beta\). For \(\beta = 0.0001\) and \(\beta = 0.02\) the density profiles are practically indistinguishable, and even increasing \(\beta\) to \(0.2\) produces no significant degradation of the solution. This demonstrates that the subcell refined positivity-preserving strategy is remarkably parameter-insensitive. The limiter activates only where necessary and avoids excessive dissipation across a wide range of \(\beta\), confirming its robustness in under-resolved shock-wave interactions. A more complete discussion of the sensitivity in two-dimensional test cases will be presented later.

\subsection{2D Euler Equations}
When extending the limiting strategy to two-dimensional cases, the jump operator in \eqref{Eulerdu} is generalized by replacing the scalar velocity with its vector counterpart:
\begin{equation}
    \delta^{\rho}=[\![\rho]\!],\quad
\delta^{\rho \vec{u}}=[\![\rho \vec{u}]\!],\quad
\delta^{E}=\frac{\overline{\rho/p}}{(\gamma-1)(\rho/p)^{\ln}}[\![p]\!]+\frac{\vec{u}_{i}\cdot \vec{u}_{i+1}}{2}[\![\rho]\!]+\bar{\rho}\,\bar{\vec{u}}\cdot[\![\vec{u}]\!].
\end{equation}
The entropy-limiting procedure is applied dimension by dimension. The residual is projected onto each reference coordinate direction, and a one-dimensional entropy limiter is invoked independently along each direction. For a spectral element of order \(N\), this requires \(2(N+1)\) one-dimensional limiting operations per element.


In all two-dimensional simulations, the entropy-conserving portion of the numerical flux follows the formulation of~\cite{Flux2013}, while the dissipative portion is supplied by the proposed subcell limiter acting as a scalar dissipation. The time step satisfies the two-dimensional CFL condition
\begin{equation}
    \Delta t \leq \frac{\mathrm{CFL}_{\text{2D}} \cdot \min(\Delta x, \Delta y)}{\lambda_{\max} (N+1)^2/2},
\end{equation}
with \(\mathrm{CFL}_{\text{2D}} = 0.5\) in all reported computations.

\subsubsection{Kelvin-Helmholtz Instability}
\begin{figure}[h]
    \centering
    \subfigure{\includegraphics[width=0.5\linewidth]{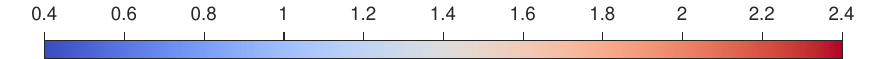}}\\
    \setcounter{subfigure}{0}
    \subfigure[\(N\)=3, \(K\)=128, \(t\)=3.7]{\includegraphics[width=0.3\linewidth]{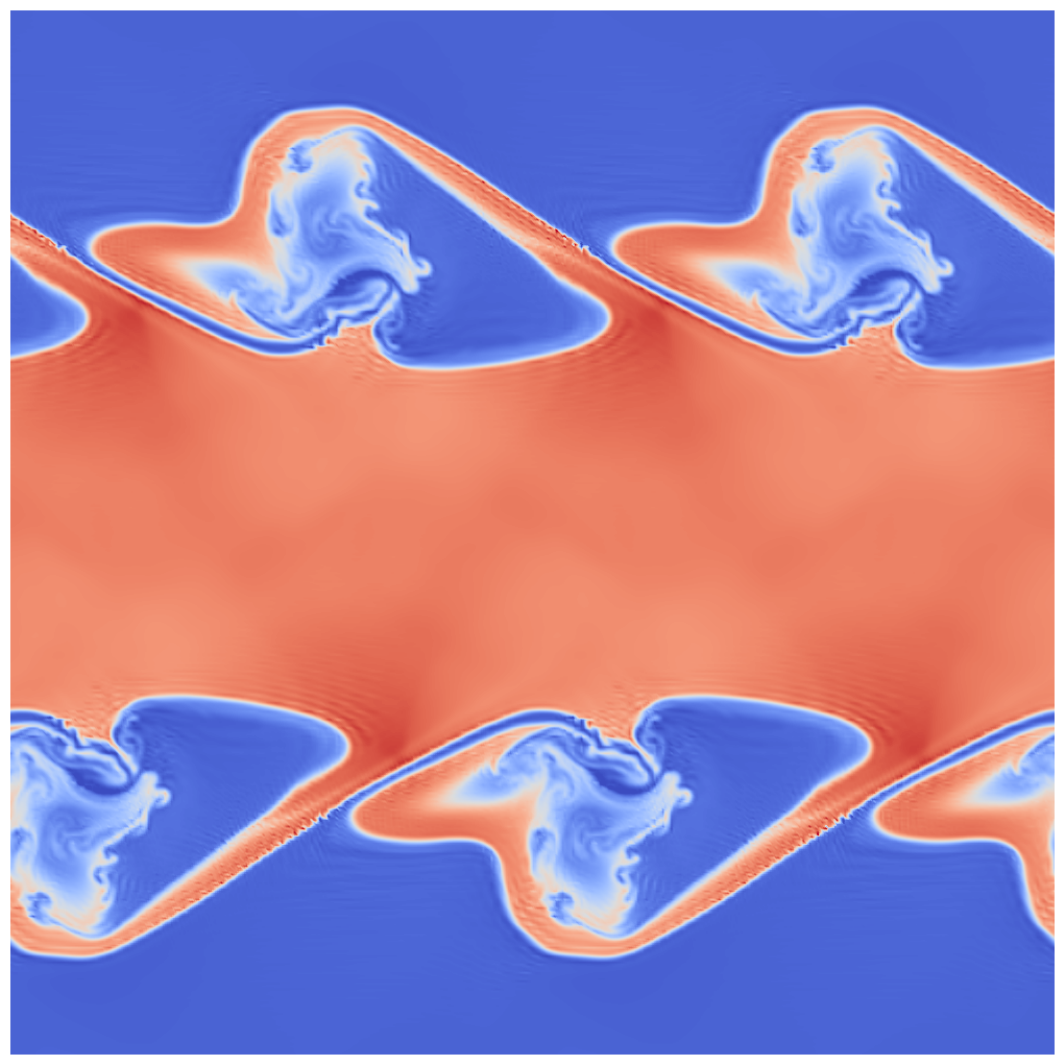}}
    \subfigure[\(N\)=3, \(K\)=128, \(t\)=6.8]{\includegraphics[width=0.3\linewidth]{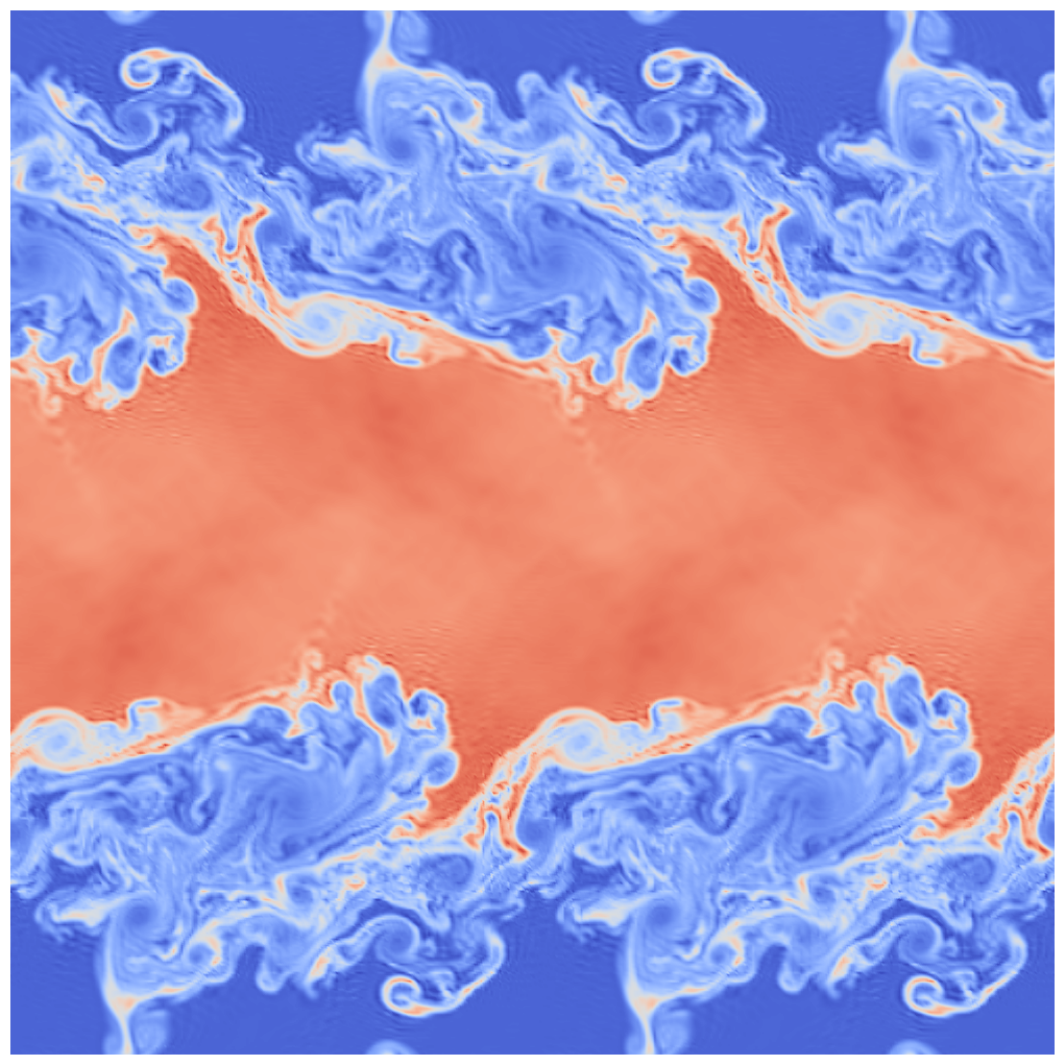}}
    \subfigure[\(N\)=3, \(K\)=128, \(t\)=10]{\includegraphics[width=0.3\linewidth]{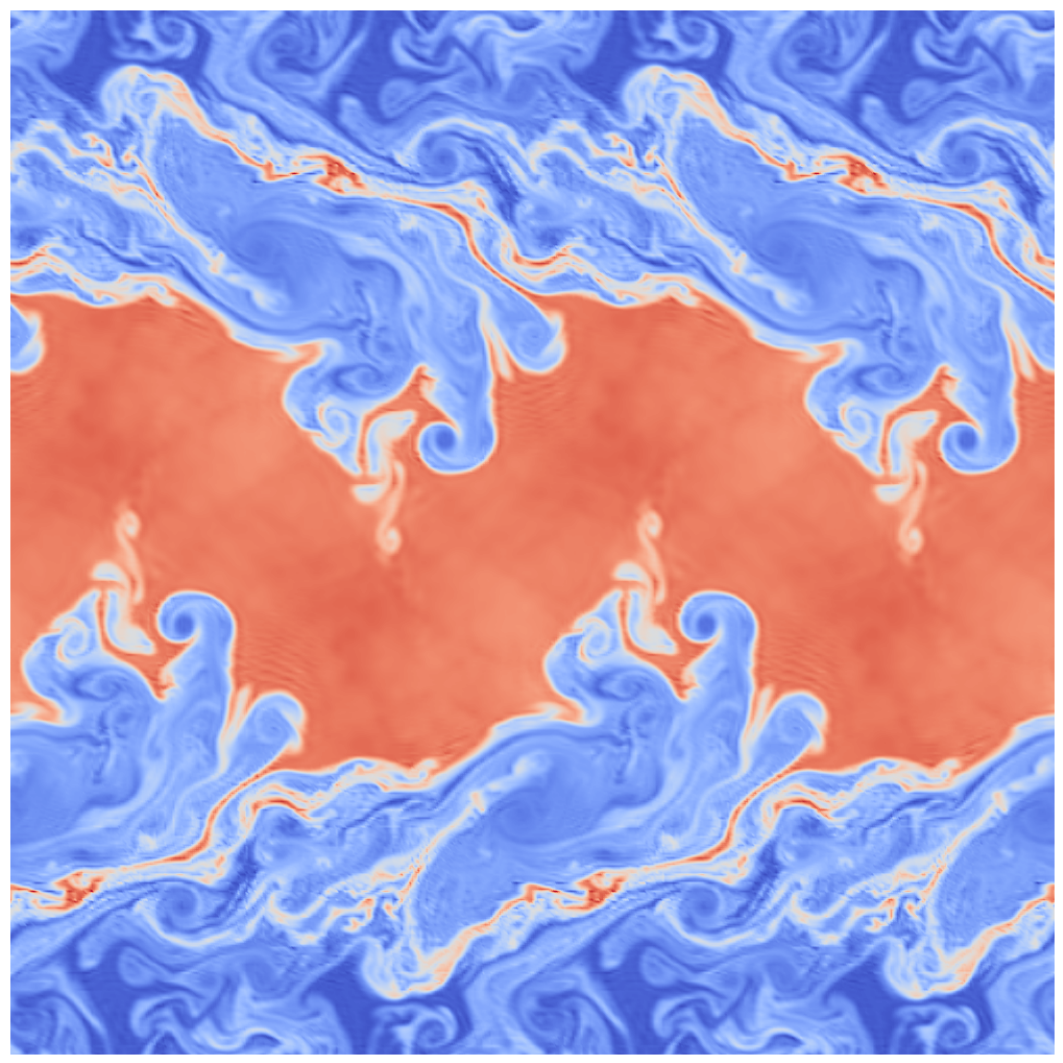}}
    \subfigure[\(N\)=7, \(K\)=64, \(t\)=3.7]{\includegraphics[width=0.3\linewidth]{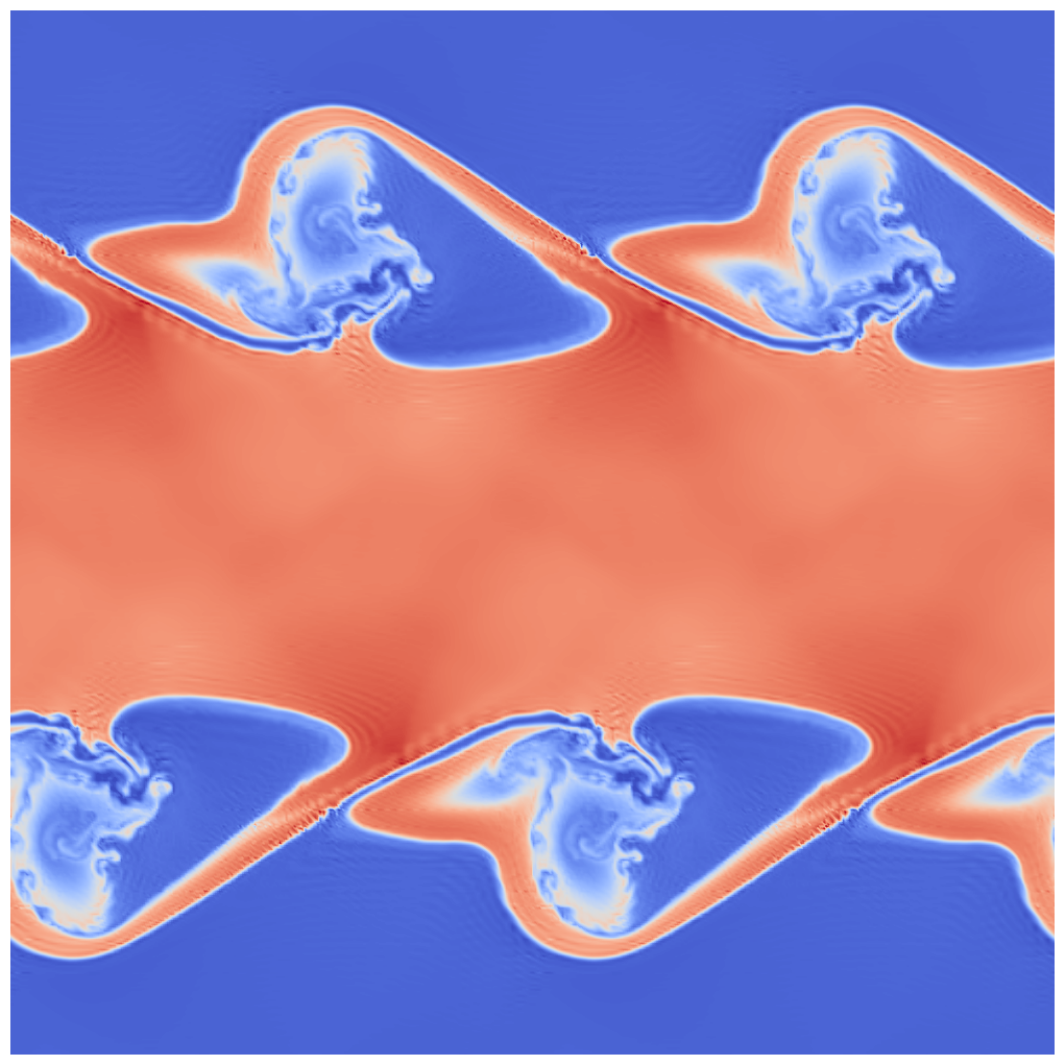}}
    \subfigure[\(N\)=7, \(K\)=64, \(t\)=6.8]{\includegraphics[width=0.3\linewidth]{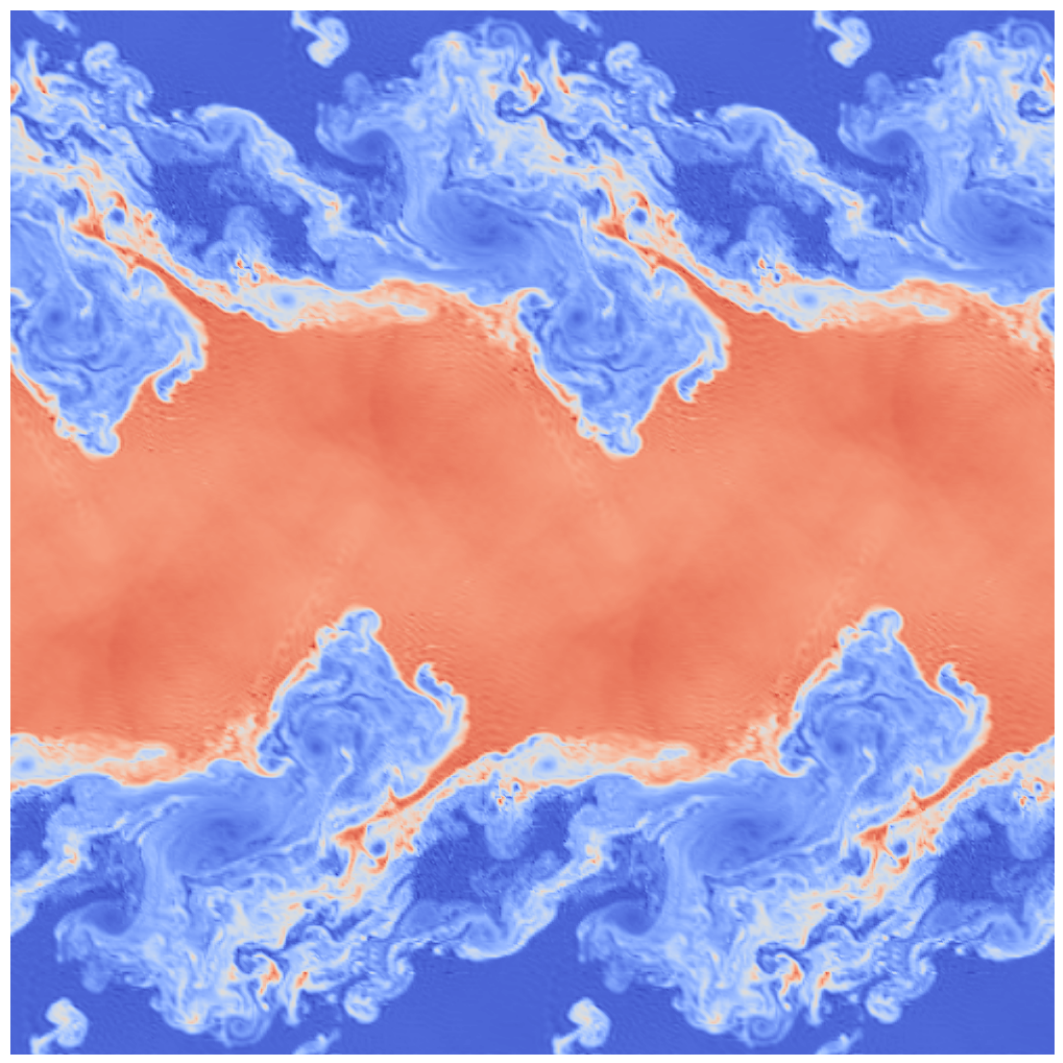}}
    \subfigure[\(N\)=7, \(K\)=64, \(t\)=10]{\includegraphics[width=0.3\linewidth]{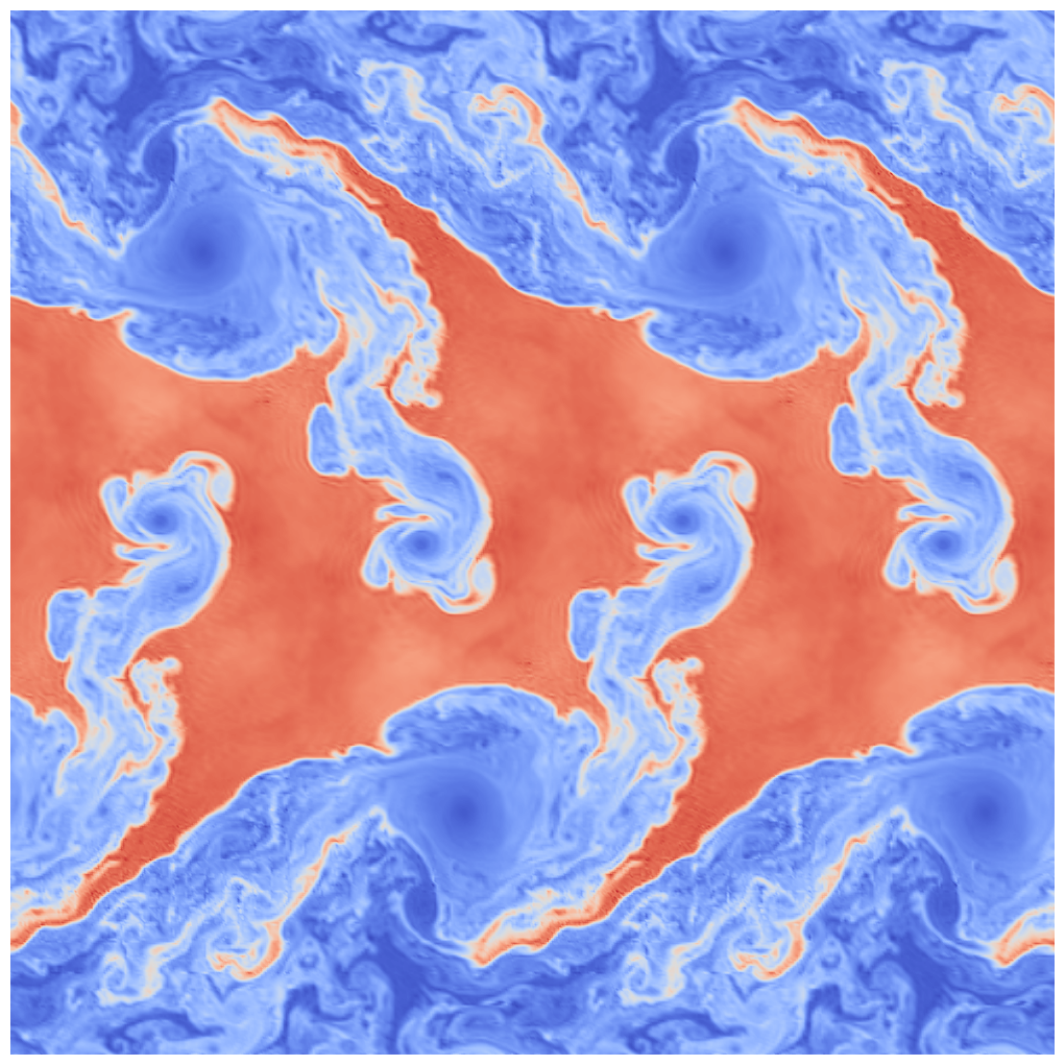}}
    \caption{Density fields of the Kelvin-Helmholtz instability using \(N\)-th order DG, with \(K\) elements in each direction and limiting threshold \(\beta=0.005\). The one-dimensional resolution is kept at 512 degrees of freedom in both cases.}
    \label{P8}
\end{figure}
The Kelvin-Helmholtz instability (KHI) is a fundamental shear-driven instability that develops when a velocity difference exists across an interface or within a continuously stratified fluid. Small perturbations at the interface grow exponentially, roll up into coherent vortices, and ultimately break down into turbulent flow, significantly enhancing mixing. Simulating KHI places severe demands on numerical methods because the relevant physical processes range from disturbance growth and vortex formation to the final transition to turbulence. In the inviscid Euler equations (\(Re = \infty\)) the instability is theoretically singular, requiring infinite resolution for a fully converged solution. Consequently, KHI remains effectively under-resolved at any practical mesh resolution, making it an ideal benchmark for assessing the robustness of high-order schemes in under-resolved flows. The initial conditions are taken from~\cite{KHI}:
\begin{equation}
    \begin{aligned}
B &= \tanh(15y+7.5) - \tanh(15y-7.5),\\
\rho &= 0.5 + 0.75B,\quad p = 1,\quad u = 0.5(B-1),\quad v = 0.1\sin(2\pi x),
\end{aligned}
\end{equation}
with periodic boundary conditions on the square domain \(x,y \in [-1,1]^2\).

Figure~\ref{P8} shows the temporal evolution of the density field at a fixed resolution of 512 degrees of freedom per spatial direction, using polynomial orders \(N=3\) and \(N=7\). The computation remains stable throughout the entire long-time integration, demonstrating that the proposed limiting strategy is suitable for very high order computations and performs well even in severely under-resolved regions. At \(t=3.7\) the two solutions are nearly identical. By \(t=6.8\) the main roll-up structures are still similar, but differences in the spiral arms start to become visible. At \(t=10\) the two fields have evolved into fundamentally different flow patterns, with the higher-order solution developing finer secondary vortices. This confirms that, at equal degrees of freedom, increasing the polynomial order captures more detailed vortical structures.

\begin{figure}[h]
    \centering
    \subfigure{\includegraphics[width=0.5\linewidth]{Colorbar1.pdf}}\\
    \setcounter{subfigure}{0}
    \subfigure[\(\beta=0.0001\)]{\includegraphics[width=0.3\linewidth]{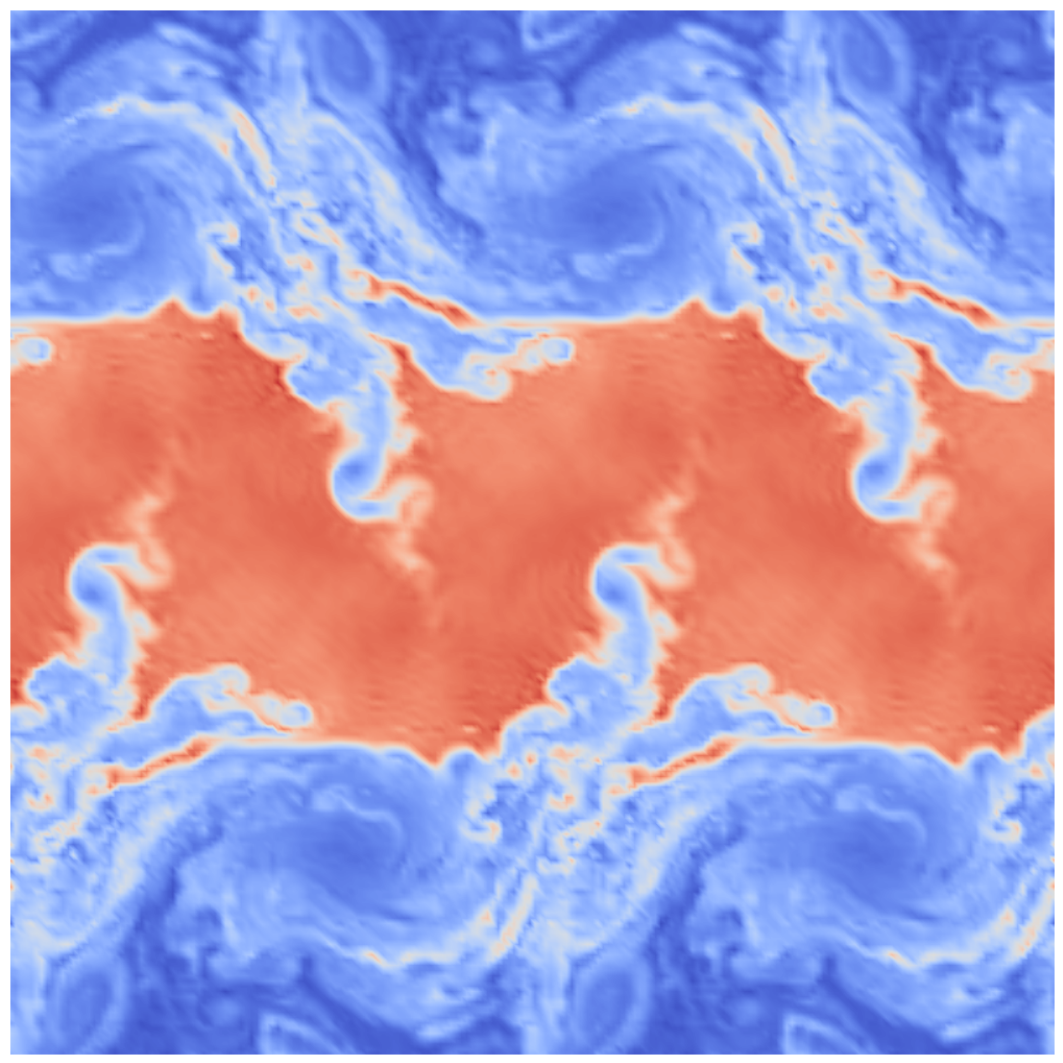}}
    \subfigure[\(\beta=0.001\)]{\includegraphics[width=0.3\linewidth]{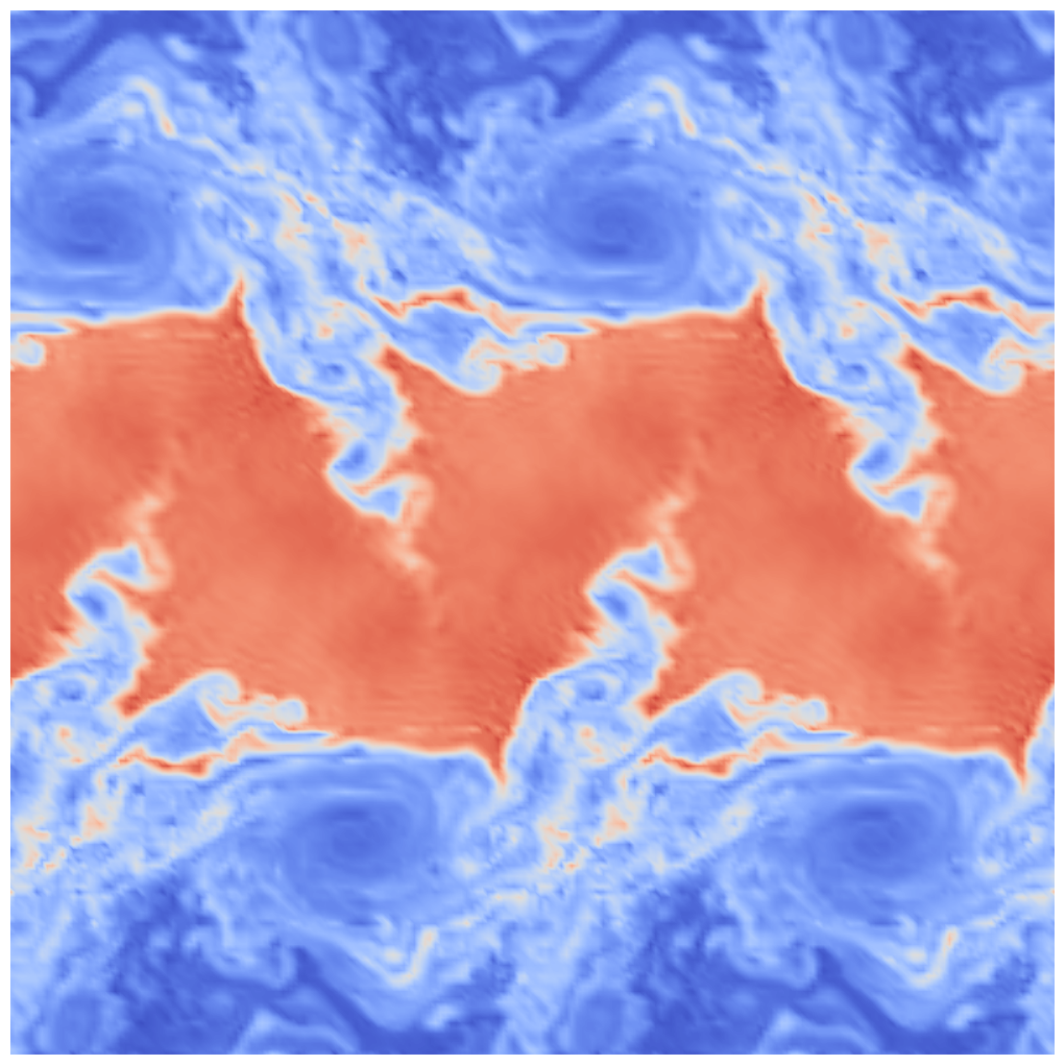}}
    \subfigure[\(\beta=0.005\)]{\includegraphics[width=0.3\linewidth]{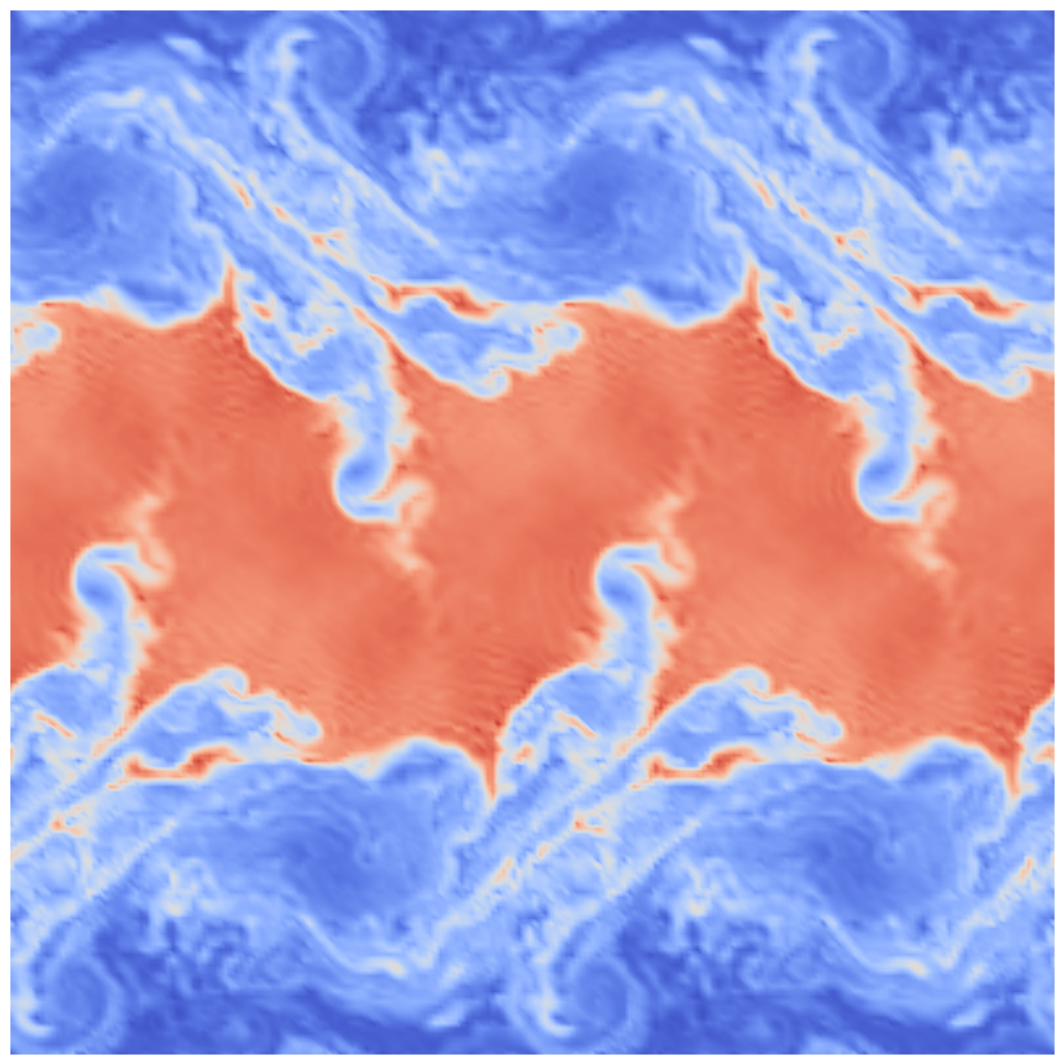}}
    \subfigure[\(\beta=0.02\)]{\includegraphics[width=0.3\linewidth]{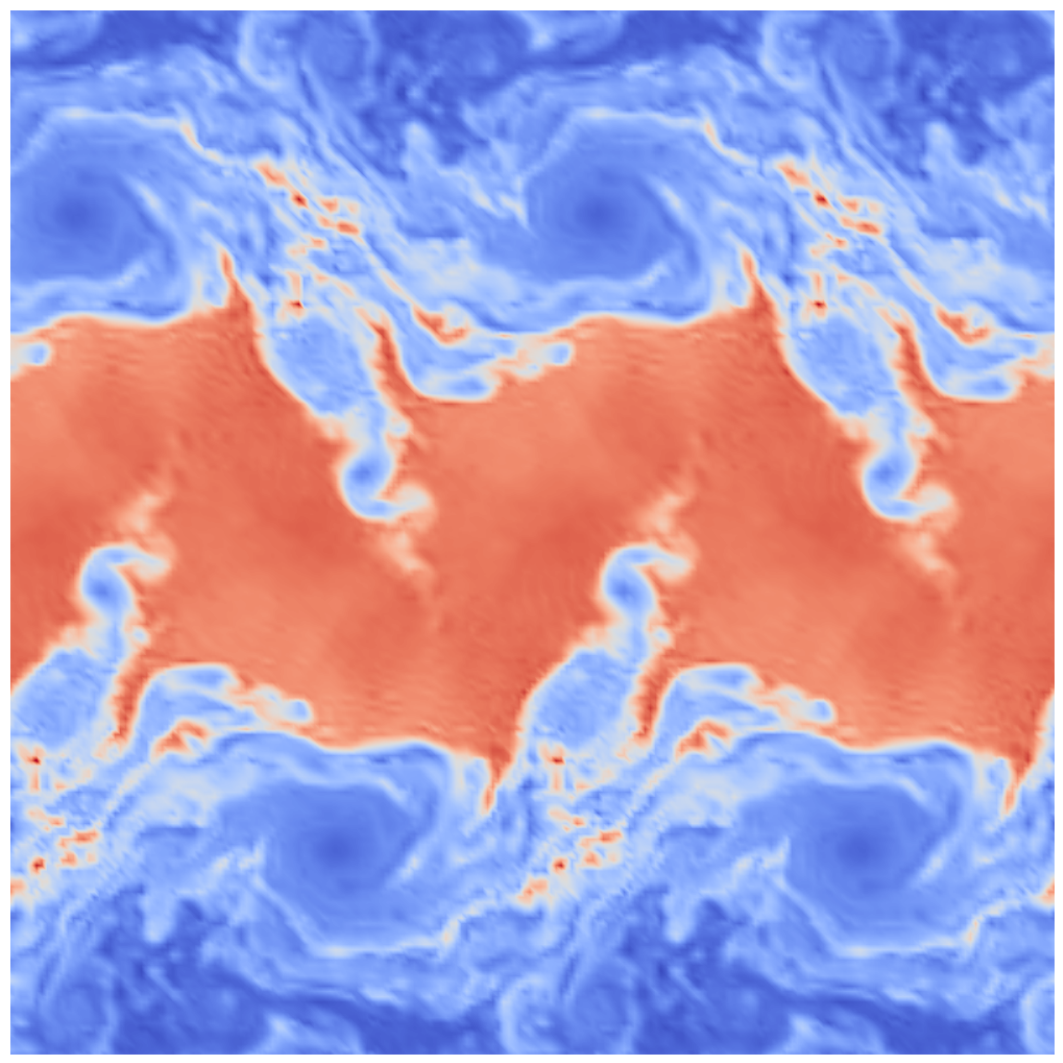}}
    \subfigure[\(\beta=0.05\)]{\includegraphics[width=0.3\linewidth]{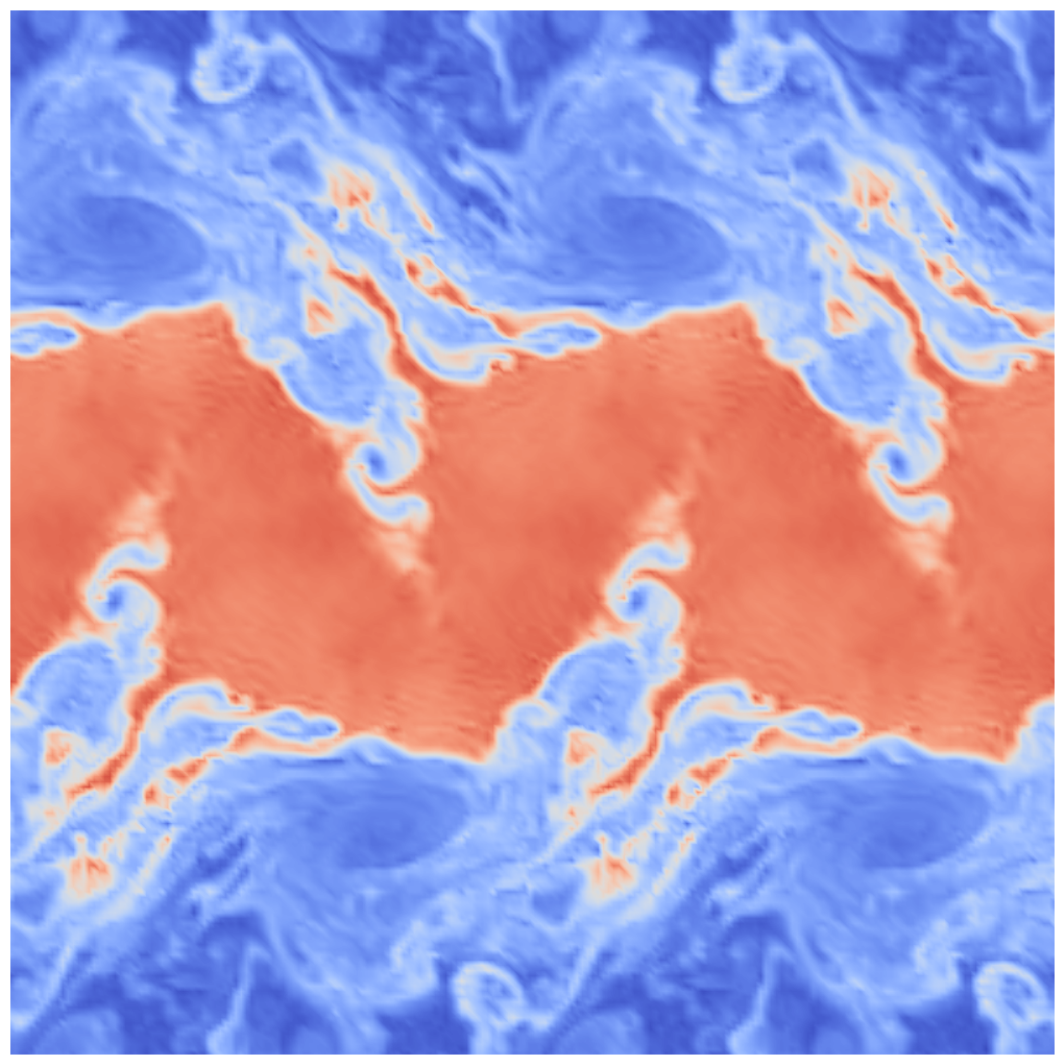}}
    \subfigure[\(\beta=0.1\)]{\includegraphics[width=0.3\linewidth]{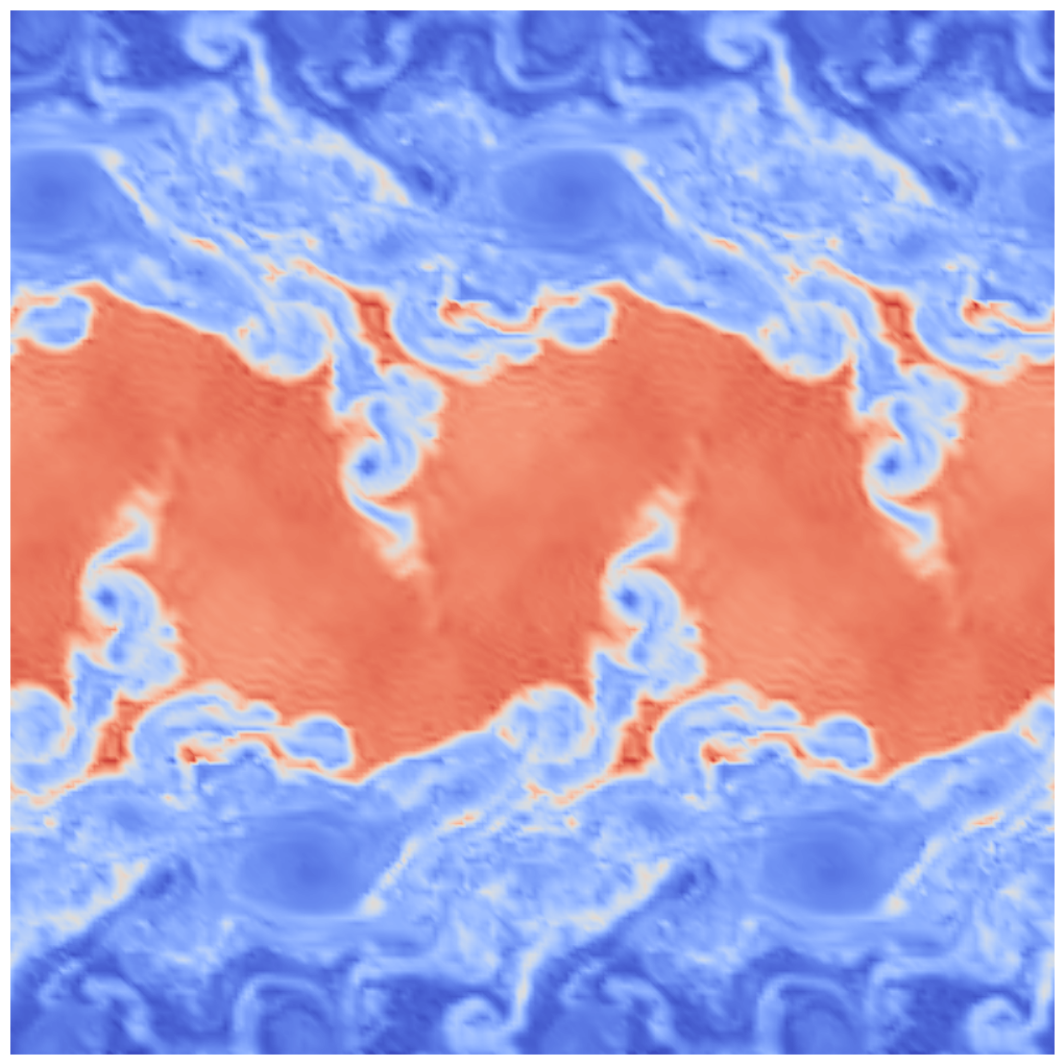}}
    \caption{Density fields of the Kelvin-Helmholtz instability at \(t=10\) using \(N=7\) DG with \(K_{1D}=32\), under different limiting thresholds \(\beta\).}
    \label{p9}
\end{figure}

Figure~\ref{p9} examines the sensitivity of the long-time KHI evolution to the threshold parameter \(\beta\), ranging from \(0.0001\) to \(0.1\). The global flow structure remains remarkably consistent across three orders of magnitude in \(\beta\), demonstrating that the positivity-preserving procedure is essentially parameter insensitive in this long-time under-resolved simulation. Only minor variations appear in the finest secondary filaments. This robustness originates from the subcell nature of the limiter: it activates only in a very small fraction of cells where the solution approaches inadmissible values, and even when triggered, the correction magnitude remains modest.

\begin{figure}[h]
    \centering
    \subfigure{\includegraphics[width=0.5\linewidth]{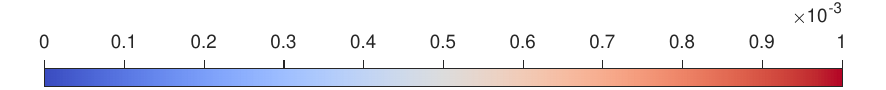}}\\
    \setcounter{subfigure}{0}
    \subfigure[\(N=3,K_{1D}=64\)]{\includegraphics[width=0.24\linewidth]{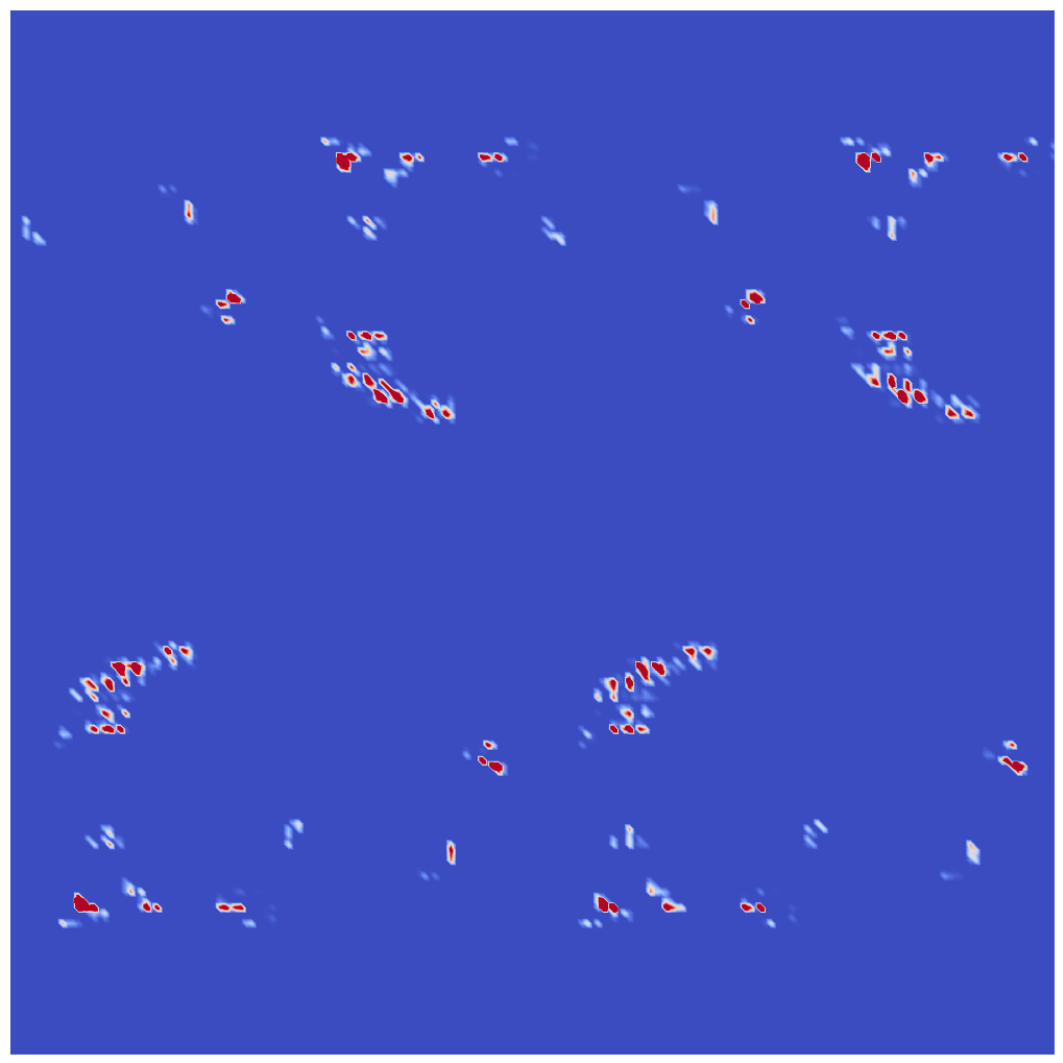}}
    \subfigure[\(N=7,K_{1D}=32\)]{\includegraphics[width=0.24\linewidth]{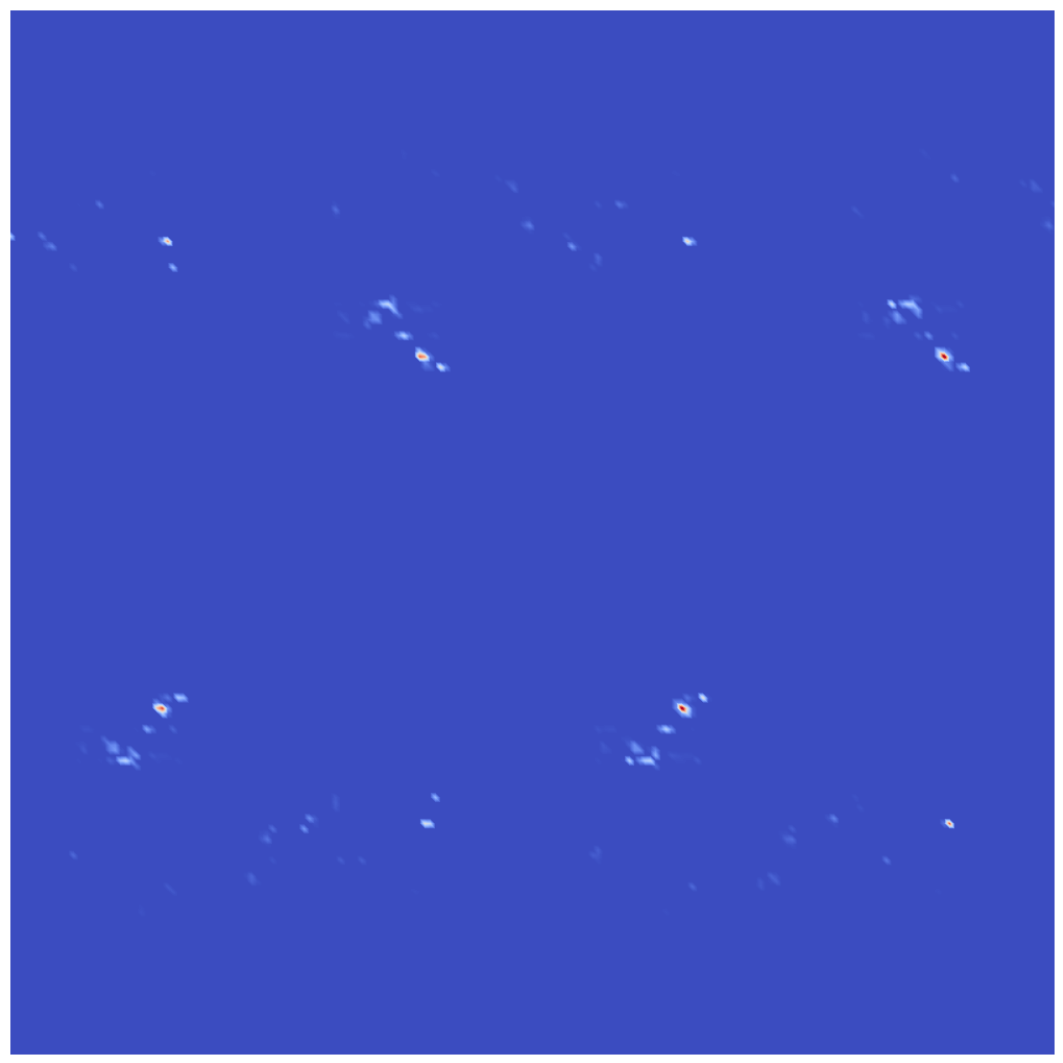}}
    \subfigure[\(N=3,K_{1D}=128\)]{\includegraphics[width=0.24\linewidth]{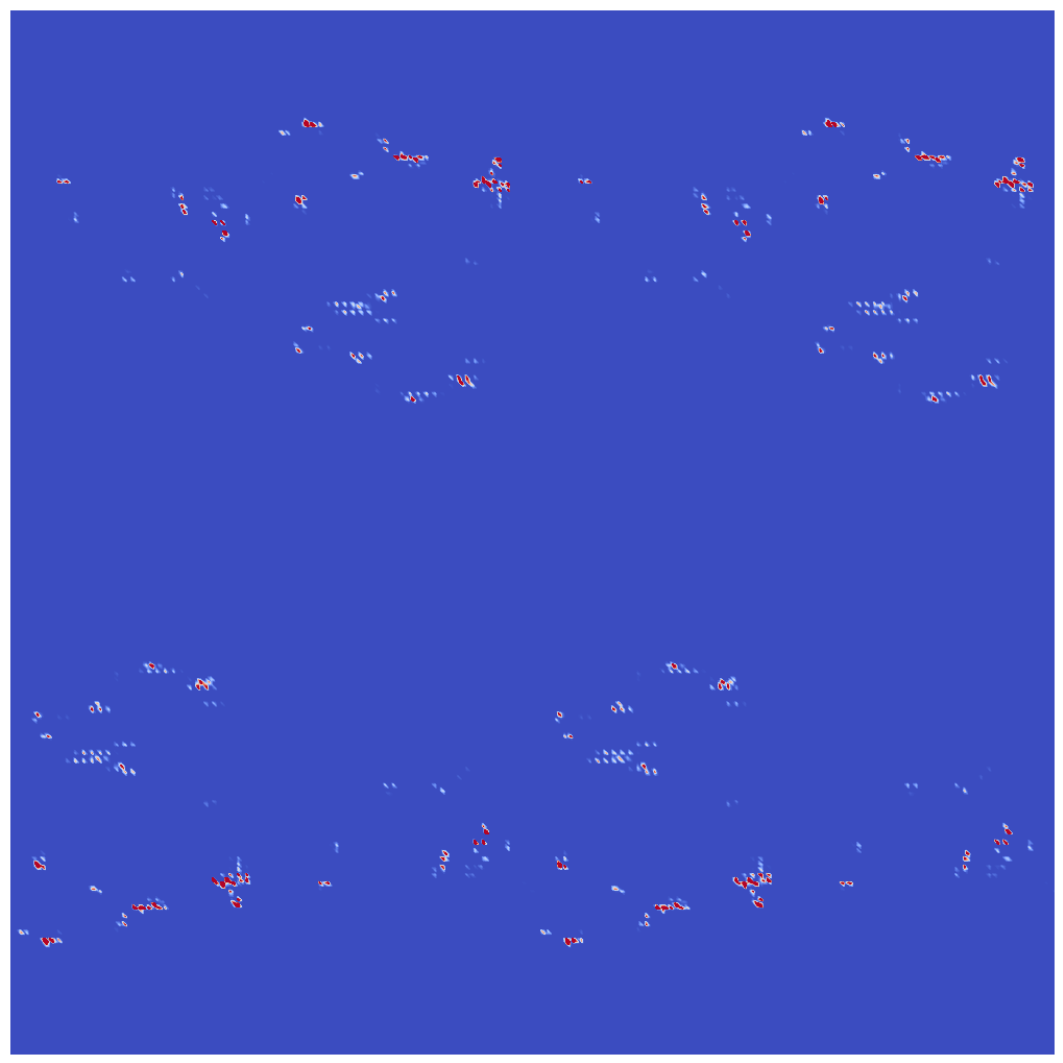}}
    \subfigure[\(N=7,K_{1D}=64\)]{\includegraphics[width=0.24\linewidth]{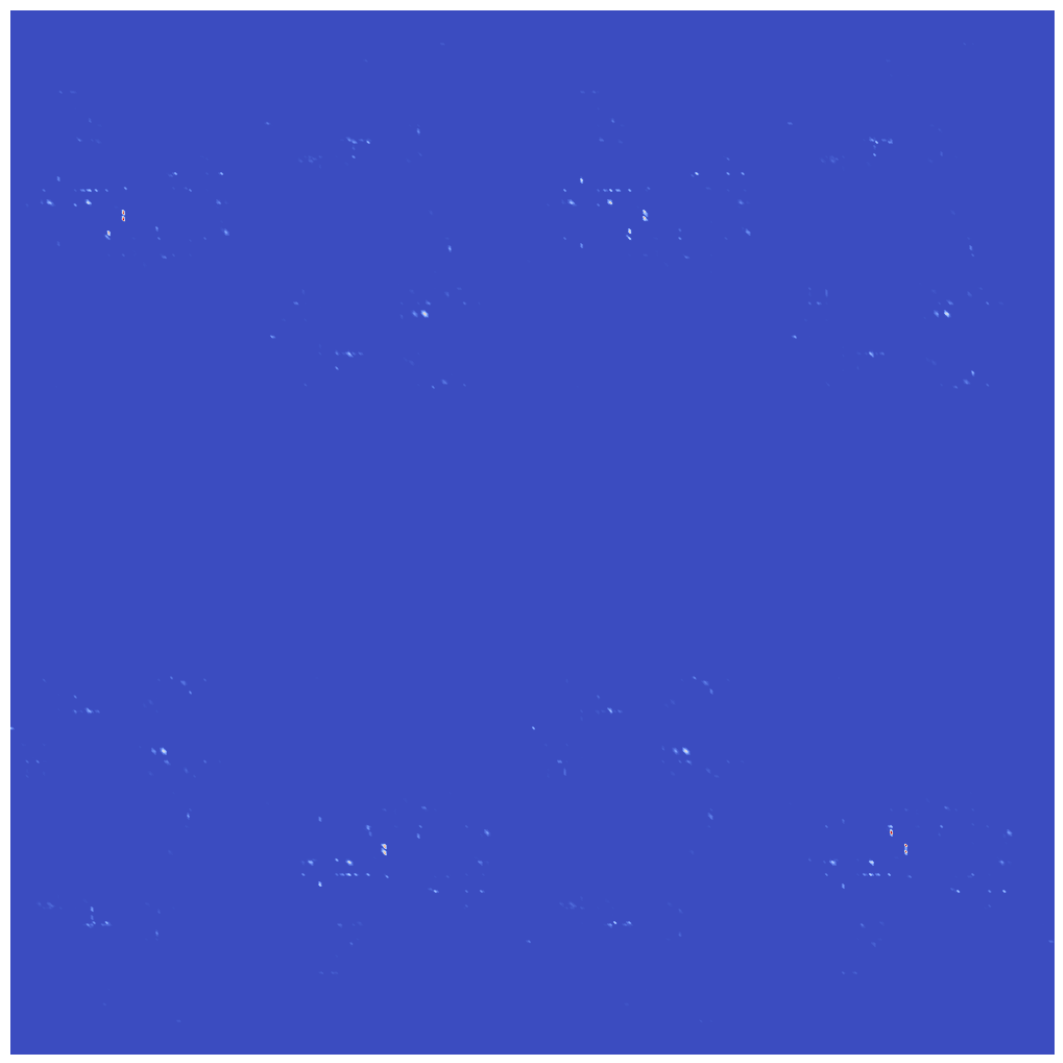}}
    \caption{Cumulative sum of the positivity-preserving smoothing coefficients \(\sum_t \widetilde{\alpha}\) over the full simulation interval \(t\in[0,10]\), using \(N\)-th order DG with limiting threshold \(\beta=0.005\).}
    \label{p10}
\end{figure}

Figure~\ref{p10} quantifies the overall activity of the positivity limiter by displaying the cumulative sum of all smoothing coefficients \(\sum_t \widetilde{\alpha}\) accumulated over the entire simulation \(t\in[0,10]\). The cumulative coefficient is everywhere extremely small. Moreover, as resolution and polynomial order increase, the accumulated correction decreases further, confirming that the limiter usage does not grow with higher order at fixed degrees of freedom. The corresponding density fields presented earlier further attest that the method preserves the expected high-order resolution characteristics of DG while introducing only the minimal necessary dissipation. Taken together with the parameter insensitivity observed in Figure~\ref{p9}, these results confirm that the entropy-stable positivity-preserving strategy effectively balances robustness and accuracy in long-time under-resolved turbulent simulations.

\subsubsection{Sedov Blast Wave}
\begin{figure}[h]
    \centering
    \subfigure{\includegraphics[width=0.5\linewidth]{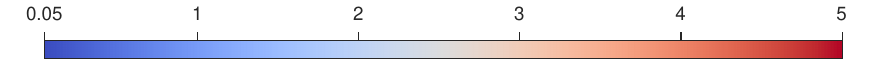}}\\
    \setcounter{subfigure}{0}
    \subfigure[\(N=3,\,\beta=0.005\)]{\includegraphics[width=0.24\linewidth]{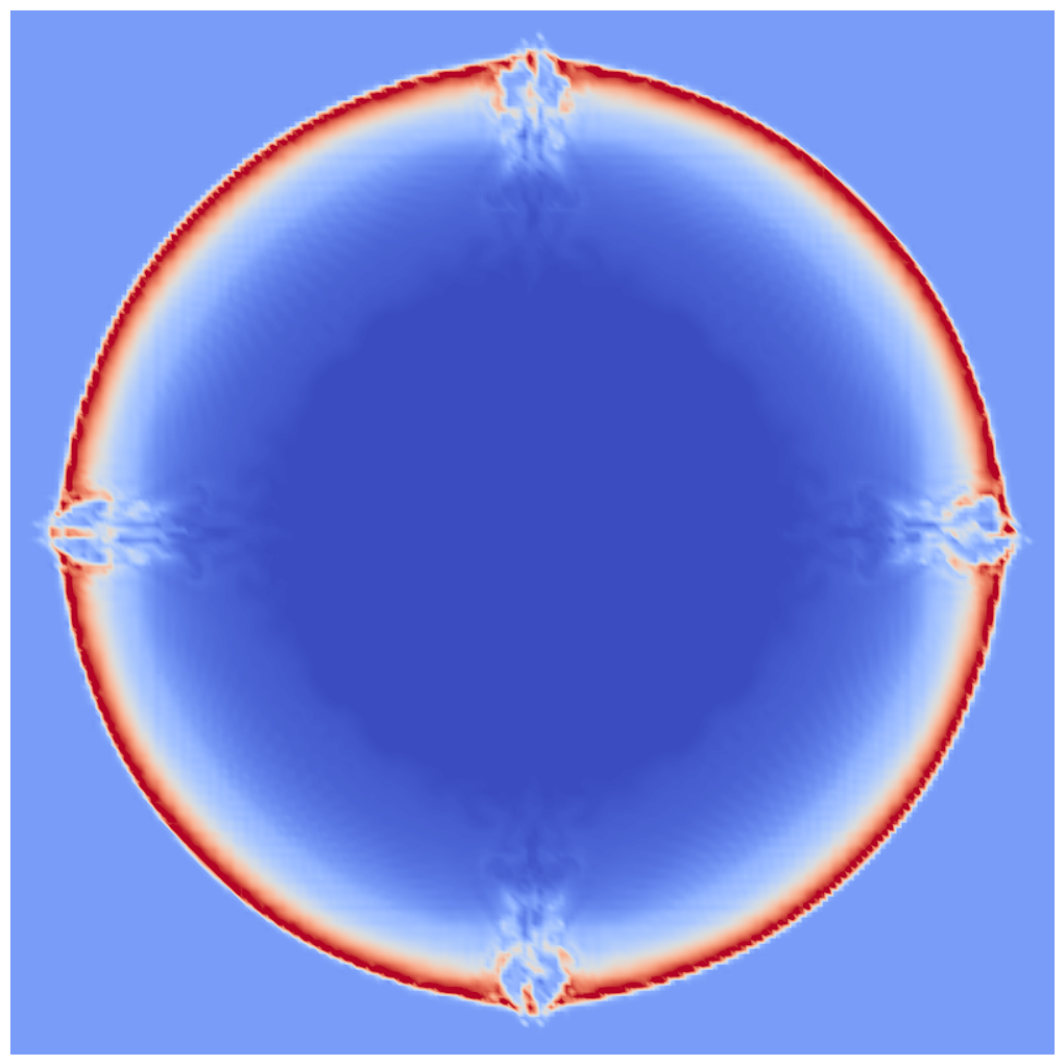}}
    \subfigure[\(N=3,\,\beta=0.05\)]{\includegraphics[width=0.24\linewidth]{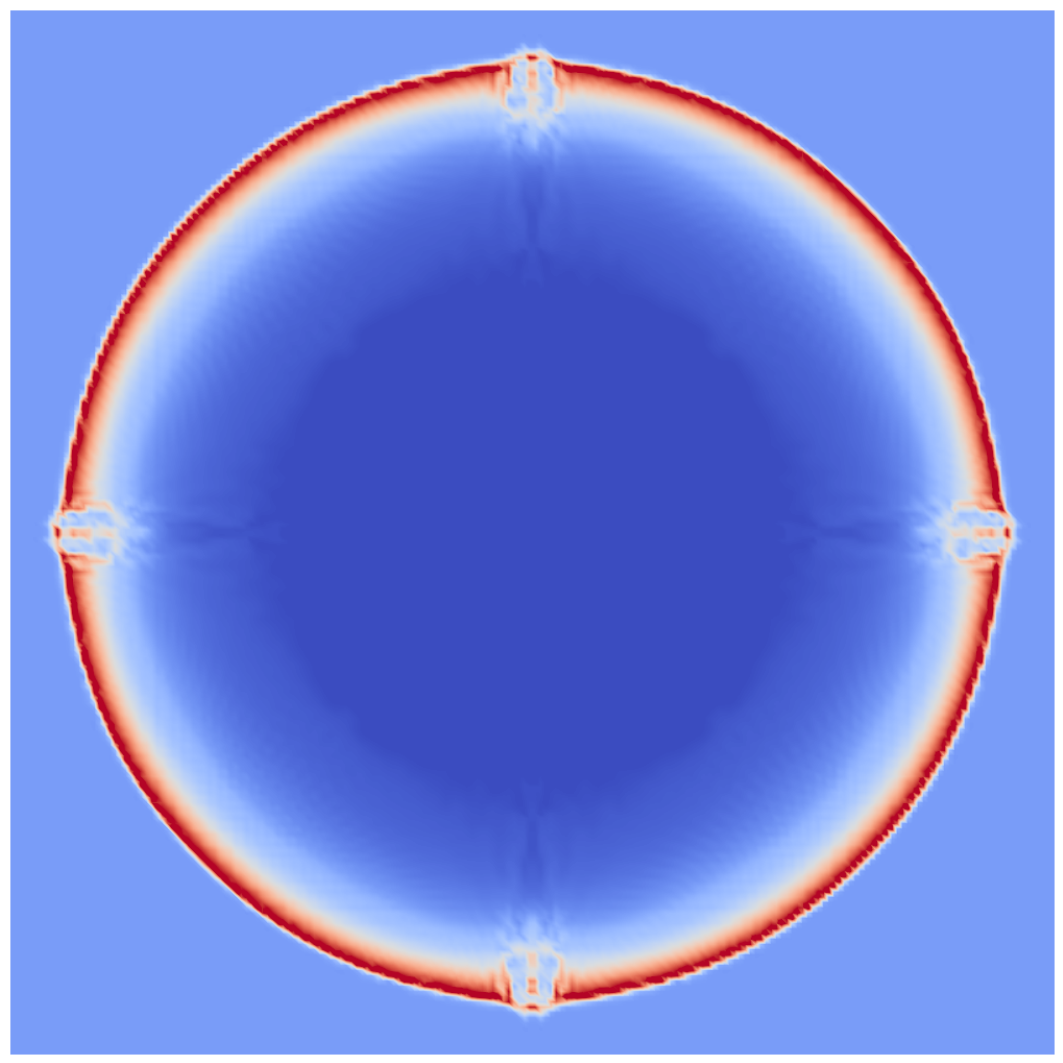}}
    \subfigure[\(N=7,\,\beta=0.005\)]{\includegraphics[width=0.24\linewidth]{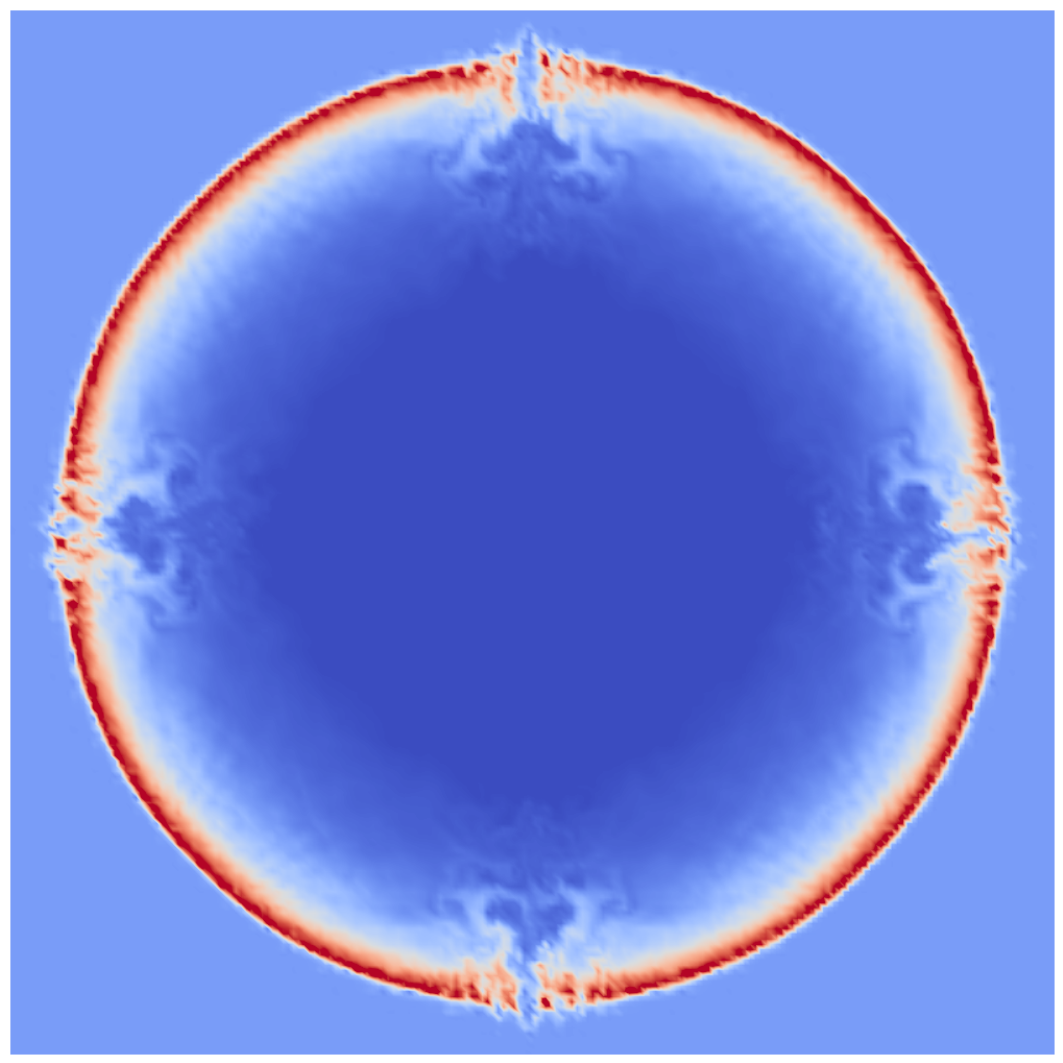}}
    \subfigure[\(N=7,\,\beta=0.05\)]{\includegraphics[width=0.24\linewidth]{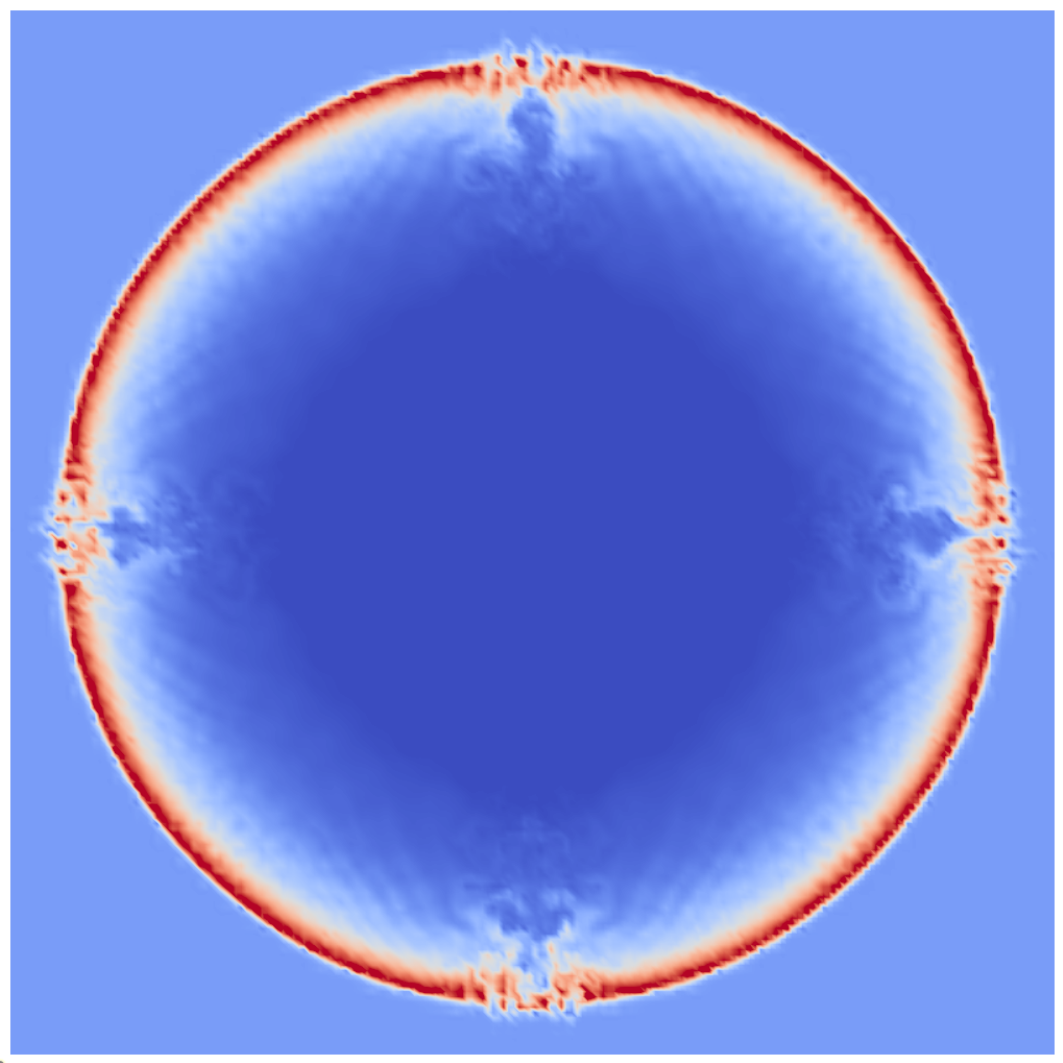}}
    \caption{Density fields of the Sedov blast wave at \(t=1.8\) using \(N\)-th order DG at a resolution of 512 DOFs per direction. No conventional shock-capturing technique is employed.}
    \label{p11}
\end{figure}
The Sedov blast wave represents an extremely challenging test: a strong cylindrical shock propagates outward from a point-like energy deposition, creating a pressure ratio of \(10^6\) between the initial hot spot and the surrounding cold gas. It is widely used to assess the robustness of compressible flow solvers under near-vacuum conditions~\cite{Sedov}. The heat is initially concentrated in a small region of radius \(r_0 = 0.1\) at the center of the domain \([-1.5, 1.5]^2\). Denoting \(r = \sqrt{x^2+y^2}\), the initial conditions are
\begin{equation}
(\rho, u, v, p) =
\begin{cases}
(1,\, 0,\, 0,\, (\gamma-1)E_0 / (\pi r_0^2)), & r \le r_0,\\
(1,\, 0,\, 0,\, 10^{-5}), & \text{otherwise},
\end{cases}
\end{equation}
with \(E_0 = 1.0\) and \(\gamma = 1.4\).

Figure~\ref{p11} displays the density distribution at \(t = 1.8\). It is important to emphasise that no conventional shock-capturing technique is used: the computation relies solely on the entropy-stable subcell limiter and the positivity-preserving procedure. Despite the extreme pressure ratio and the severe under-resolution of the initial hot spot, the scheme remains stable and delivers reliable simulations. Oscillations are visible only along the \(x\) and \(y\) axes immediately behind the shock, where the Cartesian mesh most poorly resolves the cylindrical front. Even when a comparatively large threshold \(\beta=0.05\) is adopted, the limiter does not over-smooth the solution; instead, the additional dissipation is confined to the shock-occupied cells and suppresses post-shock oscillations more effectively.

\begin{figure}[h]
    \centering
    \subfigure{\includegraphics[width=0.5\linewidth]{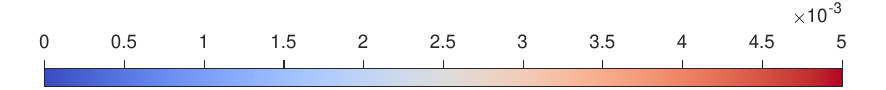}}\\
    \setcounter{subfigure}{0}
    \subfigure[\(N=3,\,\beta=0.005\)]{\includegraphics[width=0.24\linewidth]{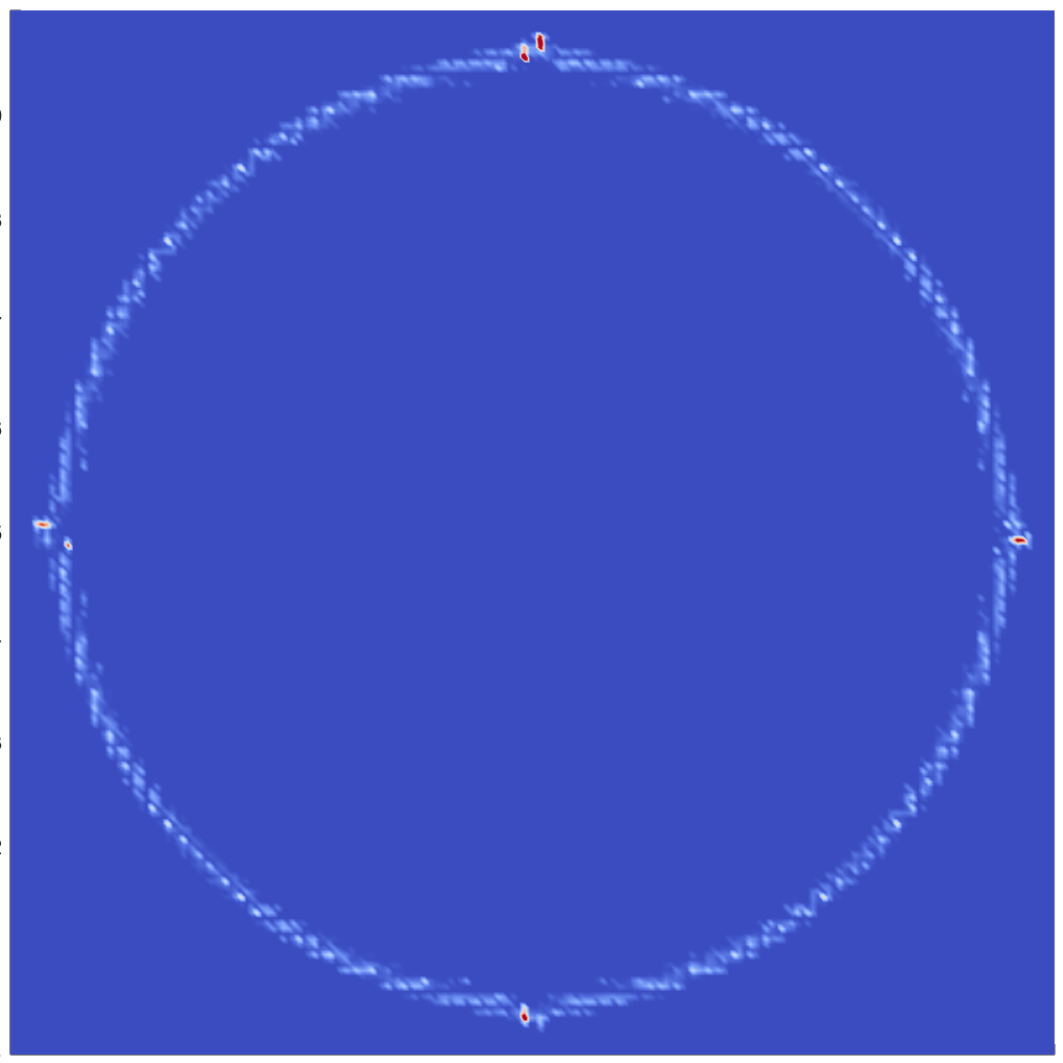}}
    \subfigure[\(N=3,\,\beta=0.05\)]{\includegraphics[width=0.24\linewidth]{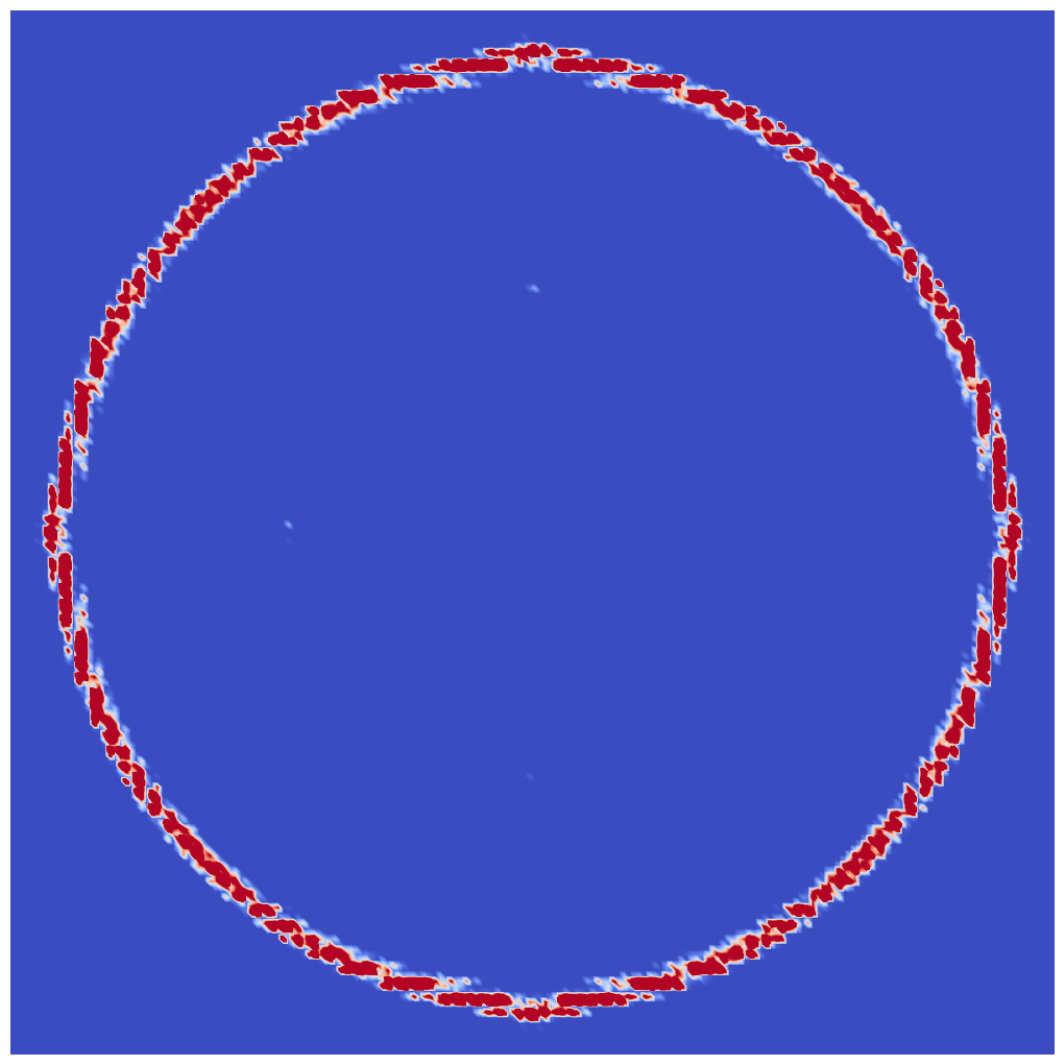}}
    \subfigure[\(N=7,\,\beta=0.005\)]{\includegraphics[width=0.24\linewidth]{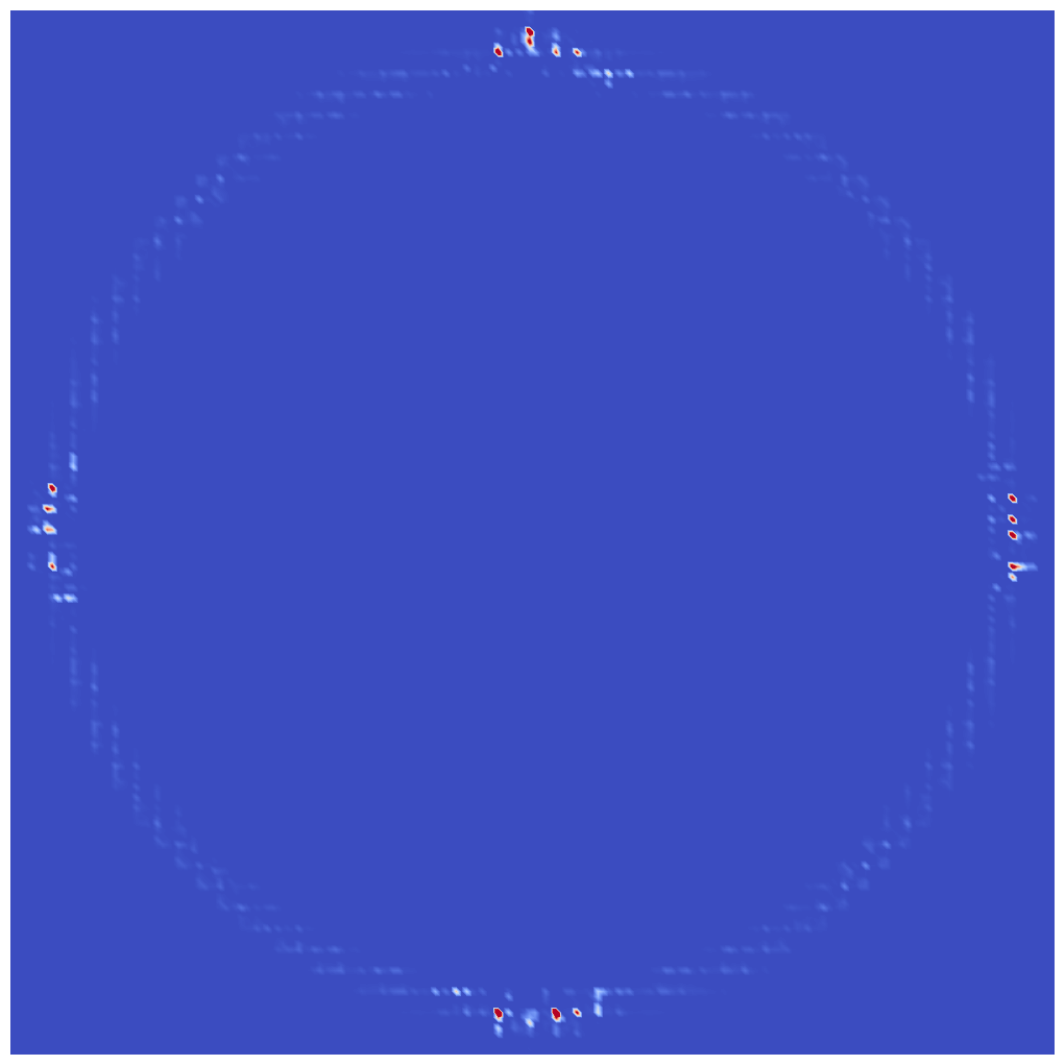}}
    \subfigure[\(N=7,\,\beta=0.05\)]{\includegraphics[width=0.24\linewidth]{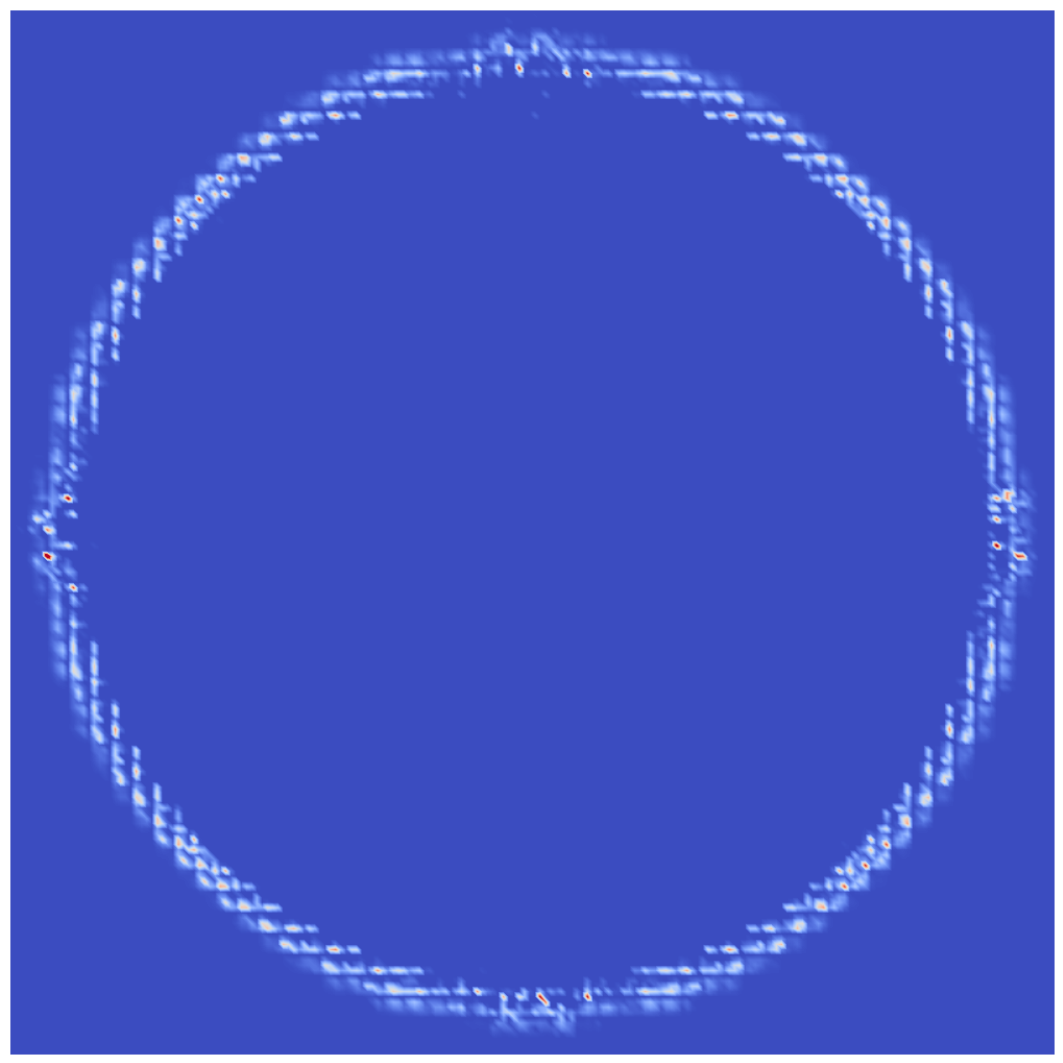}}
    \caption{Cumulative sum of the smoothing coefficient \(\sum_t\widetilde{\alpha}\) for the Sedov blast wave during \(t\in[1.7,1.8]\), using \(N\)-th order DG at a resolution of 512 DOFs per direction.}
    \label{p12}
\end{figure}

Figure~\ref{p12} further corroborates the robustness of the approach by reporting the cumulative sum of the positivity-preserving smoothing coefficients \(\sum_t\widetilde{\alpha}\) during the time interval \([1.7,1.8]\), when the shock sweeps across the domain. The limiter is active almost exclusively in the shock-swept annular region. Remarkably, the central low-density rarefaction zone requires virtually no positivity enforcement. This directly illustrates the ability of the limiter to drastically reduce the difficulty of maintaining positivity in under-resolved regions, since the near-vacuum core, which presents the most severe risk of negative pressure, is handled without triggering the positivity-preserving limiting. Methods that apply conventional limiters or artificial viscosity typically need to intervene in this zone. The present strategy therefore achieves superior robustness at a lower overall level of dissipation.

Taken together with the blast wave results, the sensitivity of the scheme to \(\beta\) remains limited across these two problems with enormously disparate scales. An appropriately chosen \(\beta\) can be safely increased to suppress oscillations at strong shocks without degrading the accuracy of smooth features.

\subsubsection{Astrophysical Jet}

The astrophysical jet problem involves a Mach number \(Ma \approx 2000\) and an extremely high-velocity inflow that generates intense internal shocks and contact discontinuities, making it an ideal stress test for the robustness of numerical schemes~\cite{Ajet}. The domain is \([-0.5,0.5]^2\), filled with an initially quiescent gas (\(\gamma = 5/3\)). An inflow boundary condition is imposed on the left side of the domain as
\begin{equation}
    (\rho, u, v, p) =
\begin{cases}
(5,\, 800,\, 0,\, 0.4127), & x = -0.5,\; y \in [-0.05,0.05],\\
(0.5,\, 0,\, 0,\, 0.4127), & \text{otherwise}.
\end{cases}
\end{equation}

\begin{figure}[h]
    \centering
    \subfigure{\includegraphics[width=0.5\linewidth]{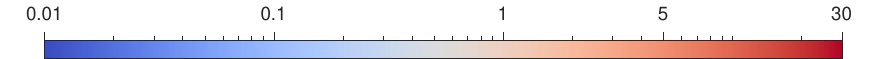}}\\
    \setcounter{subfigure}{0}
    \subfigure[\(K\)=128, \(\beta\)=0.005]{\includegraphics[width=0.24\linewidth]{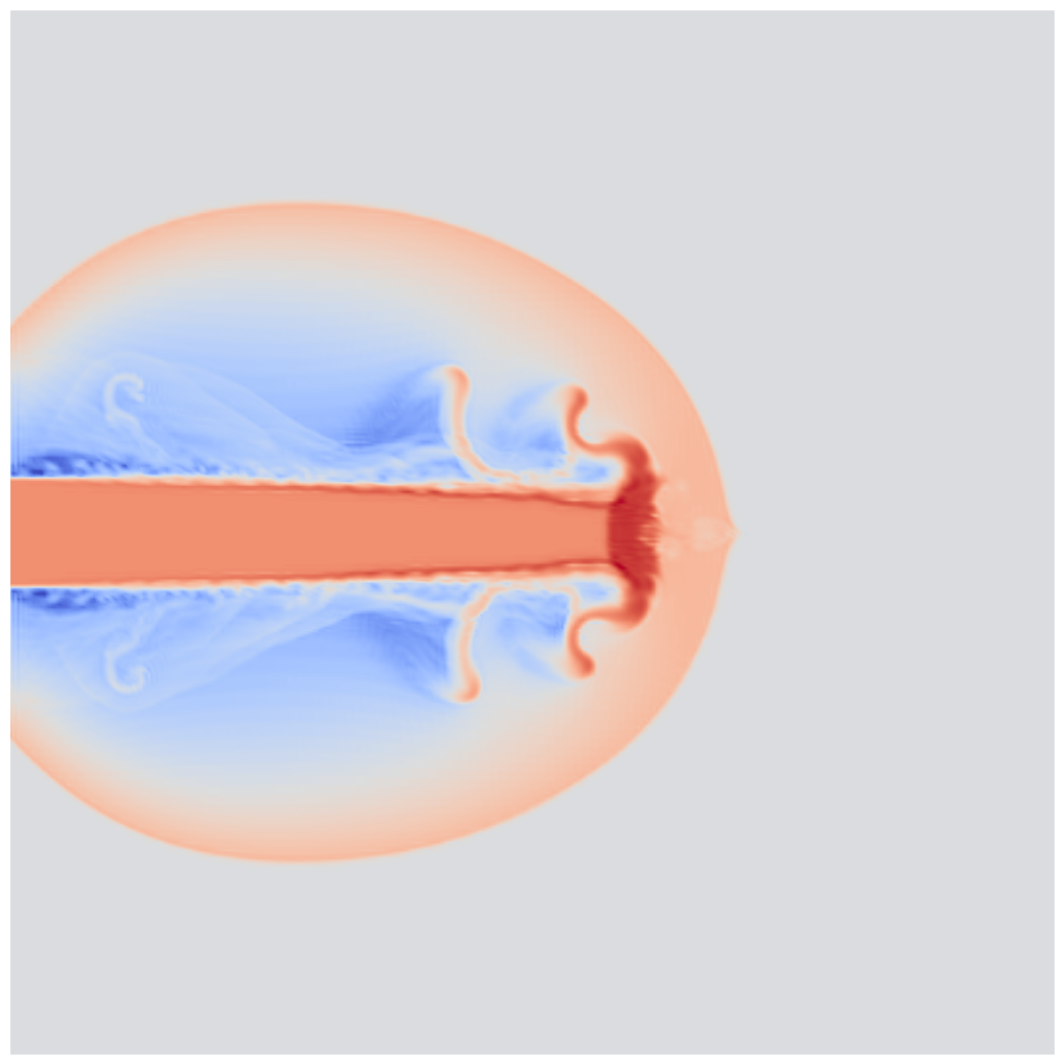}}
    \subfigure[\(K\)=128, \(\beta\)=0.05]{\includegraphics[width=0.24\linewidth]{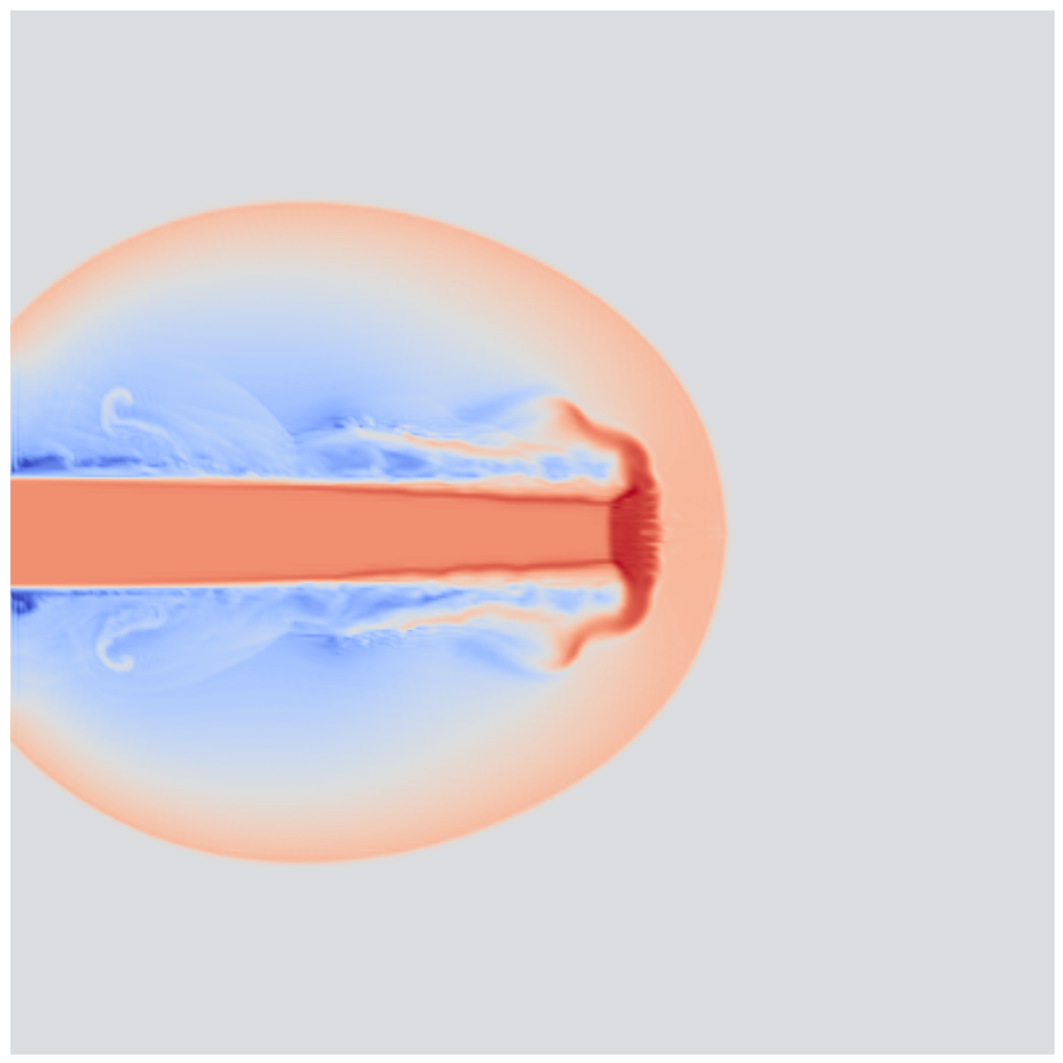}}
    \subfigure[\(K\)=256, \(\beta\)=0.005]{\includegraphics[width=0.24\linewidth]{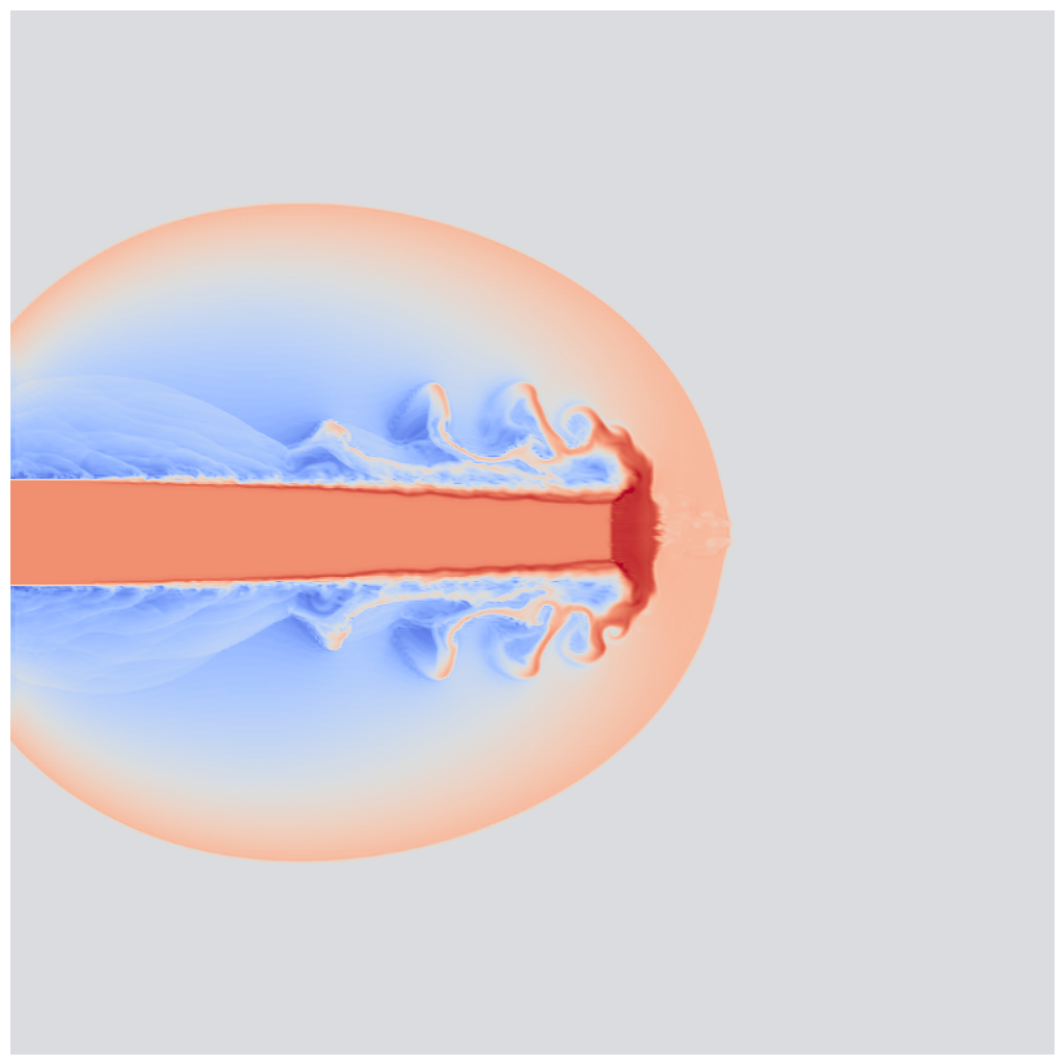}}
    \subfigure[\(K\)=256, \(\beta\)=0.05]{\includegraphics[width=0.24\linewidth]{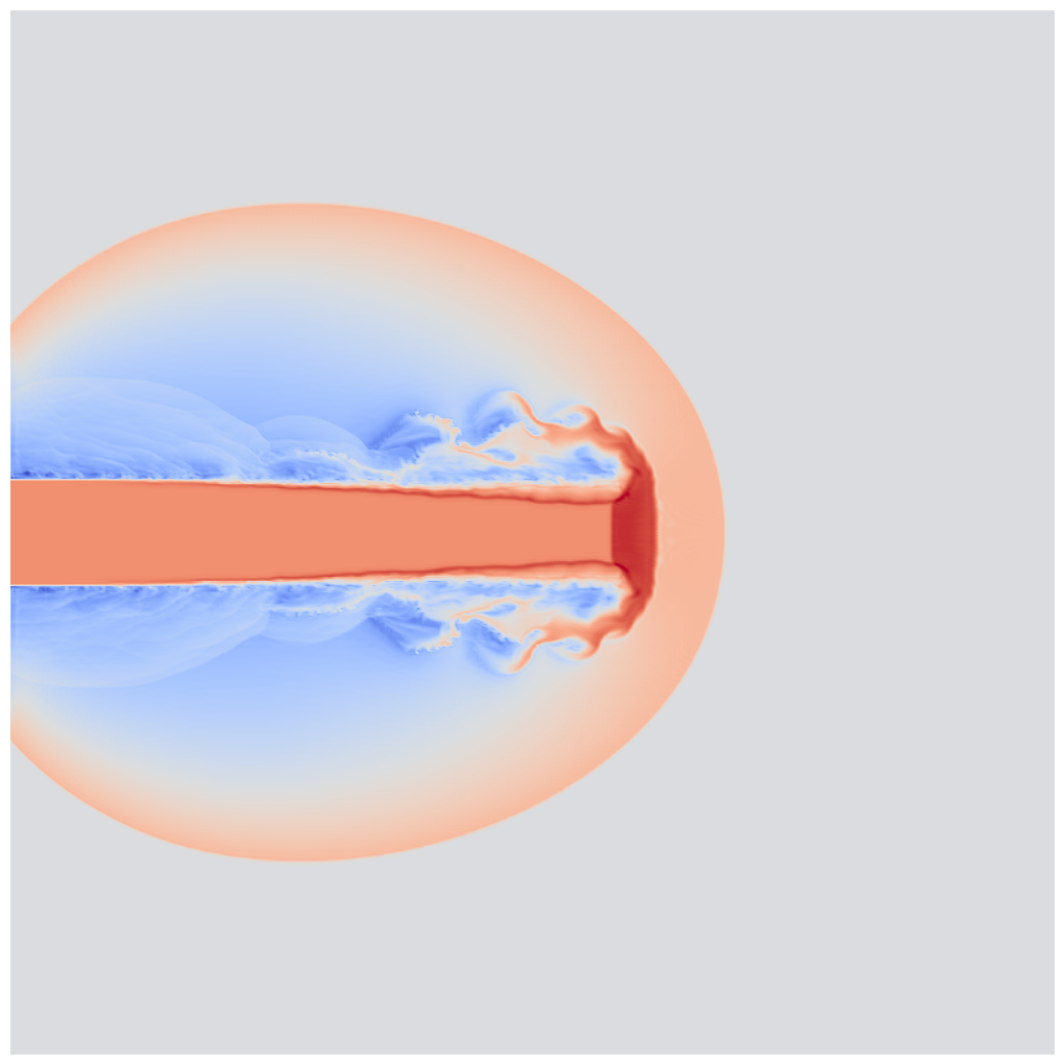}}
    \caption{Density fields of the astrophysical jet computed with \(N=3\) DG and \(K\) elements per direction. No conventional shock-capturing technique is employed.}
    \label{p13}
\end{figure}

Figure~\ref{p13} shows the density field obtained with \(N=3\) DG at two resolutions and two values of \(\beta\), using a logarithmic colour scale in the range \([0.01,30]\). Again, no conventional shock-capturing technique is used. The computation remains stable throughout the evolution, and the bow shock, the terminal Mach disk, and the internal jet structure are all cleanly captured. The parameter \(\beta\) governs the amount of high-frequency oscillation suppression behind the bow shock: a moderate increase to \(\beta = 0.05\) visibly reduces post-shock noise without over-smoothing the sharp flow features. The flow morphology is otherwise insensitive to \(\beta\), confirming that the limiter selectively acts in under-resolved regions without contaminating the resolved structures. Combined with the Sedov results, these outcomes demonstrate that the entropy-stable positivity-preserving subcell strategy enables high-fidelity simulations even at ultra-high Mach numbers, using only local, parameter-insensitive, subcell-scale dissipation and without any recourse to classical shock-capturing techniques.

\section{Conclusion}
This work has developed a subcell-refined entropy-residual-driven limiting strategy for high-order DG methods. The limiter operates entirely within each spectral element through nearest-neighbor pairwise operations, each of which conserves the element-wise sum and strictly dissipates entropy. A finite sequence of such operations, with coefficients driven by the entropy residual, guarantees total entropy reduction at each time step for any strictly convex entropy. The dissipation coefficients admit a closed-form solution derived from an \(L^2\)-norm minimization, providing the minimal amount of dissipation required to restore the discrete entropy inequality.

A generalized subcell limiting framework has been established that unifies the proposed strategy with existing entropy-stable approaches. Within this framework, split-form DG and RD-based entropy correction are recovered as fully coupled, dense instances of the generalized subcell limiter, while the proposed method restricts the coupling to nearest neighbors, providing a diagonalized, locally stable approximation. Numerical experiments confirm that the proposed scheme constitutes an effective approximation of entropy-residual-driven split-form DG and, owing to its subcell-level entropy dissipation distribution, delivers slightly smaller local oscillations near shocks and discontinuities.

For the Euler equations, a physically consistent jump operator has been constructed that separately characterizes thermal and shear entropy production and inherently preserves velocity and pressure equilibrium. The limiting strategy extends to the Euler system by simply replacing the scalar jump with this operator, retaining complete mathematical isomorphism with the scalar case. A subcell refinement of the classical Zhang-Shu positivity limiter has been developed, using the same pairwise operation to enforce pointwise positivity of density and pressure in a finite number of steps while guaranteeing that the entropy does not increase.

Extensive numerical experiments confirm that the proposed strategy maintains the optimal convergence order of DG in smooth regions, strictly enforces entropy dissipation, and sharply resolves discontinuities. The scheme remains robust for extreme problems, including Sedov blast waves and astrophysical jets, without any conventional shock-capturing technique. Notably, the positivity limiter activates only very sparingly across all test cases, and its threshold \(\beta\) exhibits remarkable insensitivity over a wide range of values. This minimal and parameter-robust activation provides additional evidence for the effectiveness of the entropy-limiting framework in suppressing non-physical oscillations before they compromise positivity. The overall method thus offers a fully adaptive, low-dissipation, and computationally efficient stabilization that is fully compatible with the high-order accuracy of DG.

\section*{Declarations}

\subsection*{Competing Interests}
The authors have no conflicts of interest to declare that are relevant to the content of this article.

\subsection*{Ethics approval} Not applicable. This work is purely computational and does not involve human participants, animals, or their data.

\subsection*{Data and Code Availability}
The simplified code and data required to reproduce the numerical results reported in this paper are available from the corresponding author upon reasonable request. If the manuscript is considered potentially acceptable, we can prepare a simplified open‑source implementation at the revision stage to ensure complete reproducibility.

\subsection*{Authors' Contributions}
All authors contributed to the study conception and design. The specific contributions according to the CRediT taxonomy are as follows:

\begin{itemize}
    \item \textbf{Geng Liang}: Conceptualization, Investigation, Methodology, Writing -- original draft.
    \item \textbf{Rui Wang}: Writing -- original draft, Writing -- review \& editing.
    \item \textbf{Junjie Wang}: Validation.
    \item \textbf{Feng Wang}: Project administration.
    \item \textbf{Xinlong Feng}: Supervision.
    \item \textbf{Hui Xu}: Writing -- review \& editing, Supervision, Project administration.
\end{itemize}

\bibliographystyle{spmpsci}
\bibliography{ref}

\end{document}